\begin{document}

\selectlanguage{\english}
\newcounter{definition}
\newcommand{\definition}{\refstepcounter{definition} {\noindent \bf Definition \thedefinition: }}
\renewcommand{\thedefinition}{\arabic{definition}}

\newcounter{axiom}
\newcommand{\axiom}{\refstepcounter{axiom} {\vspace{0.5cm} \noindent \bf Axiom \theaxiom: }}
\renewcommand{\theaxiom}{\arabic{axiom}}

\newcounter{proposition}
\newcommand{\proposition}{\refstepcounter{proposition} {\vspace{0.5cm} \noindent \bf Proposition \theproposition: }}
\renewcommand{\theproposition}{\arabic{proposition}}

\newcounter{theorem}
\newcommand{\theorem}{\refstepcounter{theorem} {\vspace{0.5cm} \noindent \bf Theorem \thetheorem: }}
\renewcommand{\thetheorem}{\arabic{theorem}}

\newcounter{lemma}
\newcommand{\lemma}{\refstepcounter{proposition} {\vspace{0.5cm} \noindent \bf Lemma \thelemma: }}
\renewcommand{\thelemma}{\arabic{lemma}}


\title{Composition, Cooperation, and Coordination of Computational Systems}
\author{Johannes Reich, johannes.reich@sophoscape.de}

\maketitle

\begin{abstract}
In this paper I elaborated on the idea of David Harel and Amir Pnueli to think systems and their interaction from the point of view of their compositional behaviour. The basic idea is to assume a functional relation between input, internal and output state functions as the system-constituting property, allowing to partition the world into a system and a rest. As versatile as the system concept is, I introduce several, quite distinct system models.

The obvious idea to base the composition of systems on the concept of computable functions and their compositional behaviour leads to supersystem formation by composing simple and recursive systems. 

But this approach does not allow to account adequately for systems that interact with many other systems in a stateful and nondeterministic way, which is why I introduce the concept of interactivity and cooperation. 
In order to describe interactive systems satisfactorily, a balance is needed between the representation of their relationship to all the other systems and what happens within the systems. I thus introduce the complementary descriptions of external interactions and internal coordination, both based on a role concept in the sense of a projection of a system onto its interactions. It actually mirrors the internal vs. external distinction initially introduced by the system model and reflects the problem how systems are supposed to cooperate without melting into a common supersystem. 

Beside the interesting distinction between composition, cooperation and coordination that the presented approach allows, it also fits well with other rather well known concepts. First the concept of components and interfaces. Components become systems, intended for (a given) composition and with the notion of an interface we subsume all relevant information necessary for this composition.  

Another tightly related concept is that of decisions and games. I introduce the concept of decisions as an additional internal input alphabet in order to determine nondeterministic interactions and thus fictitiously assume a system function where we actually do not have the knowledge to do so. Thus, the close relationship between protocols and games becomes obvious.

As the core of the architecture concept of an IT system is its structure in terms of the compositional relationships of its interacting components, I finally transfer the gained insights to the field of IT system architecture and introduce the concept of the ''interaction oriented architecture (IOA)'' for interactive systems with its three elements of roles, coordination rules, and decisions. 
\end{abstract}

\tableofcontents


%
\section{Introduction}
%

Civil, mechanical, electrical, or software engineering - they all deal with the design of systems. The term ''system engineering'' was coined in the Bell Laboratories in the 1940s (e.g. \cite{Buede2011_Engineering}). According to Kenneth J. Schlager \cite{Schlager1956}, a main driver of the field was the complexity of the composition behaviour of components. Often composing seemingly satisfactorily working components did not result in satisfactorily working composed systems. 

The focus on systems was transferred to other disciplines like biology (e.g. \cite{Bertalanffy1968_GST}) or sociology (e.g. \cite{Parson1951_SocialSystems}). Especially with the advent of cyber physical systems (e.g. \cite{Lee2008_CPS,Alur2015_Principles}), also the software engineering discipline articulates the importance of a unifying view (e.g. \cite{Sifakis2011_Vision}).

To substantiate our talk about a ''unifying view'' we must provide a system notion where all disciplines agree upon. I propose to ground it on the notion of a system's functionality, a notion that is tightly related to the concepts of computability and determinism of computer science. A system in the sense of this article is determined by its states and the functional relation between its state values in time. Thereby the system function becomes system-constitutive and separates the inner parts from the outside, the rest of the world. Depending on our model of states, time and function, we get different system models. In this article, I will focus mainly on discrete systems with a computable system function. 

This isolation of the inner parts from the outside also implies system interactions, where systems share their externally accessible states such that the output state of one system is the input state of another system. In this article, I approach the description of these interactions with the composition concept. Due to their interactions, systems compose to larger structures. We can thereby classify system interactions by the different classes of composition that occur. I will investigate intensively two different composition classes: hierarchical composition of entire systems to supersystems and the composition of a special class of system parts, which I will name ''roles'', to protocols, creating two very different system relations: hierarchical composition (see Fig. \ref{fig_system_composition_general}) and cooperation (see Fig. \ref{fig_network_of_relations}). 

\begin{figure}[ht]
\begin{center}
\includegraphics[width=7cm]{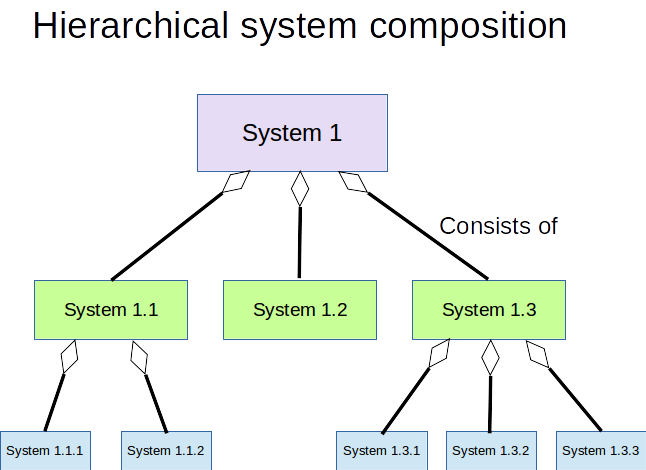}
\end{center}
\caption[]
{\label{fig_system_composition_general}An example of a hierarchically composed system, namely system 1. It consists of 3 subsystems which by themselves consist of a couple of further sub-subsystems.}
\end{figure}

Hierarchical composition means that, by the interaction of systems, super systems are created and the interacting systems become subsystems. As a trivial consequence, there is no interaction between a supersystem and its subsystems.
As the system notion is based on the function notion, the concept of system composition can be traced back to the concept of composition of functions which is in our considered discrete case at the heart of computability. Especially for software components, the borrowed distinction between non-recursive and recursive system composition becomes important. As is illustrated in Fig. \ref{fig_system_composition_general}, system composition leads to hierarchies of systems which are determined by the 'consists of'-relation between supersystems and their subsystems.

\begin{figure}[ht]
\begin{center}
\includegraphics[width=7cm]{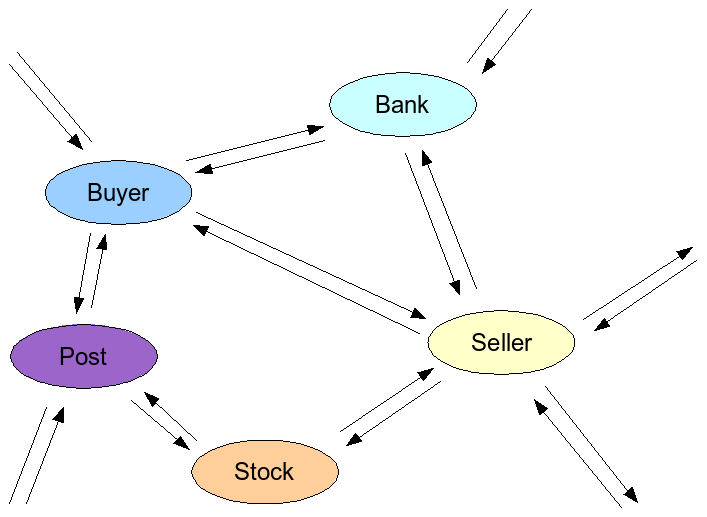}
\end{center}
\caption[]{\label{fig_network_of_relations}A cutout of an open business network.}
\end{figure}

Quite contrasting, system cooperation means that systems interact sensibly somehow ''loosely'' without such supersystem creation. The basic reason to evade a construction of supersystems lies in a nondeterministic interaction. This is typically the case in interaction networks where all the actors interact in the same manner. As an example, Fig. \ref{fig_network_of_relations} sketches a cutout of a network of business relations between a buyer, a seller, its stock, a post and a bank. In these networks none of the participants is in total control of all the other participants: the participants' interactions don't, in general, determine the participants' actions. These networks are usually open in the sense that we cannot describe them completely. 

\begin{figure}[ht]
\begin{center}
\includegraphics[width=5cm]{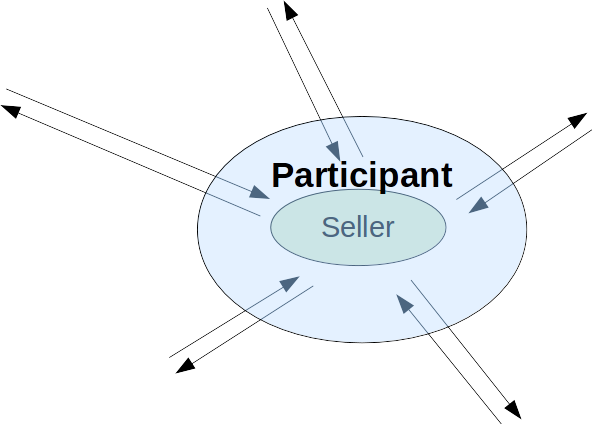}\hspace{2cm}
\includegraphics[width=5cm]{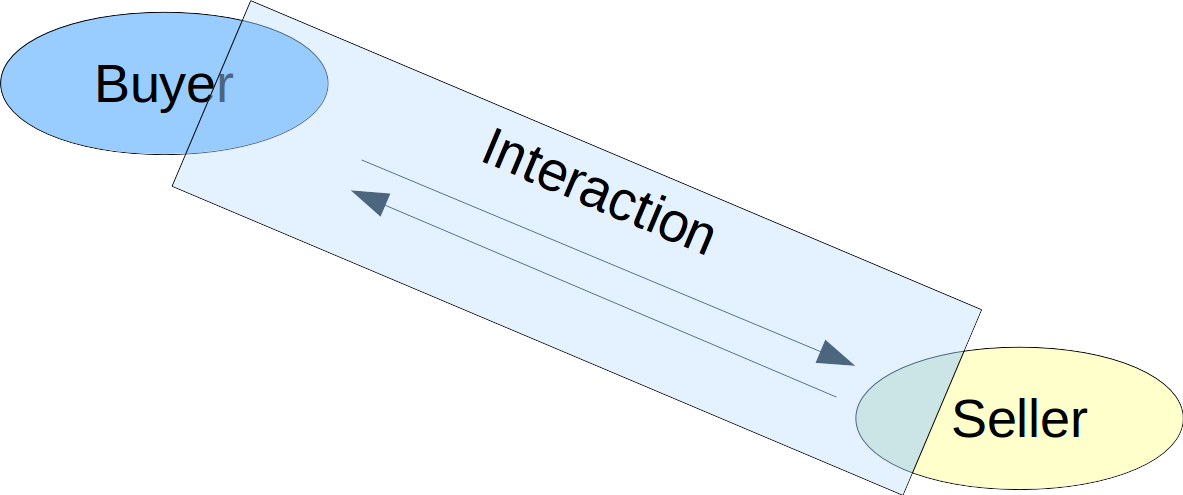}
\end{center}
\caption[]
{\label{fig_two_perspectives}The two perspectives describing a network of interactions: one can focus on the participant or the interaction. }
\end{figure}

I think, to describe these interaction networks satisfactorily, the concepts of functionality and  computability are not sufficient. Instead they must be complemented by new tools and a new language. In fact, the search for an appropriate approach is in full swing, as can be seen by the variety of different attempts in the literature, some of which I discuss in section \ref{s_related_work}. 
I base my approach on two perspectives which I delineate in Fig. \ref{fig_two_perspectives}. Focusing too much on the nodes implies putting the system function to the fore and neglecting the interaction, the edges of the network. The central idea is that an adequate model must describe both, the nodes, i.e. the participants, and the edges, i.e. the interactions of such an interaction network in a mutually compatible way. This leads to the model of interactive systems interacting through protocols. An interactive system can be thought of as being a system, coordinating its different roles it plays in its different protocol interactions. It seems to me that coordination could become an equally important area of informatics as computability.


The structure of the article is as follows. I first introduce the basic concepts of compositionality and computability. 
In section \ref{s_simple_systems} I investigate the composition of what I call ''simple'' systems. They are simple in the sense that their system function always operate on their total input to create the complete output. 
In section \ref{s_recursive_systems} I extend the system model to operate only on parts of its input to realise feedback-based recursive computations. Both sections together thereby describe the compositionality of systems that realise computational functionality. 

In section \ref{s_interactive_systems} I introduce the notion of interactivity to describe nondeterministically cooperating systems together with their interactions in a mutually compatible way. 

In section \ref{s_partitioning} I present two different equivalence class constructions. One for the deterministic and one for the nondeterministic case, which help to tackle the state explosion problem of more complex problems. 

I present related system models of other authors in section \ref{s_related_work}. Here, I dwell on the reactive systems of David Harel and Amir Pnueli, the system models of Manfred Broy and Rajeev Alur as well as the BIP component framework of Joseph Sifakis et al. 

In the final section \ref{s_it_system_architecture} I transfer the gained insights to the field of IT system architecture and introduce the concept of the ''interaction oriented architecture (IOA)'' for interactive systems with its three elements of roles, coordination rules, and decisions. 

\subsection{Preliminaries}
Throughout this article, elements and functions are denoted by small letters, sets and relations by large letters and mathematical structures by large calligraphic letters. 
The components of a structure may be denoted by the structure's symbol or, in case of enumerated structures, index as subscript. The subscript is dropped if it is clear to which structure a component belongs.

Character sets and sets of state values are assumed to be enumerable if not stated otherwise. For any character set or alphabet $A$, $A^\epsilon := A\cup\{\epsilon\}$ where $\epsilon$ is the empty character. For state value sets $Q$, $Q^\epsilon := Q\cup\{\epsilon\}$ where $\epsilon$ is the undefined value.   If either a character or state value set $A = A_1 \times \dots \times A_n$ is a Cartesian product then $A^\epsilon = A_1^\epsilon \times \dots \times A_n^\epsilon$. 

Elements of character sets or state value sets can be vectors. There will be no notational distinction between single elements and vectors. The change of a state vector $p=(p_1,  \dots, p_n)$ in a position $k$ from $a$ to $b$ is written as $p \left[\frac{b}{a}, k\right]$. 
An n-dimensional vector of characters $a\in C = C_1\times\dots\times C_n$ with the $k$-th component $v$ and the rest $\epsilon$ is written as $\epsilon_C[v,k]=(\epsilon_1, \dots, \epsilon_{k-1}, v, \epsilon_{k+1}, \dots, \epsilon_n)$. The subscript is dropped if it is clear from the context.

The power set of a set $A$ is written as $\wp(A)$.

%
\section{Compositionality}
%
Mathematically, composition\footnote{It was mainly Arend Rensink in his talk ''Compositionality huh?'' at the Dagstuhl Workshop ``Divide and Conquer: the Quest for Compositional Design and Analysis'' in December 2012, who inspired me to these thoughts and to distinguish between composition of systems and the property of being compositional for the properties of the systems.} means making one out of two or more mathematical objects with the help of a mathematical mapping. For example, we can take two functions $f, g: \mathbb{N} \rightarrow \mathbb{N}$, which map the natural numbers onto themselves, and, with the help of the concatenation operator $\circ$, we can define a function $h = f \circ g$ by $h(n) = f(g(n))$.  

If we apply this notion to interacting systems, which we denote by ${\mathcal S}_1, \dots, {\mathcal S}_n$, then, regardless of the concrete representation of these systems, we can define their composition into a supersystem by means of a corresponding composition operator $comp_{\mathcal S}$ as a partial function\footnote{''Partial'' means that this function is not defined for all possible systems, i.e. not every system is suitable for every composition} for systems:

\begin{equation} \label{eq_system_composition}
{\mathcal S}_{tot} = comp_{\mathcal S}({\mathcal S}_1, \dots, {\mathcal S}_n).
\end{equation}

The first benefit we can derive from this definition is that we can now classify the properties of the supersystem into those that arise comparatively simply from the same properties of the subsystems and those that arise from other properties of the subsystems: 

\begin{definition} \label{def_compositionality} 
A property $\alpha:S \rightarrow A$ of a system ${\mathcal S}\in S$ is a partial function which attributes values of some attribute set $A$ to a system ${\mathcal S}\in S$. I call a property $\alpha$ of a composed system ${\mathcal S}_{tot}$ ''{\it (homogeneous) compositional}'' with respect to the composition $comp_{\mathcal S}$, if there exists an operator $comp_\alpha$ such that $\alpha(S_{tot})$ results as $comp_\alpha(\alpha({\mathcal S}_1), \dots, \alpha({\mathcal S}_n))$, thus, it holds:
\begin{equation}
  \alpha(comp_{\mathcal S}({\mathcal S}_1, \dots, {\mathcal S}_n)) = comp_\alpha(\alpha({\mathcal S}_1), \dots, \alpha({\mathcal S}_n))
\end{equation}
Otherwise I call this property ''{\it emergent}''.
\end{definition}

In other words, compositional properties of the composed entity result exclusively from the respective properties of the parts. If the entities are mathematical structures, then $\alpha$ is a homomorphism. Emergent properties may result also from other properties $\alpha_i$ of the parts \cite{Reich2001} if $\alpha\left(comp_{\mathcal S}({\cal S}_1, \dots, {\cal S}_n)\right) = comp_\alpha\left(\alpha_1({\cal S}_1), \dots, \alpha_n({\cal S}_n)\right))$.

A simple example of a homogeneous compositional property of physical systems is their mass: The mass of a total system is equal to the sum of the masses of the individual systems. A simple example of an emergent property of a physical system is the resonance frequency of a resonant circuit consisting of a coil and a capacitor. The resonance frequency of the oscillating circuit does not result from the resonance frequencies of the coil and capacitor, because these do not have such a resonance frequency, but from their inductance and capacity. 

\subsection{Computable functionality \label{ss_computable_functionality}}
One of the most important properties of systems in computer science is the computability of their system function. Based on considerations of Kurt Gödel, Stephen Kleene was able to show in his groundbreaking work \cite{Kleene1936} that this property is indeed compositional by construction. Starting from given elementary operations (successor, constant and identity), all further computable operations on natural numbers can be constructed by the following 3 rules (let $F_n$ be the set of all functions on the natural numbers with arity $n$):

\begin{enumerate}  
\item {\bf\it Comp:} Be $g_1, \dots, g_n \in F_m$ computable and $h\in F_n$ computable, then $f = h(g_1, \dots, g_n)$ is computable.\label{computation_1st_rule}
\item {\bf\it PrimRec:} Are $g\in F_n$ and $h\in F_{n+2}$ both computable and $a\in \mathbb{N}^n$, $b\in \mathbb{N}$ then also the function $f\in F_{n+1}$ given by $f(a, 0) = g(a)$ and $f(a, b+1) = h(a, b, f(a,b))$ is computable  \label{computation_2nd_rule}.
\item {\bf\it $\mu$-Rec:} Be $g\in F_{n+1}$ computable and $\forall a\exists b$ such that $g(a,b) = 0$ and the $\mu$-Operation $\mu_b[g(a,b) = 0]$ is defined as the smallest $b$ with $g(a,b) = 0$. Then $f(a) = \mu_b[g(a,b) = 0]$ is computable.\label{computation_3rd_rule}
\end{enumerate}

{\it Comp} states that given computable functions, their successive as well as their parallel application is in turn a computable function. {\it PrimRec} defines simple recursion, where the function value is computed by applying a given computable function successively a predefined number of times. In imperative programming languages it can be found as FOR loop construct as well as operation construct with one step. The third rule {\it $\mu$-Rec} states that a recursive calculation can also exist as an iterative solution of a computable problem, in the case of the natural numbers as a determination of roots, for which one cannot say in advance how many steps one will need to reach the first solution. Indeed, one may not even know whether such a solution exists at all. In imperative programming languages, this corresponds to the WHILE loop construct.

\subsection{The notion of interface and component}
The second benefit we can derive from our definition of composition is a clear definition of the notion of interface and component. 
We are now in the need for a term that sums up everything a composition operator needs to know about a system. \cite{Tripakis2016_Compositionality,AlfaroHenzinger2001_InterfaceTheories} use the term {\it interface} for this, a suggestion I am happy to endorse. This makes the question of what an interface actually is decidable. 

The recipe is that the one who claims that a mathematical object is an interface must first provide a system model, secondly show what its composition operator looks like, so that thirdly it becomes comprehensible which parts of the system model belong to the interface. 

According to Gerard J. Holzmann \cite{Holzmann2018_Software_Components}, the term software component was coined by Doug McIlroy \cite{McIlroy1968_Software_Components} at the 1968 NATO Conference on Software Engineering in Garmisch, Germany. I propose understand a component as a system that is intended for a particular composition and therefore has correspondingly well-defined interfaces that, by definition, express the intended composition.

\subsection{Substitutability and compatibility} \label{ss_substitutability}
The third benefit of our composition concept comes from the possibility of defining substitutability and to distinguish downward from upward compatibility. 

For this purpose we first consider the two systems ${\mathcal A}$ and ${\mathcal B}$ which can be composed by a composition operator $C$ into the supersystem ${\mathcal S} = C({\mathcal A}, {\mathcal B})$. The system ${\mathcal A}$ can certainly be substituted by another system ${\mathcal A}'$ in this composition if ${\mathcal S} = C({\mathcal A}', {\mathcal B})$ also holds. 
Please note that substitutability does not imply any further relation between the involved systems and can therefore relate seemingly unrelated things, such as raspberry juice and cod liver oil, both of which are suitable fuel for a Fliwatüt \cite{Lornsen1967}.

Now, let us assume that ${\mathcal A}'$ additionally extends ${\mathcal A}$ in a sense to be concretised, notated as ${\mathcal A} \sqsubseteq {\mathcal A}'$ and another operator $C'$ composes ${\mathcal A}'$ and again ${\mathcal B}$ into the supersystem ${\mathcal S}' = C'({\mathcal A}', {\mathcal B})$, so that the property of extension is preserved, thus also ${\mathcal S} \sqsubseteq {\mathcal S}'$ holds. Then we speak of ''{\it compatibility}'' under the following conditions.

If $C'$ applied to the old system ${\mathcal A}$ still produces the old system ${\mathcal S} = C'({\mathcal A}, {\mathcal B})$, then ${\mathcal A}$ behaves ''{\it upward compatible}'' in the context of the composition $C'$. And if our original substitution relation  ${\mathcal S} = C({\mathcal A}', {\mathcal B})$ now holds, that is, ${\mathcal A}$ can be replaced by ${\mathcal A}'$ in context $C$, we say that ${\mathcal A}'$ behaves ''{\it downward compatible}'' in context $C$. 

A simple example is a red light-emitting diode (LED) ${\mathcal A}$ and a socket ${\mathcal B}$ composing to a red lamp ${\mathcal S}=C({\mathcal A}, {\mathcal B})$. A new LED $A'$ that adds to the old one the ability to glow green when the current is reversed is called downward compatible with ${\mathcal A}$ if it fits into the same socket and makes the lamp behave as with the original LED.

Actually, of course, the bi-colour LED is intended for a new lamp circuit that supports this bi-colour. Since the socket remained unchanged, inserting the old LED into the new lamp results in the old, single-colour red lamp despite the new circuit. The old LED behaves upwardly compatible in this context.  

%
\section{Simple systems} \label{s_simple_systems}
%
\subsection{System definition \label{s_system_definition}}
There seems to be a consensus (e.g. \cite{HarelPnueli1985_Reactive,Broy2010_Logical,Sifakis2011_Vision,ISO_15288-2014,IEC_60050}) that a system separates an inside from the rest of the world, the environment. 
So, to gain a well defined system model we have to answer the two questions: What gets separated? And: what separates?

What gets separated? It's time dependent properties each taking a single out of a set of possible values at a given time. These time dependent functions are commonly called ''state variables'' or ''signal'' and the values are called ''states'' \cite{IEC_60050}. However, as often the term ''state'' is also used to denote state functions, I prefer to speak of ''state function'' and ''state value''. 

A state function $s$ in this sense is a function from the time domain $T$ to some set of state values $A$. Some of the state functions may not be directly accessible from the outside, these are the system's inner state functions. 

What separates? The key idea is that a system comes into existence because of a unique relation between the state values at given time values: the system function. It separates the state functions of a system from the state functions of the rest of the world. It also gives the system's state functions their input-, output-, or inner character. Such a relation logically implies causality and a time scale.  

Depending on the class of state function, system function or time, we can identify different classes of systems. Important classes of functions are computable functions, finite functions and analytic functions. Important classes of times are discrete and continuous times\footnote{Referring to quantum physics we could also distinguish between classical and quantum states.}. 

For the description of physical systems, we use the known physical quantities like current, voltage, pressure, mass, etc. as value range for the state functions. However, the idea of informatics is to only focus on the distinguishable \cite{Shannon1948, Shannon1949_Noise}, setting aside the original physical character of the state functions. Consequently, we have to introduce an additional character set or ''alphabet'' to our engineering language to name these distinguishable state values. Thereby, we create the concepts of communication and information as information being the something that becomes transported by communication: what was distinguishable in one system and becomes transported is now distinguishable in another system. Information becomes measurable by the unit which could be only just securely distinguished: the bit. The different names of the characters of such an alphabet carry no semantics beside that they denote the same distinguishable (physical) state value in any physical system of our choice in our engineering language. 

This independence of the concept of information from its concrete physical realisation gives purely informational systems a virtual character in the sense that they could be realised by infinitely many physical systems. With ''purely informational systems'' I mean systems that interact only through communication with the rest of the world, that is by sending and receiving information. In contrast so called ''cyber physical systems'' \cite{Alur2015_Principles} also affect the rest of the world through other actors, possibly irreversibly and thereby loose this virtual character. In short, we can describe the essence of us talking to someone in the framework of information transport (and processing) but not the essence of hitting someone's nose with our fist. 

In this article, I will focus on three different discrete informational systems with increasingly complex I/O-behaviour: simple systems, multi-input-systems with recursion and interactive systems. 

\begin{definition} \label{def_system_einfach}
A simple (discrete) system ${\mathcal{S}}$ is defined by ${\mathcal{S}} = (T, succ, Q, I, O, q,$ $ in, out, f)$. 

The {\it system time} is the structure ${\cal T} = (T, succ)$ with $T$ the enumerable set of all time values and $succ:T\rightarrow T$ with $t' = succ(t)$ is the successor function. For an infinite number of time values, $T$ can be identified with the set of natural numbers $\mathbb{N}$. For a finite number of time values, $succ(t_{max})$ is undefined. $t'$ is also called the successor time of $t$ with respect to $f$. I also write $t+n = succ^n(t)$.

$Q$, $I$, and $O$ are alphabets and at least $Q$ is non-empty. The state functions $(q, in, out):T\rightarrow Q \times I \times O$ form a simple system for time step $(t, t'=succ(t))$ if they are mapped by the partial function $f:Q \times I \rightarrow Q \times O$ with $f =(f^{int}, f^{ext})$ such that\footnote{Please note that input and output state functions relate to consecutive points in time. In so called synchronous reactive systems \cite{BenvenisteCaspiEdwardsHalbwachsLeGuernicDeSimone2003,Lunze2006_Ereignisdiskrete}, the input- and output state functions relate to the same points in time. See also section \ref{s_synchronous_reactive_systems}} 
\[{q(t') \choose out(t')} = {f^{int}(q(t), i(t)) \choose f^{ext}(q(t), i(t))}\,.\]

If $Q$, $I$ and $O$ are finite then the system is said to be finite. If $|Q| == 1$ I call the system ''stateless''.
\end{definition}

As the elements of the alphabets can be chosen arbitrarily, as long as they are distinguishable, it is important to realise that we actually describe systems only up to an isomorphism. 
 
\begin{definition}
Be ${\mathcal S}_1$ and ${\mathcal S}_2$ two simple systems. They are isomorph to each other, ${\mathcal S}_1 \cong {\mathcal S}_2$, if there exists a bijective function $\varphi$ with $\varphi:Q_1 \cup I_1 \cup O_1 \cup T_1 \rightarrow Q_2 \cup I_2 \cup O_2 \cup T_2$ such that for all $(p, i)\in Q_1\times I_1$ where $f_1(p, i)$ is defined, it holds $\varphi(f_1(p, i))=f_2(\varphi(p),\varphi(i))$ and for all $t\in T_1$ with defined $succ_1(t)$, it holds $\varphi(succ_1(t)) = succ_2(\varphi(t))$.
\end{definition}

We can use the simple stepwise behaviour to eliminate the explicit usage of the time parameter from the behaviour description of the system with an I/O-transition system (e.g. \cite{Reich2010,Alur2015_Principles}). 

\begin{definition} \label{def_IO-transition_system}
The I/O-transition system (I/O-TS) ${\mathcal A} = (I, O, Q, (q_0, o_0), \Delta)_{\mathcal A}$ with $\Delta_{\mathcal A} \subseteq I \times O \times Q \times Q$ describes the behaviour of the simple system ${\mathcal S}$ from time $t=t_0$, if $Q_{\mathcal S} \subseteq Q_{\mathcal A}$, $I_{\mathcal S} \subseteq I_{\mathcal A}$, $O_{\mathcal S} \subseteq O_{\mathcal A}$, $q_{\mathcal S}(t_0) = q_0$ and $out_{\mathcal S}(t_0) = o_0$, as well as $\Delta_{\mathcal A}$ is the smallest possible set, such that for all times $t \geq t_0$ with $t'=succ(t)$ for all possible input state functions $in_{\mathcal S}$ holds: $(in_{\mathcal S}(t), out_{\mathcal S}(t'), q_{\mathcal S}(t), q_{\mathcal S}(t')) \in \Delta_{\mathcal A}$. 

Instead of $(i,o,p,q) \in \Delta$ I also write $p \stackrel{i/o}{\rightarrow}_{\Delta} q$, where I drop the $\Delta$-subscript, if it is clear from its context.
\end{definition}

The transition relation $\Delta_{\mathcal A}$ is the graph of the system operation $f_{\mathcal S}$. Thus, I could also provide this function $f_{\mathcal S}: Q\times I \rightarrow Q \times O$ with $(q, o) = f_{\mathcal S}(p,i)$ for any $p \stackrel{i/o}{\rightarrow}_{\Delta} q$ instead of the transition relation. 

If $Q_{\mathcal A}$ is as small as possible and $|Q_{\mathcal A}| = 1$, then I call the I/O-TS as well as its behaviour stateless, otherwise I call it stateful.

The properties of a I/O-TS may depend on its execution model. I define the following execution model:
 
\begin{definition} \label{def_ausführungsvorschrift}
Be ${\cal A}$ an I/O-TS. A run as a sequence $(q_0,o_0), (q_1, o_1), \dots$ is calculated for an input sequence $i_0, i_1, \dots$ according to the following rules:

\begin{enumerate} 
\item {\bf Initialisation (time $j=0$):} $(q_0, o_0) = (q_0, o_0)_{\mathcal A}$.
\item {\bf Check:}\label{step_2} If $i_j$ is undefined at time $j$ (the finite case) then stop the calculation. 
\item {\bf Transition:} Be $i_j$ the current input sign at time $j$ and $q_j$ the current state value, then chose one of the possible transitions with $(i_j, o_{j+1}, q_j, q_{j+1})\in \Delta_{\cal A}$ and determine the next state value $q_{j+1}$ and output character $o_{j+1}$. If no such transition exist, stop the calculation with an exception.  
\item {\bf Repeat:} transit to next point in time and proceed with next Check (\ref{step_2}.)
\end{enumerate}
\end{definition}

\subsection{System interaction \label{ss_system_interaction}}

Interaction between two systems in our sense means that they share a common state function, a so called ''Shannon state function'' or idealised Shannon channel \cite{BangemannDiedrichReich2016}, which is an output state function for the ''sender'' system and an input state function of the ''receiver system. In the context of I/O-TSs, this means that the state values of an output component of a transition of a ''sender'' I/O-TS are reproduced in the input component of the ''receiver'' system and serve there as input of a further transition (see Fig. \ref{fig_coupling_by_interaction}). 
In other words, in our model, interaction simply means that information is transmitted\footnote{This is not as self-evident as it perhaps seems. There are many process calculi whose interaction model is based on the synchronisation of identical or complementary named transitions, e.g. \cite{Milner1980,Hoare1985,Milner1992}.} and beyond that the structure of the interacting systems remains invariant. Accordingly, the coupling mechanism of Shannon state functions result in a coupling mechanism in the context of I/O-TSs being based on the use of equal characters in the sending and receiving I/O-TS. 

\begin{figure}[htbp]
  \begin{center}
    \includegraphics[width=6cm]{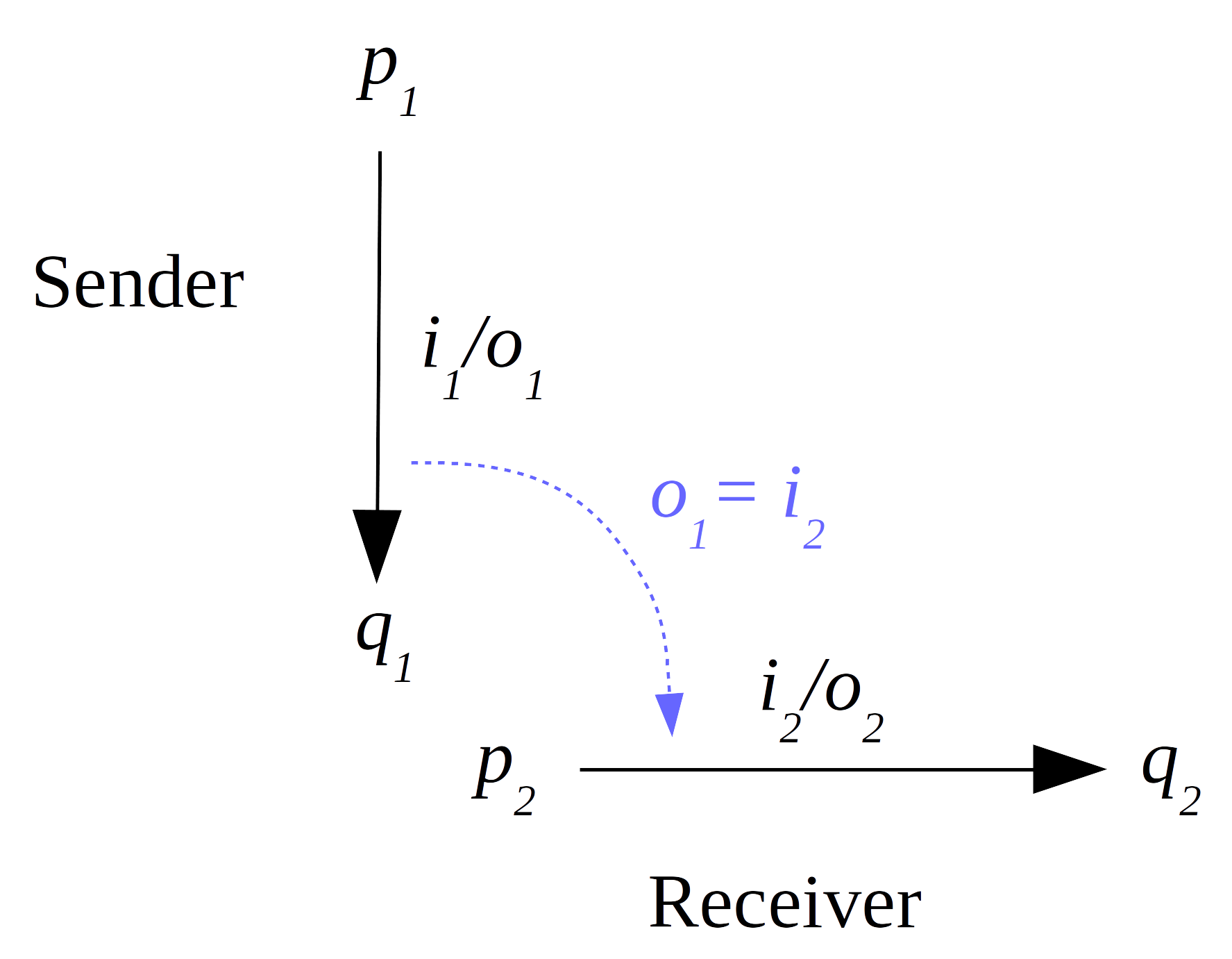}
 \end{center}
\caption[]{Interaction between two I/O-TSs in which the output character of a ''sender'' system is used as the input character of a ''receiver'' system. Interaction therefore means the coupling of the two I/O-TSs of sender and receiver based on the ''exchanged'' character.} \label{fig_coupling_by_interaction}
\end{figure}

Later we will have systems with multi-component input or output. There, such an idealised Shannon channel between a sending system ${\mathcal U}$ and a receiving system ${\mathcal V}$ is characterised by a pair of two indices $(k_{\mathcal U},l_{\mathcal V})$.

What is the effect of a Shannon channel on the execution rule? The execution rule for the transmission of a character by an idealised Shannon channel is, that after sending a character through a Shannon channel, in the next step a transition of the receiver system must be selected which processes this character.

\subsection{Sequential system composition \label{ss_sequential_system_composition}}
First, I define what I mean by saying that two simple systems work sequentially and then I show that under these circumstances, a super system can always be identified. Sequential system composition means that one system's output is completely fed into another system's input, see Fig. \ref{fig_system_composition_sequential}. 

\begin{figure}[htbp]
  \begin{center}
    \includegraphics[width=12cm]{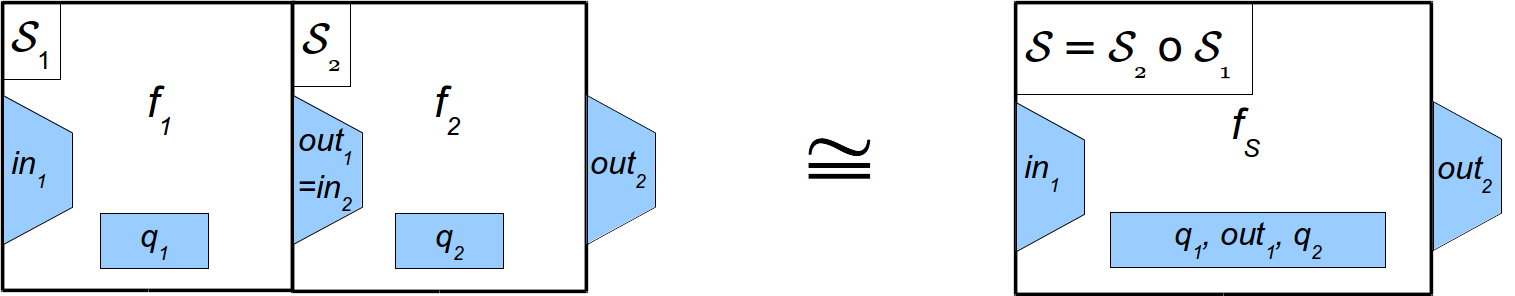}\\
 \end{center}
\caption[]{Diagram of a sequential system composition as defined in Def. \ref{def_system_composition_sequential}. The output state function of the first system $out_1$ is identical to the input state function $in_2$ of the second system.} \label{fig_system_composition_sequential}
\end{figure}

\begin{definition} \label{def_system_concatenation}
Two simple systems ${\cal S}_1$ and ${\cal S}_2$ with $O_1 \subseteq I_2$ and $\tau_i, \tau_i' \in T_i$ are said to ''{\it work sequentially}'' or to be ''{\it concatenated}'' at $(\tau_1, \tau_2)$ if $out_1(\tau_1') = in_2(\tau_2)$ is an idealised Shannon channel.
\end{definition}

\begin{proposition} \label{def_system_composition_sequential}
Let ${\cal S}_1$ and ${\cal S}_2$ be two simple systems that work sequentially at time $(\tau_1, \tau_2)$. Then the states  $(q, in, out)$ with $q = (q_1, q_2)$, $in = in_1$, and $out=out_2$ form a simple system ${\mathcal S}$ for time step $((\tau_1, \tau_2), (\tau_1', \tau_2'))$. I also write ${\cal S} = {\cal S}_2 \circ_{(\tau_1, \tau_2, out_1, in_2)} {\cal S}_1$ with the composition operator for sequential composition (or concatenation) $\circ_{(\tau_1, \tau_2, out_1, in_2)}$. I omit the time step and the concatenation states if they are clear from the context. \end{proposition}

\begin{proof}
To prove the proposition we have to identify a system function and a time function for the composed system. Based on the concatenation condition $out_1(\tau_1') = in_2(\tau_2)$ we can replace $in_2(\tau_2)$ by $out_1(\tau_1') = f^{ext}_1(q_1(t_1), i_1(t_1))$. We get 
\begin{multline} \label{eq_sequential_composition}
\left( \begin{array}{c}
q_1(\tau_1'), q_2(\tau_2') \\
out_2(\tau_2') 
\end{array}\right) = \\
\left(\begin{array}{c}
f^{int}_1 \left(q_1(\tau_1), i_1(\tau_1)\right), f^{int}_2 \left(q_2(\tau_2), f^{ext}_1(q_1(\tau_1), i_1(\tau_1))\right)\\ 
f^{ext}_2 \left(q_2(\tau_2), f^{ext}_1(q_1(\tau_1), i_1(\tau_1))\right) \\ 
\end{array}\right)
\end{multline}
where the right hand side depends exclusively on $\tau_1$ and $\tau_2$ and the left hand side, the result, exclusively on $\tau_1'$ and $\tau_2'$.

The composed system ${\mathcal S}$ exists with the time values $T_{\mathcal S} = \{0,1\}$. With the functions $map_i: T_{\mathcal S} \rightarrow T_i$, $(i=1,2)$ with $map_i(0) = \tau_i$ and $map_i(1)=\tau_i'$, the states of ${\mathcal S}$ are: 
\begin{eqnarray*}
  out_{\mathcal S}(\tau_{\mathcal S}) & = & out_2(map_2(\tau_{\mathcal S})), \\
  in_{\mathcal S}(\tau_{\mathcal S})  & = & in_1(map_1(\tau_{\mathcal S})) \text{, and} \\
  q_{\mathcal S}(\tau_{\mathcal S})   & = & (q_1(map_1(\tau_{\mathcal S})), out_1(map_1(\tau_{\mathcal S})), q_2(map_2(\tau_{\mathcal S}))).
\end{eqnarray*}
\end{proof}

The extension to more then two concatenated systems or time steps is straight forward.
The concatenation condition  $out_1(\tau_1') = in_2(\tau_2)$ implies that there is an important difference between the inner and outer time structure of sequentially composed systems. To compose sequentially, the second system can take its step only after the first one has completed its own. 

In the special case, where $f^{ext}_i: I_i \rightarrow O_i$ is stateless in both systems, the concatenation of these stateless systems results in the simple concatenation of the two system functions: $out_2(\tau_2') = (f^{ext}_2 \circ f^{ext}_1)(in_1(\tau_1))$ 

\subsection{Parallel system composition}
Parallel processing simple systems can also be viewed as one system if there is a common input to them and the parallel processing operations are independent and therefore well defined  as is illustrated in Fig. \ref{fig_system_composition_parallel}.

\begin{figure}[htbp]
  \begin{center}
    \includegraphics[width=8cm]{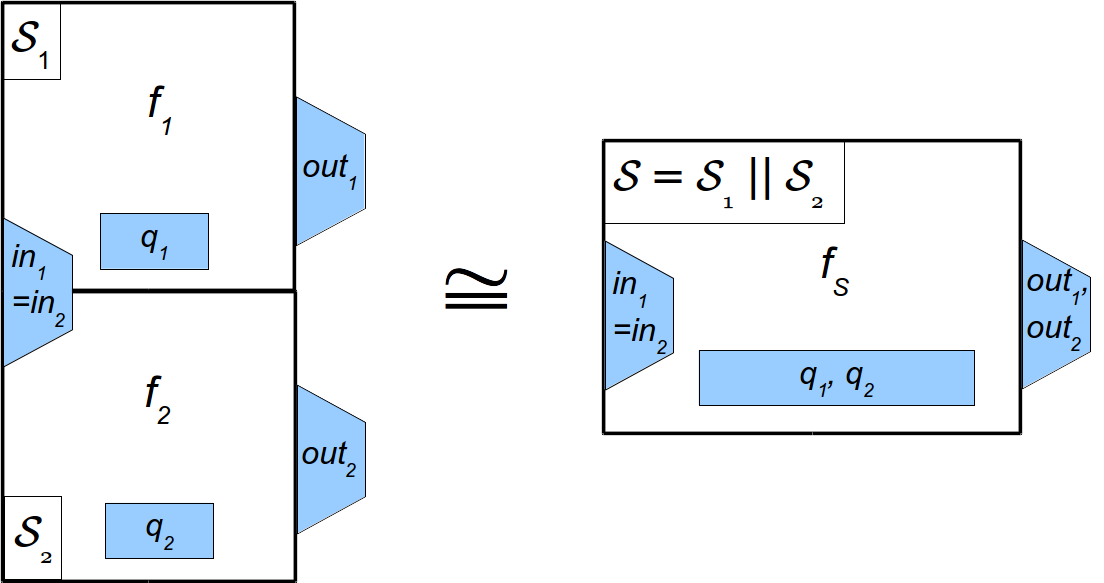}\\
 \end{center}
\caption[]{Diagram of a parallel system composition as defined in Def. \ref{def_system_composition_parallel}. The input of both systems is identical.} \label{fig_system_composition_parallel}
\end{figure}

\begin{definition} \label{def_system_composition_parallel}
Two discrete systems ${\cal S}_1$ and ${\cal S}_2$ with $I_1=I_2$ are said to ''{\it work in parallel}'' at $(\tau_1, \tau_2)$ if $q_1 \neq q_2$, $out_1 \neq out_2$ and $\tau_1\in T_1$, $\tau_2\in T_2$ exist such that $in_1(\tau_1) = in_2(\tau_2)$.
\end{definition}

\begin{proposition} 
Let  ${\cal S}_1$ and ${\cal S}_2$ be two systems that work in parallel at $(\tau_1, \tau_2)$. Then the state $(q, in, out)$, with $q = (q_1, q_2)$ $in=in_1=in_2$, $out = (out_1, out_2)$ form a system ${\mathcal S}$ for time step $((\tau_1, \tau_2), (\tau_1', \tau_2'))$. I also write  ${\cal S} = {\cal S}_2 ||_{(\tau_1, \tau_2, in_1, in_2)} {\cal S}_1$ with the composition operator for parallel composition $||_{(\tau_1, \tau_2, in_1, in_2)}$. Again, I omit the time step and the states if they are clear from the context.
\end{proposition}

\begin{proof} 
To prove, we have to provide a system function for the composed system $C$ at times $T_C=\{0,1\}$, which is simply $f_C: (Q_1\times Q_2) \times (I_1^\epsilon \times I_2^\epsilon) \rightarrow (Q_1^\epsilon\times Q_2^\epsilon) \times (O_1^\epsilon\times O_2^\epsilon)$ with
\begin{multline*}f_C(\tau_C') = \\
{f^{int}_1(q_1(map_1(\tau_C)), in(map_1(\tau_C))), f^{ext}_1(q_2(map_2(\tau_C)), in(map_1(\tau_C))) \choose f^{int}_2(q_1(map_1(\tau_C)), in(map_1(\tau_C))), f^{ext}_2(q_2(map_2(\tau_C)), in(map_1(\tau_C)))}
\end{multline*}
\end{proof}

Please note that the composed system's time step depends only on the logical time steps of the subsystems. If both systems don't perform their step at the same time compared to an external clock then the composed system does not realise its time step in a synchronised manner. If this synchronisation takes place, then there is effectively no difference between the inner and outer time structure of parallel composed systems. 

Again, the extension to more than two parallel working systems or more than two time steps is straight forward.

\subsection{Combining sequential and parallel system composition}
As the application of an operation $f$ preceding two parallel operations $g$ and $h$ is equivalent to the parallel execution of $g$ after $f$ and $h$ after $f$, together with some book keeping on time steps and i/o-states, the following proposition is easy to prove:

\begin{proposition} 
For three systems ${\cal P}_1$, ${\cal P}_2$ and ${\cal S}$ where ${\cal P}_1$ and ${\cal P}_2$ are parallel systems and ${\cal P}_1 || {\cal P}_2$  and ${\cal S}$ are sequential systems, then the two systems $({\cal P}_1 || {\cal P}_2) \circ {\cal S}$ and $({\cal P}_1 \circ {\cal S}) || ({\cal P}_2 \circ {\cal S})$ are functional equivalent, that is $({\cal P}_1 || {\cal P}_2) \circ {\cal S}\cong ({\cal P}_1 \circ {\cal S}) || ({\cal P}_2 \circ {\cal S})$. In other words, the right distribution law holds for parallel and sequential composition with respect to functional equivalence.
\end{proposition}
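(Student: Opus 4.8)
The plan is to expand both composed systems explicitly — using Eq.~\ref{eq_sequential_composition} for each sequential concatenation and the system function $f_C$ of Def.~\ref{def_system_composition_parallel} for each parallel composition — and then to produce the bijections $\phi$ (on the extended state spaces) and $\psi$ (on time) demanded by the definition of functional equivalence. Write ${\cal S}=(\dots,q_S,in_S,out_S,f_S)$ and ${\cal P}_i=(\dots,q_{P_i},in_{P_i},out_{P_i},f_{P_i})$ for $i=1,2$. The hypotheses force $I_{P_1}=I_{P_2}$ (needed to form ${\cal P}_1||{\cal P}_2$) and $O_S\subseteq I_{P_1}$ (needed to concatenate ${\cal S}$ with ${\cal P}_1||{\cal P}_2$, hence also with each ${\cal P}_i$ alone, ${\cal S}$ being applied first). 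Both sides run over $T_C=\{0,1\}$ with $succ(0)=1$, so I would simply take $\psi=\mathrm{id}$ and concentrate everything on $\phi$.

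First I would record the left-hand system: internal state $(q_S,out_S,q_{P_1},q_{P_2})$, input $in_S$, output $(out_{P_1},out_{P_2})$; on the step $0\mapsto 1$ it computes $q_S'=f_S^{int}(q_S,in_S)$ and $out_S'=f_S^{ext}(q_S,in_S)$, then feeds the single value $out_S'$ as common input into ${\cal P}_1$ and ${\cal P}_2$, giving $q_{P_i}'=f_{P_i}^{int}(q_{P_i},out_S')$ and $out_{P_i}'=f_{P_i}^{ext}(q_{P_i},out_S')$. Then I would record the right-hand system, where each factor ${\cal P}_i\circ{\cal S}$ drags along its own copy of ${\cal S}$: internal state $\bigl((q_S^{(1)},out_S^{(1)},q_{P_1}),(q_S^{(2)},out_S^{(2)},q_{P_2})\bigr)$, input $in_S$ (shared by the outer parallel composition), output $(out_{P_1},out_{P_2})$; on the step $0\mapsto 1$ each branch independently computes $q_S^{(i)\prime}=f_S^{int}(q_S^{(i)},in_S)$, $out_S^{(i)\prime}=f_S^{ext}(q_S^{(i)},in_S)$ and then $q_{P_i}'=f_{P_i}^{int}(q_{P_i},out_S^{(i)\prime})$, $out_{P_i}'=f_{P_i}^{ext}(q_{P_i},out_S^{(i)\prime})$. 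The natural candidate is then $\phi(q_S,out_S,q_{P_1},q_{P_2})=\bigl((q_S,out_S,q_{P_1}),(q_S,out_S,q_{P_2})\bigr)$, acting as the identity on the input slot $in_S$ and on the output slot $(out_{P_1},out_{P_2})$ — the ``bookkeeping on i/o-states'' the preamble alludes to. Verifying $\phi\circ f^*_{\mathrm{left}}=f^*_{\mathrm{right}}\circ\phi$ then reduces to two trivial observations: $f_S^{int}(q_S,in_S)$ and $f_S^{ext}(q_S,in_S)$ computed once equal the same expressions computed in each branch, and the ${\cal P}_i$-transitions coincide on both sides because each branch receives exactly $f_S^{ext}(q_S,in_S)$; the time condition $\psi(succ(0))=succ(\psi(0))$ is immediate.

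The hard part will be the mismatch of state-space cardinalities: the definition of $\cong$ asks for a bijection between the \emph{full} extended state spaces, but the right-hand space also carries ``off-diagonal'' states where $(q_S^{(1)},out_S^{(1)})\neq(q_S^{(2)},out_S^{(2)})$, which $\phi$ cannot hit. My intended fix is to observe that these states are irrelevant to the functionality: along the only step the two copies of ${\cal S}$ read the same input $in_S$ and so stay synchronized, hence off-diagonal states are never produced once the two branches start alike. One therefore either restricts both systems to the states actually attained over $(0,1)$ — on which the $\phi$ above is a genuine bijection — or extends $\phi$ to the whole spaces by composing it with any bijection between the complementary ``junk'' parts, which has no bearing on $f^*$ along $(0,1)$. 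With that caveat handled, $({\cal P}_1||{\cal P}_2)\circ{\cal S}\cong({\cal P}_1\circ{\cal S})||({\cal P}_2\circ{\cal S})$ follows, and the extension to more parallel branches or more time steps is routine relabelling.
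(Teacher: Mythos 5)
The paper offers no written proof of this proposition at all: it is introduced only by the one-sentence remark that applying $f$ before two parallel operations $g$ and $h$ is the same as running $g\circ f$ and $h\circ f$ in parallel, ``together with some bookkeeping on time steps and i/o-states.'' Your proposal is a faithful and considerably more honest execution of exactly that idea: you expand both sides via Eq.~\ref{eq_sequential_composition} and the parallel composition, observe that both branches of the right-hand system receive the same $in_S$ and hence compute the same $f_S^{ext}(q_S,in_S)$ as the single copy of ${\cal S}$ on the left, and exhibit the diagonal map $\phi$ and $\psi=\mathrm{id}$. That is the correct core argument, and you are right that the only genuine difficulty is the one you flag.

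On that difficulty, however, your second proposed fix does not work. The left-hand extended state space is $Q_S\times O_S\times Q_{P_1}\times Q_{P_2}\times I^\epsilon\times O^\epsilon$ while the right-hand one carries $Q_S^2\times O_S^2$; whenever $|Q_S\times O_S|>1$ and the sets are finite there is simply no bijection between the two full spaces, so there is no ``complementary junk part'' on the left to biject the off-diagonal states against. Even for countably infinite state sets, where a set-theoretic bijection exists, the definition of $\cong$ demands $\phi(f_1^*(x))=f_2^*(\phi(x))$ for \emph{every} $x$, and the dynamics on off-diagonal states need not be conjugate to anything on the left. Only your first fix is viable: restrict both systems to the states actually attained over the time step $(0,1)$ (equivalently, read functional equivalence on the reachable parts), on which your diagonal $\phi$ is a genuine bijection and the intertwining identity holds by the synchronization argument. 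That restriction is evidently what the paper's ``bookkeeping'' is meant to cover, so you should state it as part of the claim rather than as an optional repair; with that understanding your proof is complete.
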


As sequential system composition is non-commutative, it follows that the left distribution law does not hold:  ${\cal S} \circ ({\cal P}_1 || {\cal P}_2) \ncong ({\cal S} \circ {\cal P}_1) || ({\cal S} \circ {\cal P}_2)$. 

Finally I point out that combining parallel and sequential system composition as ${\cal S} \circ ({\cal P}_1 || {\cal P}_2)$ for stateless systems results in a combination of their system functions as $f_{\cal S} (f_{{\cal P}_1}, f_{{\cal P}_2})$, which is exactly the way, the first rule {\it Comp} for computable functions was defined. Hence, from a formal point of view, for computational systems, sequential together with parallel system composition represent the first level to compose computable functionality. 

%
\section{Recursive systems} \label{s_recursive_systems}
%
If a system operates at least partially on its own output, a recursive system is being created as the rules for creating the system function follow the rules for creating recursive functionality (see section \ref{ss_computable_functionality}).

\begin{figure}[htbp]
  \begin{center}
    \includegraphics[width=10cm]{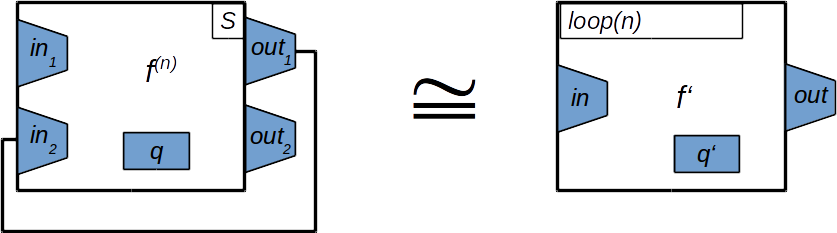}\\
 \end{center}
\caption[]{Diagram of a simple recursive system.} \label{fig_system_composition_loop}
\end{figure}

As I show in Fig.  \ref{fig_system_composition_loop}, a recursive system needs at least two input- and two output-components. Recursive systems are therefore not simple systems in the sense of Def. \ref{def_system_einfach}. To be able to describe them, I introduce the more elaborate system class of ''multi-input-system (MIS)''. 

\subsection{Multi-input-systems}
First, I introduce the ''empty'' character $\epsilon$. By ''empty'' I mean the property of this character to be the neutral element regarding the concatenation operation of characters of an alphabet $A$. As already introduced in the preliminaries, I define $A^\epsilon = A \cup \{\epsilon\}$ and if $A = A_1 \times \dots \times A_n$ is a product set then $A^\epsilon = A_1^\epsilon \times \dots \times A_n^\epsilon$.

From a physical point of view, a state function cannot indicate the empty character. It is supposed to always represent a ''real'' character. However, I use the empty character in my formalism to indicate that certain components of input state functions are irrelevant in a given processing step. 
This means that for an input character vector $i=(i_1, \dots, i_n) \in I^\epsilon_1\times \dots \times I^\epsilon_n$ with $i_k \neq \epsilon$ but all other components $i_{l\neq k} = \epsilon$ (I also write $i=\epsilon[i_k, k]$), the system function $f$ is determined by a function $f_k$ that only depends on characters from $I_k$. That is, in this case $f = f_k:I_k\times Q \rightarrow O \times Q$ with $f(i, p) = f_k(i_k, p)$. 

\begin{definition} \label{def_system_multi-input}
A multi-input system (MIS) ${\mathcal{S}}$ is defined as ${\mathcal{S}} = (T, succ, Q, I, O, q,$ $ in, out, f)$. $Q$, $I$ and $O$ are alphabets, where $I$ has at least 2 components and $I$ and $Q$ have to be non-empty. The state functions $(q, in, out):T\rightarrow Q \times I^{\epsilon} \times O^{\epsilon}$ form a discrete system for the time step $(t, t'=succ(t))$ if they are mapped by the partial function $f:Q \times I^\epsilon \rightarrow Q \times O^\epsilon$ with $f =(f^{int}, f^{ext})$ such that  
\[{q(t') \choose out(t')} = {f^{int}(q(t), i(t)) \choose f^{ext}(q(t), i(t))}\,.\]
\end{definition}

To describe the behaviour of an MIS with an I/O-TS, we have to extend the transition relation in its input- and output characters also by the empty character as follows:

\begin{definition} \label{def_IO-transition_system_mit_epsilon}
The I/O-TS ${\mathcal A} = (I, O, Q, (q_0, o_0), \Delta)_{\mathcal A}$ with $\Delta_{\mathcal A} \subseteq I^\epsilon \times O^\epsilon \times Q \times Q$ describes the behaviour of a discrete system ${\mathcal S}$ from time $t=t_0$, if $Q_{\mathcal S} \subseteq Q_{\mathcal A}$, $I_{\mathcal S} \subseteq I_{\mathcal A}$, $O_{\mathcal S} \subseteq O_{\mathcal A}$, $q_{\mathcal S}(t_0) = q_0$ and $out_{\mathcal S}(t_0) = o_0$ and $\Delta_{\mathcal A}$ is the smallest possible set such that for all times $t \geq t_0$ with $t'=succ(t)$ and for all possible input state functions $in_{\mathcal S}$ holds: $(in(t), out(t'), q(t), q(t')) \in \Delta_{\mathcal A}$.
\end{definition}

If I talk about an I/O-TS, I will usually refer to this more general one. I call transitions with $\epsilon$ as input ''spontaneous''. They cannot occur in transition systems representing a system function, as then, all input characters of its transition relation have to be unequal to $\epsilon$ in at least one component. However, these spontaneous transitions will play a very important role in my description of the behaviour of interactive systems later on. 

To simplify my further considerations, I will consider only systems that operate on input and output characters that are unequal to $\epsilon$ in at most one component. I therefore can restrict the execution mechanism to linear execution chains without any branching.

\subsection{Feedback} \label{ss_Rückkopplung}
In recursive systems, one component of an output state function is also used as an input component. I call this mechanism ''feedback'' and this output/input component a feedback component. It can be represented by an index pair $(k, l)_{\mathcal S}$ where $k$ is the index of the output component and $l$ is the index of the input component of an MIS ${\mathcal S}$.

In accordance with our sequential processing rule that output characters that serve as input characters must be processed next, the calculation of a recursive system is initiated by an external input character and then goes on as long as the feedback component provides a character unequal to the empty character. As with concatenation of simple systems, this results in a different ''internal'' versus ''external'' time scale. 
As a result, we have to modify the calculation rules \ref{def_ausführungsvorschrift} for MIS in the following way:

\begin{definition} \label{def_MIS_execution}
Be ${\mathcal A}$ an I/O-TS representing the behaviour of an MIS ${\cal S}$ with $I_{\mathcal A} = (I_1, \dots, I_{n_I})$ and $O_{\mathcal A} = (O_1, \dots, O_{n_O}) $ and $C$ a set of feedback components. Be further $seq^{(i)} = (i_0, \dots, i_{n-1})$ a (possibly infinite) sequence of external input characters from $I_{\mathcal A}$ of the form $\epsilon[v,k]$, that is,  they have exactly one component not equal to $\epsilon$.

The counter $j$ counts the current external input character. The current values of $i$, $o$, and $q$ are marked with a $*$. The values that are calculated in the current step are marked with a $+$.
Then, we calculate a run as follows: 

\begin{enumerate} 
\item {\bf Initialisation (time $j=0$):} $(q^*, o^*) = (q_0, o_0)_{\mathcal A}$.

\item {\bf Loop:} \label{calc_MIS_loop} Determine for the current state $q^*$ the set of all possible transitions. If this set is empty, stop the computation.

\item {\bf Determine the current input character  $i^* \in I_{\mathcal A}^\epsilon$} according to the following hierarchy:
\begin{enumerate}
  \item Is the current output character (as a result of the last transition) $o^*=\epsilon[v,k]$, that is , $o^*$ has as $k$-th component the value $v\neq\epsilon$ and $o^*$ is part of a feedback component $c=(k,l)$ for the input component $0 \leq l \leq n_I$, then set $i^* = \epsilon[v,l]$ and continue with step \ref{calc_MIS_transition}.
  \item Otherwise, if at time $j$ there is no further input character $i_j$, stop the computation. 
  \item Otherwise set $i^* = i_j$, increment $j$, and continue with step \ref{calc_MIS_transition}.
\end{enumerate}

\item {\bf Transition:} \label{calc_MIS_transition} With the current state value $q^*$ and the current input character $i^*$, choose a transition $t=(i^*, o^+, q^*, q^+)\in\Delta_{\mathcal A}$ and thereby determine $o^+$ and $q^+$. If there is no transition, stop the computation with an exception. 

\item {\bf Repeat:} Set $q^* = q^+$ and $o^* = o^+$ and continue the computation at \ref{calc_MIS_loop}.
\end{enumerate}
\end{definition}  


\subsection{Primary and $\mu$-recursive systems}
An MIS that works on its own output can realise a computation according to the second rule {\it PrimRec}, as the following proposition says. 

\begin{proposition} \label{prop_einfaches_rekursives_system}
Be $g\in F_n$ and $h\in F_{n+2}$ the computable functions of rule {\it PrimRec}. Be ${\mathcal S}$ an MIS with 2 input components $in = (in_1, in_2)$, two output components $out=(out_1, out_2)$, two internal state components $q=(q_1, q_2)$ and the feedback condition $out_1 = in_2$ and the following system function

\begin{eqnarray}
f_{\mathcal{S}}^{int}(in, q) & = & \begin{cases}q_2 = -1:& (in_1, 0)\\q_2 \geq 0:&(q_1, q_2	+1)\end{cases} \\
f_{\mathcal{S}}^{ext}(in, q) & = & \begin{cases}q_2 = -1: & (g(in_1), \epsilon)\\ q_2 = 0 \dots n-1:& (h(q_1, q_2, in_2), \epsilon) \\ q_2 = n:& (\epsilon, h(q_1, q_2, in_2))\end{cases}
\end{eqnarray}

With the initial values $q_2(0)=-1$ and $out(0)=(\epsilon, \epsilon$) it realises the computation according to the second rule {\it PrimRec}. I call such a system also a  primary recursive system.  
\end{proposition}

\begin{proof}
That this system indeed calculates $f(a,n)$ of rule {\it PrimRec} in its $n+1$ time step can be understood with table \ref{tab_recursive_system_execution}.
\end{proof}

\begin{table}
\begin{center}
\begin{tabular}{cc|lllll}
ext.     & int.     & $in_1$     & $in_2 = out_1$                  &          $out_2$ &    $q_1$ &     $q_2$\\
time     & time\\
\hline
 0       & 0        & a          & $\epsilon$                      &                * &        * &       -1 \\
 0       & 1        & $\epsilon$ & $f(a,0) = g(a)$                 &       $\epsilon$ &        a &        0 \\
 0       & 2        & $\epsilon$ & $f(a,1) = h(a, 0, g(a))$        &       $\epsilon$ &        a &        1 \\
 0       & 3        & $\epsilon$ & $f(a,2) = h(a, 1, h(a,0,g(a)))$ &       $\epsilon$ &        a &        2 \\
 0       & $\vdots$ & $\vdots$   & $\vdots$                        &       $\vdots$   & $\vdots$ & $\vdots$ \\
 0       & $n$      & $\epsilon$ & $f(a,n-1) = \dots$              &       $\epsilon$ &        a &    $n-1$ \\
 1       & $n+1$    &          * & $\epsilon$                      & $f(a,n) = \dots$ &        a &      $n$ \\
\end{tabular} 
\end{center}
\caption[]{This table shows the calculation of the system function of a primary recursive system at times 0 to $n+1$, where the final result is presented at its second output component $out_2$. The internal calculation terminates when the coupling signal $out_1 = in_2$ represents the empty character.} \label{tab_recursive_system_execution}
\end{table}

It is interesting to note a couple of consequences of recursive super system formation. First, as with sequential composition, we can distinguish between an internal and an external time scale. Secondly, internally, the system has an internal counting state which is necessary for keeping temporary interim results during the calculation of the system function of the supersystem, but nevertheless the recursively composed supersystem could be stateless with respect to its external time step. And thirdly, the feedback signal $in_2 = out_1$ becomes an internal state signal, as the actual input of the recursive supersystem is $in_1$ and the output is $out_2$. Thus in a sense, the system border changes with the feedback condition, rendering two former external accessible signals into internal ones. 

We can also make the system function of any suitable stateless simple system ${\mathcal S}_1$ to become the function $h$ of the recursion rule {\it PrimRec}, as is illustrated in Fig. \ref{fig_system_composition_recursive}. Then, ${\mathcal S}_1$ has three input components $in_{1,i}$ and one output component $out_{1,1}$. Its system function $h$ is computable and given by $out'_{1,1} = h(in_{1,1},in_{1,2},in_{1,3})$.

\begin{figure}[htbp]
  \begin{center}
    \includegraphics[width=10cm]{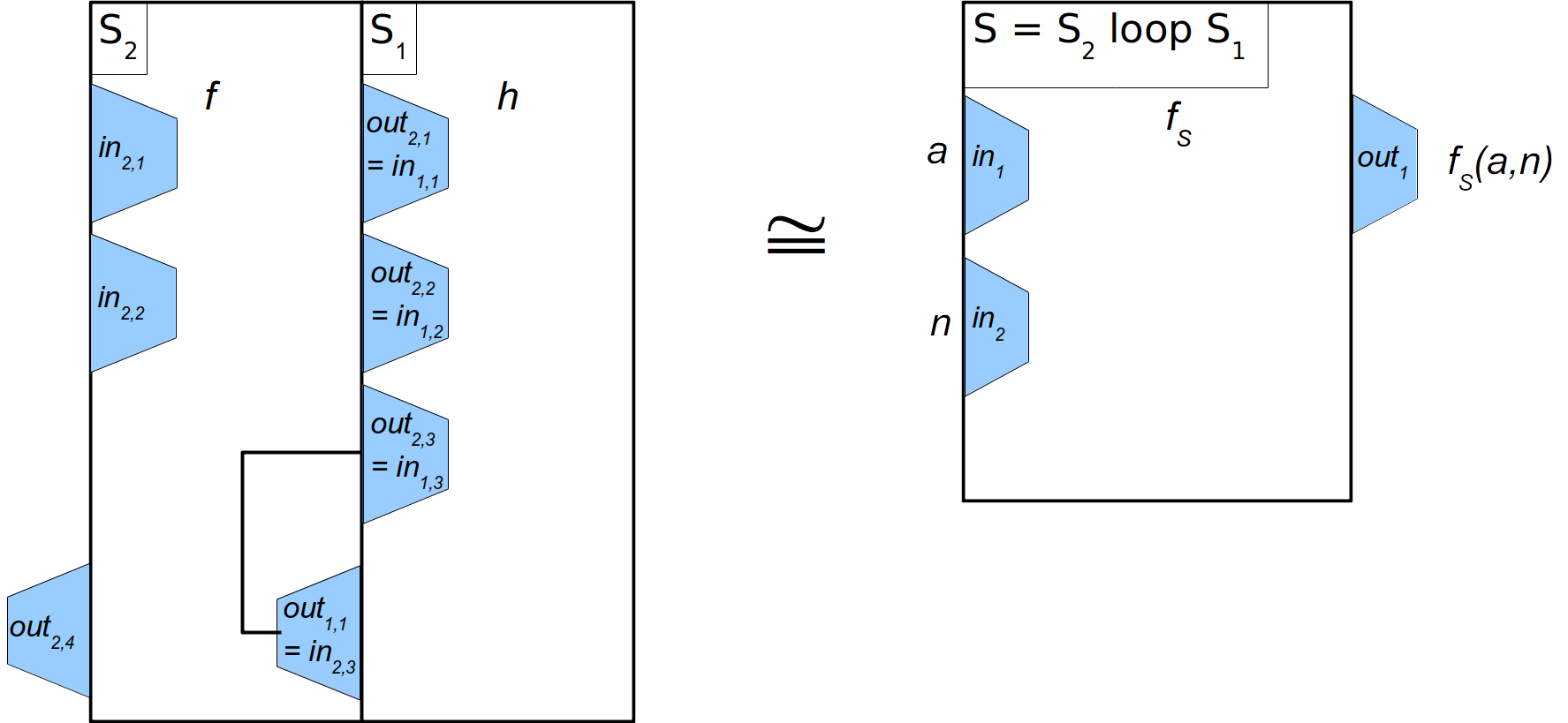}\\
 \end{center}
\caption[]{Diagram of a recursive system, created by composition of a simple system ${\mathcal S}_1$ and an MIS that does the book keeping for the recursion.} \label{fig_system_composition_recursive}
\end{figure}

We define the necessary complementary system ${\mathcal S}_2$, that does the recursive book keeping, as an MIS with three input components $in_{2,i}$, four output components $out_{2,j}$ and an internal state $q$. Actually we tweak the definition of the system a bit by allowing the system function to also operate on output states. This can be justified because these states become internal states in the composed system anyway. 

As shown in Fig. \ref{fig_system_composition_recursive}, ${\mathcal S}_1$ and ${\mathcal S}_2$ couple by the identities $in_{1,1}=out_{2,1}$, $in_{1,2}=out_{2,2}$, $in_{1,3}=out_{2,3}$, and in the other direction $in_{2,3}=out_{1,1}$.

The system function of ${\mathcal S}_2$ routes the value of $a$ through (input state $in_{2,1}$), provides the counter variable (''output'' state $out_{2,2}$) and organises the feedback by the internal coupling $in_{2,3} = out_{2,3}$. By the latter condition, we achieve that the output $out_{2,3}$ of ${\mathcal S}_2$ becomes identical to the output $out_{1,1}$ of system ${\mathcal S}_1$ and thereby it represents the result of the function of ${\mathcal S}_1$ during recursion.
Concretely, given  the initial conditions $out_{2,2}(0) = 0$, $q(0)=max$ the system function of ${\mathcal S}_2$ is:
\begin{align*}
q'         &= \left\{\begin{array}{lcl}out_{2,2}=0 &:& in_{2,2}\\ out_{2,2}>0 &:&  q\end{array}\right.  \text{// stores number of runs of the loop}\\
out_{2,1}' &= \left\{\begin{array}{lcl}out_{2,2}=0 &:& in_{2,1}\\ out_{2,2}>0 &:& out_{2,1}\end{array}\right. \text{// stores $a$}\\
out_{2,2}' &= out_{2,2} + 1 \,\, \text{// counter}\\
out_{2,3}' &= out_{1,1}' = \left\{\begin{array}{lcl}out_{2,2} = 0 &:& g(in_{2,1})\\out_{2,2} > 0 &:& h(out_{2,1}, out_{2,2}-1, out_{2,3})\end{array}\right.\\
out_{2,4}' &= \left\{\begin{array}{lcl}out_{2,2}(t) < q &:& \epsilon\\out_{2,2}(t) = q &:& h(out_{2,1}, out_{2,2}-1, out_{2,3})\end{array}\right. \text{// result}
\end{align*}

In table \ref{tab_system_composition_recursive} I illustrate a run of a composed system for 2 iterations, thus calculating  $h(a,1,h(a,0,g(a)))$. Again, the counter state and the state to hold the interim results became internal states of the composed system.

\begin{table}
\begin{center}
\begin{tabular}{p{0.5cm}|p{0.5cm}p{1.1cm}p{2.5cm}|p{0,5cm}p{1.1cm}p{1.1cm}p{1.1cm}p{2cm}}
time & $in_{2,1}$ & $in_{2,2}$ & $in_{2,3}=out_{1,1}$  & $q$ & $out_{2,1}$  & $out_{2,2}$  & $out_{2,4}$ \\
     &            &            & $=out_{2,3}=in_{1,3}$ &     &$=in_{1,1}$   & $=in_{1,2}$  & \\
\hline
  0  & $a$        & 2          & *                     & max & $*$          & 0            & $\epsilon$\\
  1  & *          & *          & $g(a)$                & 2   & $a$          & 1            & $\epsilon$\\    
  2  & *          & *          & $h(a,0,g(a))$         & 2   & $a$          & 2            & $\epsilon$\\
  3  & *          & *          & *                     & 2   & $a$          & *            & $h(a,1,h(a,0,g(a)))$ \\
\end{tabular} 
\end{center}
\caption[]{The run of the composed system of Fig. \ref{fig_system_composition_recursive} for 2 iterations.} \label{tab_system_composition_recursive}
\end{table}

To realise a $\mu$-recursive computation, the system function can be designed somewhat similar to a recursive loop system. The switch between the two output components indicating termination now results from the detection of the zero of the $g$-part of the system function.

\begin{proposition}
Be $g\in F_{n+1}$ the computable function of rule $\mu$-Rec. Be ${\mathcal S}$ an MIS with 2 input signal components $in = (in_1, in_2)$, two output signal components $out=(out_1, out_2)$, two internal state signal components $q=(q_1, q_2)$, the feedback condition $out_1 = in_2$, and the following system function

\begin{eqnarray}
f_{\mathcal{S}}^{int}(in, q) & = & \begin{cases}
    q_2 = 0: & (in_1, 1)\\
    q_2 > 0: & (q_1, q_2+1)
  \end{cases} \\
f_{\mathcal{S}}^{ext}(in, q) & = & \begin{cases}
    q_2 = 0:                    & (g(in_1, q_2), \epsilon)\\ 
    (q_2 > 0) \wedge (in_2 \neq 0): & (g(q_1, q_2), \epsilon)\\ 
    (q_2 > 0) \wedge (in_2 = 0):     & (\epsilon, q_2-1)
  \end{cases}
\end{eqnarray}.

With the initial values $q_2 = 0$ and $in_2(0) = out_1(0) = \epsilon$ it realises the calculation according to the third rule $\mu$-Rec. I call such a system also a $\mu$-recursive system.  
\end{proposition}

The proof is similar to primary recursive computation. 

\subsection{Implicit recursive systems} \label{ss_implizit_rekursive_systeme}
A primary recursive system is already created, if the output of a simple system is feed back into an MIS that did just provide its input, as I have indicated in Fig. \ref{fig_system_loop_implicit}.

\begin{figure}[htbp]
  \begin{center}
    \includegraphics[width=7cm]{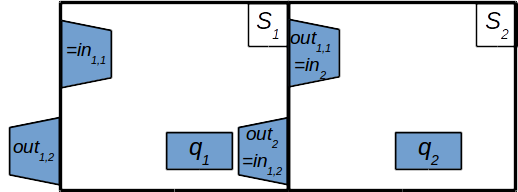}\\
 \end{center}
\caption[]{Diagram of an implicit recursive system.} \label{fig_system_loop_implicit}
\end{figure}

\begin{proposition}
The supersystem of Fig. \ref{fig_system_loop_implicit} is equivalent to a primary recursive system with one recursion step. 
\end{proposition}

\begin{proof} 
Be ${\mathcal S}_1$ an MIS and ${\mathcal S}_2$ a simple system. Then the supersystem  of Fig. \ref{fig_system_loop_implicit} can be constructed as follows:

The initial time of ${\mathcal S}_1$ is $\tau_1$, the initial time of ${\mathcal S}_2$ is $\tau_2$. As I assume $in_{1,2}(\tau_1) = out_2(\tau_2) = in_{1,1}(\tau_1') = \epsilon$, I can drop these components from the computation (e.g. it thereby holds $f_1(in_1(\tau_1), q_1(\tau_1)) = f_1(in_{1,1}(\tau_1), q_1(\tau_1))$, etc.).

I then have the following three computation steps:

\begin{tabular}{l|rcl}
Step \\
\hline
1. & $q_1(\tau_1')$ & $=$ & $f_1^{int}(in_{1,1}(\tau_1), q_1(\tau_1))$\\
   & $in_2(\tau_2) = out_{1,1}(\tau_1')$ & $=$ & $f_1^{ext}(in_{1,1}(\tau_1), q_1(\tau_1))$\\
2. & $q_2(\tau_2')$ & $=$ & $f_2^{int}(in_2(\tau_2), q_2(\tau_2))$\\
   & $in_{1,2}(\tau_1') = out_2(\tau_2')$ & $=$ & $f_2^{ext}(in_2(\tau_2), q_2(\tau_2))$\\
3. & $q_1(\tau_1'')$ & $=$ & $f_1^{int}(in_{1,2}(\tau_1'), q_1(\tau_1'))$\\
   & $out_{1,2}(\tau_1'')$ & $=$ & $f_1^{ext}(in_{1,2}(\tau_1'), q_1(\tau_1'))$ 
\end{tabular}

I summarise these three steps by defining $a:=(in_{1,1}(\tau_1), q_1(\tau_1), q_2(\tau_2))$ and $g(a) := f_1^{ext}(in_{1,1}(\tau_1), q_1(\tau_1))$. Then I have
\begin{eqnarray*}
in_{1,2}(\tau_1') = out_2(\tau_2') & = & f_2^{ext}(in_2(\tau_2), q_2(\tau_2))\\
 & = & f_2^{ext}(f_1^{ext}(in_{1,1}(\tau_1), q_1(\tau_1)), q_2(\tau_2))\\
 & = & f_2^{ext}(f_1^{ext}(a), a)\\
 & =: & h_0(a, g(a)) =: h(a,0,g(a))
\end{eqnarray*}

And with $h_1(x,y) := f_1^{ext}(x, f_1^{int}(y))$ I further have
\begin{eqnarray*}
out_{1,2}(\tau_1'') & = & f_1^{ext}(in_{1,2}(\tau_1'), q_1(\tau_1'))\\
 & = & f_1^{ext}(h_0(a, g(a)), f_1^{int}(in_{1,1}(\tau_1), q_1(\tau_1)))\\
 & = & f_1^{ext}(h_0(a, g(a)), f_1^{int}(in_{1,1}(a))\\
 & = & h_1(h_0(a, g(a)),a) = h(a,1,h(a,0,g(a)))
\end{eqnarray*}

\end{proof}

The counter of this primary recursive system exists only implicitly. 
If the system ${\mathcal S}_1$ ''calls'' further simple systems in the same manner as ${\mathcal S}_2$, then we get more recursive steps. 

This construct is quite relevant for the operation of modern imperative programming languages. In fact, imperative programs that consist of sequences of operations describe implicit primary recursive systems.

%
\section{Interactive systems} \label{s_interactive_systems}
%
In the last sections we investigated the compositional behaviour of systems whose system functions were known. That is, we investigated how supersystems are created by interactions of systems whose transformational I/O-behaviour we knew completely. 

In this section, we want to investigate systems whose behaviour in an interaction is nondeterministic, meaning that their behaviour is not solely determined by the interaction itself, but that they may have a decision-making scope within the interaction. I call these systems ''{\it interactive}''. Just as before, I will analyse these systems according to their composition behaviour. 

As said in the introduction the adequate description of these systems is key to fully understand interaction networks between complex systems as illustrated in Fig. \ref{fig_network_of_relations}. While we could simplify our hierarchical system composition by eliminating the interaction-perspective from our system description, this is no longer the case for interaction networks between interactive systems. Instead, we have to look for a way to describe both, the nodes, that is, the participants, and the edges, that is, the interactions of such an interaction network in a mutually compatible or ''balanced'' way. 

To achieve this goal, the idea is to partition systems in a way that we can use the resulting parts in two different compositions: an external composition, which I call ''cooperation'', and in an internal composition, which I call ''coordination''. I call these versatile system parts ''roles''. Put it a bit differently, the fact, that we are looking for a ''loose coupling scheme'' in the sense of a composition mechanism of systems that does not encompasses the systems as a whole but only their relevant parts in an interaction, logically entails another composition mechanism which composes these ''interaction relevant parts'' back to whole systems. Thus, the question how interactive systems compose directly leads us to the two questions: 
\begin{enumerate}
\item External composition or ''cooperation'': How do different roles of different systems compose by interaction?
\item Internal composition or ''coordination'': How do different roles of the same system compose by coordination?
\end{enumerate}

To express this mathematically, we have to extend our composition relation for systems (\ref{eq_system_composition}) to roles ${\mathcal R}_i$. But what do roles compose to? For the internal composition, it is obviously the system ${\mathcal S}$ all roles ${\mathcal R}_i^{\mathcal S}$ belong to. We therefore can state mathematically:

\begin{equation} \label{eq_role_composition_coordination}
{\mathcal S} = comp^{Coord}_{\mathcal R}({\mathcal R}^{\mathcal S}_1, \dots, {\mathcal R}_n).
\end{equation}

But, what is the resulting entity of the external composition? The concept of this composition entity is well known in informatics: it's named ''protocol''.  According to Gerald Holzmann \cite{Holzmann1991} this term was first used by R.A. Scantlebury and K.A. Bartlett \cite{Scantlebury1967}. We therefore can say that a protocol ${\mathcal P}$ in this sense is the composition result of roles ${\mathcal R}_{\alpha_i}^{{\mathcal S}_i}$ of different interactive systems ${\mathcal S}_i$, interacting in a nondeterministic way. And we can say that the roles are their interfaces in these compositions. Mathematically we write

\begin{equation} \label{eq_role_composition_interaction}
{\mathcal P} = comp^{Interact}_{\mathcal R}({\mathcal R}_{\alpha_1}^{{\mathcal S}_1}, \dots, {\mathcal R}_{\alpha_n}^{{\mathcal S}_n}).
\end{equation}

Obviously, both compositions are emergent compositional as the composition of the roles is not a role but either a system or a protocol. In the following I will delineate these concepts in detail.

\subsection{The definition of interactive systems} \label{ss_def_interaktive_systeme}
The program to derive the concept of interactivity consists of several steps. First, I introduce the notion of an interaction partition of an MIS. It partitions the behaviour of the MIS into different, interaction related behaviours, that I also call ''roles''. I show that these roles together with a mechanism of ''internal coordination'' is functionally equivalent to the original MIS. If this equivalent system formulation additionally satisfies the interactivity condition, then the original system is ''interactive''. In Fig. \ref{fig_system_inner_coupling} I illustrate this idea of a system that coordinates multiple roles in several, distinct interactions.

\begin{figure}[htbp]
  \begin{center}
    \includegraphics[width=5cm]{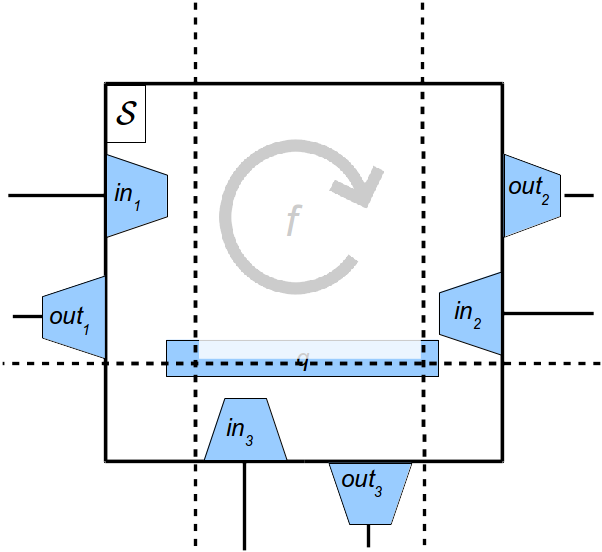}\\
 \end{center}
\caption[]{A single system ${\mathcal S}$ that coordinates multiple roles in different interactions. It shows the need to describe the inner coupling of the different roles, symbolised as a circular arrow over the greyed out central system parts.} \label{fig_system_inner_coupling}
\end{figure}

To describe an interaction we start by determining the relevant input and output components of the involved systems. In a first step, we must rearrange the input and output components of an MIS into interaction related pairs. For example, an MIS ${\mathcal S}$ has 4 input components with the input alphabet $I_{\mathcal S} = I_1 \times I_2 \times I_3 \times I_4$ where the input components 1, 2, and 4 are assigned to a first interaction and the input component 3 is assigned to a second interaction, then $I_1' = I_1 \times I_2 \times I_4$ and $I_2' = I_3$. The output alphabets have to be assigned similarly. Both, the input as well as the output alphabet of an interaction might remain empty. 

How are the transitions divided? If both, the non-empty component of the input as well as the output character of a transition belongs to an interaction, then the whole transition with its start and target state value belongs to the interaction. If the non-empty component of the input and the output character of a transition belongs to two different interactions, then this transition has to be divided into two separate transitions, one for each interaction together with a coordinating rule which relates the receiving transition of the one interaction to the sending transition of the other interaction. 

Now we have to deal with the following problem: Assume that we have several transitions from start state $p$ to target state $q$ mapping different input characters uniquely to some output characters, say $t_1 = (i_1, o_1, p, q)$ and $t_2 = (i_2, o_2, p, q)$.
Then the proposed division of the transitions results in an information loss. Once the first interaction arrives in state value $q$ under any input, but no output, the information which output should be generated in the ensuing transition of the second interaction is lost. If the resulting role-based product automaton should represent the same behaviour as the original system, then the inner state of the first interacting role has to be extended by its input alphabet. See Fig. \ref{fig_inner_coupling}.

\begin{figure}[htbp]
  \begin{center}
    \includegraphics[width=6cm]{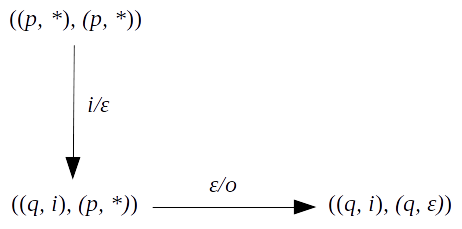}
 \end{center}
\caption[]{A transition $t=(i,o,p,q)\in \Delta$ whose input and output character are assigned to two different interactions is generally split into two transitions, each belonging to two orthogonal transition relations. In the first transition, $(i, \epsilon, (p, *), (q, i))$ of the first interaction relation (vertical dimension), the input character is processed and stored. In the ensuing second transition, $(\epsilon, o, (p, *), (q, \epsilon))$ of the second interaction (horizontal dimension), the output character is produced.} \label{fig_inner_coupling}
\end{figure}

As an example, let us consider a system that has two inputs and two outputs and simply reproduces input 1 at output 1 and input 2 at output 2, illustrated in Fig. \ref{fig_two_systems_with_different_IO-relations}. Is such a system interactive? That depends on the combination of inputs and outputs we assign to the roles. If we combine input 1 and output 1 as well as input 2 and output 2 into two roles each, then we have in fact two trivial stateless, independent deterministic systems without any internal coordination. It just holds $out_1' = in_1$ and $out_2' = in_2$ or in the automata-notation $p \stackrel{c/c}{\rightarrow}_{\Delta} p$ for every character $c\in I_i = O_i$, $i=1,2$. Clearly such a system would not be interactive. 

\begin{figure}[htbp]
  \begin{center}
    \includegraphics[width=6cm]{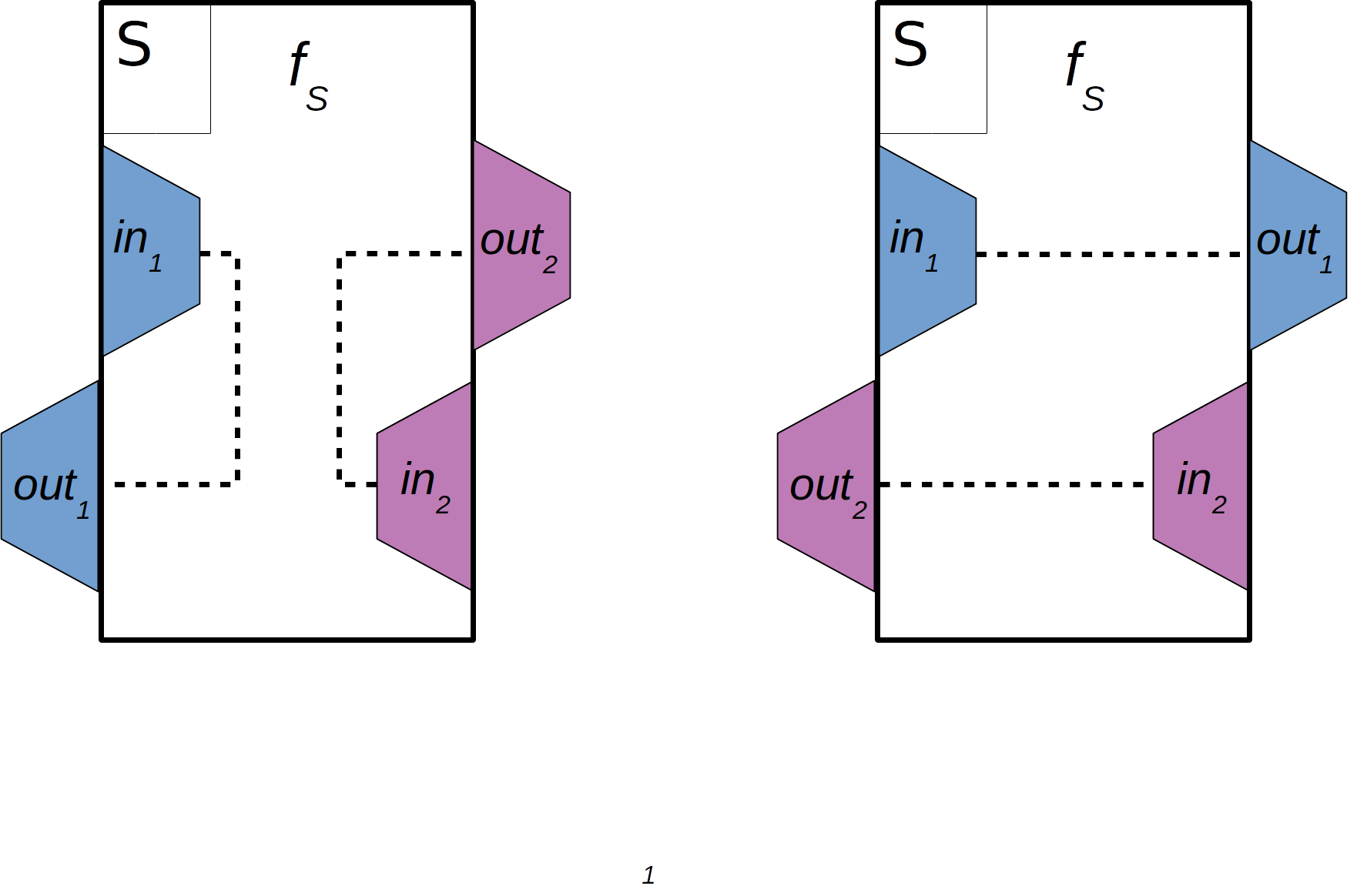}
 \end{center}
\caption[]{A single simple MIS with an identity relation between its two input and output states, symbolised by the dashed lines and colours, is drawn two times: on the left, the related input and output states are involved in a single interaction; on the right, they are involved in separate interactions} \label{fig_two_systems_with_different_IO-relations}
\end{figure}

If, however, we combine input 1 with output 2 to form a role and input 2 with output 1 to form the other role, then things look different. Then this system becomes a ''man-in-the-middle'' in which role 2 reproduces the behaviour of role 1 and vice versa. To create an equivalent role-based description of this system, we have to partition its transition relation according to the desired role structure, store the inputs temporarily and introduce the coordination. Receiving a character $c$ in state $p$, all we know is that the role does not create any immediate output. Thus, for every incoming character $i\in I$ we have $(p,*) \stackrel{i/\epsilon}{\rightarrow} (p, i)$. And if the role outputs a character, it seems to be produced spontaneously: $(p,c) \stackrel{\epsilon/o}{\rightarrow}_{\Delta} (p, \epsilon)$. Both roles are internally coordinated in the sense that any receiving transition in one role is inherently connected to an ensuing corresponding sending transition of the other role.  
	
This way an excellent chess playing system can be realised by playing simultaneously against two grandmasters, playing black against the first and white against the other one. This system only has to reproduce the moves that the Grand Master in white plays against its role 1 in its role 2 and the answer of the Grand Master in black in his role 1. I would call such a system interactive. Interestingly, the interaction description of each role does not allow to decide whether this system knows the ''rules of the game'', as in this case, this property obviously depends on the behaviour of each single grandmaster.  

Formally, I construct an interaction partition of an MIS as follows:

\begin{definition} \label{def_interaktionspartition}
Let ${\mathcal A}$ describe the behaviour of a system according to Def. \ref{def_IO-transition_system_mit_epsilon} with the input alphabet $I_{\mathcal A} = I_1\times \dots \times I_m$ and the output alphabet $O_{\mathcal A} = O_1\times \dots \times O_n$.

[Determination of the structure] An  ''interaction partition'' ${\mathcal P}$ of ${\mathcal A}$ is a pair ${\mathcal P} = (\{{\mathcal A}_1, \dots, {\mathcal A}_{N}\}, Coord)$ consisting of a set of interaction related behavioural descriptions $\{{\mathcal A}_1, \dots, {\mathcal A}_N\}$, also called ''roles'' and a coordination set $Coord$ representing the inner coupling of the roles. $Coord$ is a set of 4-tuple $({\mathcal A}_{1,j}, t_{1,j}, {\mathcal A}_{2,j}, t_{2,j})$, stating that when the transition $t_{1,j}$ is executed, processing a true input character without output in its interaction role ${\mathcal A}_{1,j}$, then the transition $t_{2,j}$ of the interaction role ${\mathcal A}_{2,j}$, which is without input but with a true output character, has to be executed. I also call $Coord$ a ''coordination set of type 1''.

[Determination of the number of roles and assigned characters] The input and output alphabets are combined to $N$ interaction-related pairs: $(I'_1, O'_1), \dots (I'_N, O'_N)$, which represent the starting point of the $N$ roles. Both the input and output components can also be empty. I introduce the index-functions $ind^{(I)}$ and $ind^{(O)}$ to assign each alphabet component of ${\mathcal A}$ its role number $\alpha$ (or $\beta$) and its new index  $j'$ (or $k'$) within its role respectively: $(\alpha, j') = ind^{(I)}(j)$ and $(\beta, k') = ind^{(O)}(k)$. The partial functions $\pi^{(I)}_\alpha:I_{\mathcal A} \rightarrow I_\alpha$ and $\pi^{(O)}_\beta:O_{\mathcal A} \rightarrow O_\beta$ map the characters of the input and output of ${\mathcal A}$ uniquely to the input and output characters of the roles. 

[The set of state values of a role:] The set of state values of a role $Q_{\alpha}$ is the smallest possible subset of $Q_{\mathcal A} \times I_{_\alpha}$ to fulfil the to be constructed transition relation of ${\mathcal A}_\alpha$.

[Inductive splitting of the transitions] Be $t=(i,o,p,q)\in \Delta_{\mathcal A}$ with $i=\epsilon_{I_{\mathcal A}}[c^{(I)}, j] \in I_{\mathcal A}$, $o=\epsilon_{O_{\mathcal A}}[c^{(O)}, k] \in O_{\mathcal A}$, and $p,q\in Q_{\mathcal A}$ and be $(\alpha, j') = ind^{(I)}(j)$ and $(\beta, k') = ind^{(O)}(k)$.

Let further be $i_\alpha = \pi^{(I)}_\alpha(i) = \epsilon_{I_\alpha}[c^{(I)}, j']$ and $o_\beta = \pi^{(O)}_\beta(i) = \epsilon_{O_\beta}[c^{(O)}, k']$ the component-wise reordered input and output characters with respect to the component-wise reordered alphabets.

The beginning of induction starts at the initial state. Initially, the part of the inner state that refers to the input characters is irrelevant and therefore can be arbitrarily chosen. In the induction step, this part is equal to the last input character. To abbreviate the description, I use a $*$ at this position.

In case $\alpha = \beta$ the input and output character are assigned to the same role. Then $(p, *), (q, \epsilon) \in Q_\alpha$ and the transition $t' = (i_\alpha, o_\alpha, (p, *), (q, \epsilon)) \in \Delta_\alpha$.   

In case $\alpha \neq \beta$, input and output take place in two different roles. Then $(p, *), (q, i_\alpha) \in Q_\alpha$, and $(p, *), (q, \epsilon) \in Q_\beta$, as well as the transitions $t_\alpha = (i_\alpha, \epsilon, (p, *), (q, i_\alpha)) \in \Delta_\alpha$ and $t_\beta = (\epsilon, o_\beta, (p, *), (q, \epsilon)) \in \Delta_\beta$. In addition, the coordination relationship applies that after $t_\alpha$ $t_\beta$ is to be executed, i.e. $({\mathcal A}_\alpha, t_\alpha, {\mathcal A}_\beta, t_\beta) \in Coord_{\mathcal P}$.

$Coord$ contains all data corresponding to the decomposed transitions. This means that the information about the coordination of the roles is, in fact, the information about the cross-role transitions. 

\end{definition}
 
With this notion of an interaction partition, we can now define a system as being interactive: 

\begin{definition} \label{def_interaktives_verhalten}
Be ${\mathcal A}$ a behavioural description of a system ${\mathcal S}$ according to Def. \ref{def_IO-transition_system_mit_epsilon} and ${\mathcal P}$ an interaction partition of ${\mathcal A}$ with at least 2 roles.
${\mathcal A}$ (and thereby ${\mathcal S}$) is ''interactive'' with respect to ${\mathcal P}$, if the transition relation of each role of ${\mathcal P}$ is nondeterministic and stateful and the coordination set of type 1 $Coord$ is non-empty. 
\end{definition}

\subsection{Decisions} \label{ss_entscheidungen}
Obviously, in a nondeterministic interaction, the interaction does not (completely) determine the actions of the systems involved. The processing of the received characters is therefore not completely determined by the interaction itself. A second approach for the completion of the role description to a system function, beside coordination, is the concept of decisions: decisions determine the behaviour, i.e. the transitions, where it would otherwise be indeterminate. 

According to the two mechanisms that give rise to nondeterminism of transitions, we can distinguish two classes of decisions: spontaneous decisions that determine the spontaneous transitions and selection decisions that determine a selection. 

Decisions in this sense can therefore be seen as a further, ''inner'' input alphabet $D$, which complements the input alphabet $I$ of an I/O-TS ${\mathcal A}$ according to Def. \ref{def_IO-transition_system_mit_epsilon} to $I' = I\times D$ such that a complemented transition relation $\Delta'$ becomes deterministic.

Decisions in this sense are very similar to information. They are enumerated by an alphabet and their names are relevant only for their distinction. In contrast to ordinary input characters, whose main characteristic is to appear in other output alphabets and that are allowed to appear in different transitions, we name all decisions of a corresponding transition system differently and different from all input and output characters, so that we can be sure that they really do determine all transitions.	

\begin{definition} \label{def_decision} 
Be ${\mathcal A}$ an I/O-TS and $D$ an alphabet. We call the transition system ${\mathcal A}'$ a ''decision system'' to ${\mathcal A}$ and the elements of $D$ ''decisions'', if $I\cap D = \emptyset$, $O\cap D = \emptyset$, $Q\cap D = \emptyset$,  and $\Delta' \subseteq (I_{\mathcal A}^\epsilon \times D) \times O_{\mathcal A}^\epsilon \times Q_{\mathcal A}\times Q_{\mathcal A}$ with $((i, d), o, p, q) \in \Delta'$ if $(i, o, p, q)\in \Delta$ and for $d$ applies:

\[
  d=\begin{cases}
    \epsilon, & \text{if there's no further transition $(i^*, o^*, p^*, q^*)\in \Delta$}\\
              & \text{with $(i,p) = (i^*, p^*)$.}\\
    \mbox{so selected} & \text{that $\Delta'$ is deterministic, i.e. $\Delta'$ determines the function}\\
              & \text{$f': I^\epsilon \times D \times Q \rightarrow O^\epsilon \times Q$ with $(o, q) = f'(i, d, p)$.}\\
              & \text{For two transitions $t'_1, t'_2 \in \Delta'$ it holds $t'_1 \neq t'_2 \Rightarrow d_1 \neq d_2$.}\\
              & \text{Additionally, $\Delta'$ is the smallest possible set.}\\
  \end{cases}
\]
\end{definition}

Obviously, the set of decisions for an already deterministic I/O-TS is empty. I.e. a system completely determined in an interaction, which only executes its function, makes no decisions in this sense. We can therefore say that the decision concept serves to describe the interaction partners completely despite incomplete knowledge in the sense of an I/O-mapping. 

In a further step one can now consider how a system comes to its decisions. For example, whether one can calculate the decisions, what role other interactions may play or any heuristics. This is where game theory and computer science meet. 

\subsection{External composition or ''cooperation''}\label{ss_cooperation}
Until now, our starting point still has been the system as a whole. Now we change our perspective and investigate the interaction between the systems as is illustrated in Fig. \ref{fig_system_cooperation}. We now look in more detail into how cooperation in the sense of this article works. What makes cooperation special? Can we view every interaction as a cooperation? 

\begin{figure}[htbp]
  \begin{center}
    \includegraphics[width=8cm]{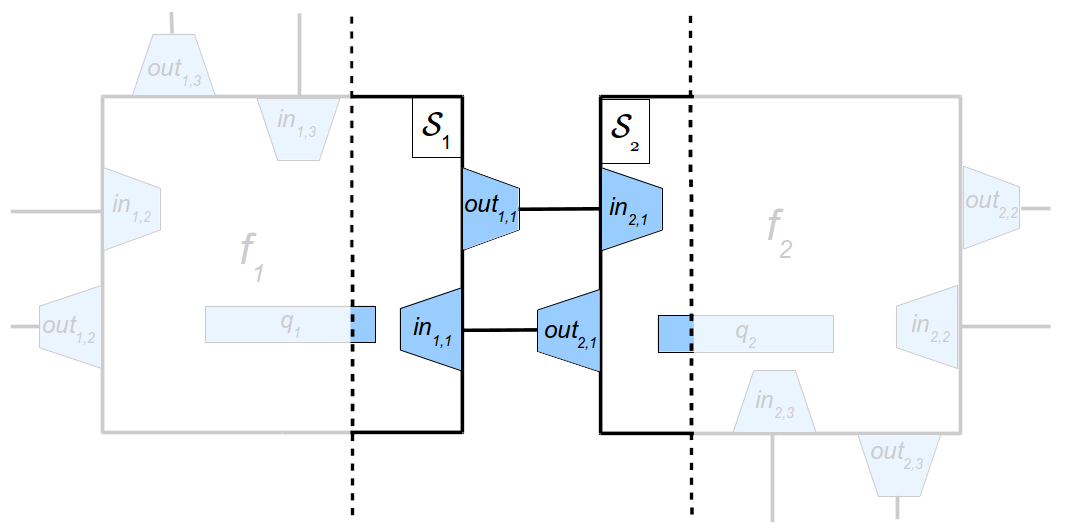}\\
 \end{center}
\caption[]{Two interactive systems ${\mathcal S}_1$ and ${\mathcal S}_2$ interact with each other and with multiple other systems. Within an interaction only the projection of the system functionality on the interaction channels becomes directly visible.} \label{fig_system_cooperation}
\end{figure}

\subsubsection{Outer coupling: channel based restriction (CBR)\label{ss_outer_coupling}}
It is to be expected that exploring our new area, we have to modify our tools, namely the transition system according to Def. \ref{def_IO-transition_system_mit_epsilon}. This is indeed the case.


The first thing we have to note is that in nondeterministic interactions -- in contrast to hierarchical system composition, where the functions of the subsystems naturally composed to functions of the supersystem - there is no longer any genuine purpose. But, as we still want to have a criterion for its success, we have to introduce it, leading us from I/O-TSs to I/O-automata (I/O-A)\footnote{Another name in the literature is ''transducer'', because this machine translates an input string into an output string. See for example \cite{Sakarovitch2010} for an extensive introduction.}

\begin{definition} \label{def_NFIOA} 
An I/O-automaton (I/O-A) is an I/O-TS with an additional acceptance component $Acc$. I.e. an I/O-A is a tuple ${\mathcal A} = (Q, I, O, (q_0, o_0), Acc, \Delta)$ with $Q$ being a nonempty set of state values, $I$ and $O$ are possibly empty input and output alphabets,  $(q_0, o_0)$ is the initial state, $Acc$ is the acceptance component and $\Delta \subseteq I^\epsilon\times O^\epsilon \times Q \times Q$ is the transition relation. If $Q$, $I$ and $O$ are finite, then we call the automaton also finite (fI/O-A).

The acceptance component depends on the success model. For finite calculations with a desired end, $Acc_{finite}$ consists of the set of final state values. For infinite calculations of a finite automaton there are differently structured success models. One of them is the so-called Muller acceptance, where the acceptance component is a set of subsets of the state value set $Q$, i.e. $Acc_{Muller} \subseteq \wp(Q)$. A run is considered accepted whose finite set of infinitely often traversed state values is an element of the acceptance component (e.g. \cite{DBLP:conf/dagstuhl/Farwer01}).
\end{definition}

In the case that for every $(p, i) \in Q \times I^\epsilon$ at most one transition $(i, o, p, q) \in \Delta$ exists, $\Delta$ specifies a function $\delta: Q \times I^\epsilon \rightarrow Q \times O^\epsilon$ with $(q,o) = \delta(p, i)$. If no $\epsilon$ is used, i.e. $\delta: Q \times I \rightarrow Q \times O$, then we have a deterministic I/O automaton (dI/O-A). A deterministic finite I/O-A is also called a Mealy automaton \cite{Mealy1955} in the literature.

The next point we have to consider is the actual coupling mechanism. In hierarchical system composition the fact that we never left the context of our system of consideration, the created supersystem, implied the possibility to eliminate the interaction mediating states. But now, we deal with the coupling of at least partly independent systems. For the interaction of interactive systems, the coupling states --- or Shannon channels --- take a special significance as they restrict the selectable transitions of the receiver accordingly. I call this the ''channel-based restriction (CBR)'' of the receiver's transition relation (See fig. \ref{fig_coupling_by_interaction}). In the case of in- and output with multiple components between two systems ${\mathcal U}$ and ${\mathcal V}$, a Shannon channel is characterised by two indices $(k_{\mathcal U},l_{\mathcal V})$, where the first indexes the output component of ${\mathcal U}$ and the second the input component of ${\mathcal V}$. 
We must take this -- external -- coupling into account and have to add a set of Shannon channels to our automata structure. I name an I/O-A together with a set of Shannon channels a CBR-automaton, or CBR-A.

We also have to take into account the new possibility of entirely spontaneous state transitions. If the automaton offers a spontaneous transition, this can be processed instead of a current input character. This means that used as an input character, the empty character is not offered from the outside, but marks transitions that can take place independently of external inputs. 

To consider these two new requirements, we have to adapt our calculation rule of Def. \ref{def_MIS_execution} for our CBR-As in the selection procedure of the character $i^*$ to be currently processed. 

\begin{definition} \label{def_CBR_execution}
Let ${\mathcal P} = ({\mathcal A}, C)$ a CBR-A, consisting of an I/O-A  ${\mathcal A}$ with $I_{\mathcal A} = (I_1, \dots, I_{n_I})$ and $O_{\mathcal A} = (O_1, \dots, O_{n_O}) $ and a set ${\mathcal C}$ of coupling signals that connect different output components with input components of ${\mathcal A}$. Furthermore, be $seq = (i_1, \dots, i_n)$ a (possibly infinite) sequence of input characters from $I_{\mathcal A}$ from the ''outside'', which are processed in sequence and are all of the form $\epsilon[v,k]$, i.e. they are  unequal to the empty character $\epsilon$ in exactly one component.

The variable $j = 0\dots n-1$ counts the sequence of the external input character. Due to the possibility of spontaneous transitions as well as the feedback of self-generated output the (not counted) number of the executed transitions can deviate from $j$.
The current values of $i$, $o$ and $q$ are indicated by a '$*$', the values calculated in the current step by a '$+$'.

\begin{enumerate} 
\item {\bf Initialisation (time $j=0$):} $(q^*, o^*) = (q_0, o_0)_{\mathcal A}$.

\item {\bf Loop:} \label{calc_IOA_loop} Determine for the current state $q^*$ as start state the set of all possible transitions. If this set is empty, end the calculation.

\item {\bf Determine the current input character $i^* \in I_{\mathcal A}^\epsilon$:} \label{destine_input_marks} Proceed after the following sequence:
\begin{enumerate}
  \item If the current output character has the value $v\neq\epsilon$ in its $k$-th component, i.e. $o^*=\epsilon[v,k]$, and $o^*$ is part of a feedback signal $c=(k,l)$ to the input component $0 \leq l \leq n_I$, then set $i^* = \epsilon_I[v,l]$.
  \item If there is no input character $i_j$ at the $j$-th time, but there is a spontaneous transition for $q^*$, select $i^* = \epsilon$ as the current input character.
  \item If there is an input character $i_j$ and a spontaneous transition to $q^*$ exists, you can choose between $i_j$ and $\epsilon$ for $i^*$. 
  \item If there is only an external input character without a spontaneous transition to $q^*$, $i^* = i_j$ must be selected. 
  \item If there is neither an external input character nor a spontaneous transition, then finish the calculation.
\end{enumerate}

\item {\bf Transition:} \label{calc_IOA_transition} With $q^*$ as current state value and $i^*$ as current input character select a transition $t=(i^*, o^+, q^*, q^+)\in\Delta_{\mathcal A}$ and so determine $o^+$ and $q^+$. If there is no possible transition at this point, terminate the calculation with an error.

\item {\bf Repetition:} If an external input character was chosen, go over to the next (external) time $j+1$. Set $q^* = q^+$ and $o^* = o^+$ and jump back to \ref{calc_IOA_loop}
\end{enumerate}
\end{definition}

In the deterministic case, an input sequence is considered accepted if the acceptance condition for the calculated run is fulfilled. In the nondeterministic case, an input sequence is considered accepted if the acceptance condition is fulfilled for at least one of the runs possible with this input sequence.  

\subsubsection{Protocols}\label{ss_protocols}
Now we want to describe the interaction of several interactive systems through their roles. In order to express this system coupling with our concept of the CBR-A, we have to consider all roles together as a product automaton in a first step and then restrict their transition relation by enforcing the coupling Shannon channels. 

I call the product of the I/O-As of all roles also a ''weakly synchronised product automata'' as this product automaton only realises the ''joint consideration'' of all interaction components as a whole in the sense of the reaction to a common input character, a synchronisation of the times steps, the production of a common output character and a common acceptance condition.

\begin{definition}\label{def_weakly_synchronized_product} 
The {\it weakly synchronised product} of a set of $n$ I/O-As $\{{\cal A}_1, \dots, {\cal A}_n\}$ is defined by the I/O-A ${\cal B} = (Q, I, O, (q_0, o_0), Acc, \Delta)_{\cal B}$, with 
\begin{itemize}
\item $Q_{\cal B} = \mbox{\Large $\times$} Q_k$.
\item $I_{\cal B} = \{i\in\mbox{\Large $\times$} I^\epsilon_k \,|\, \text{if}\, \epsilon\neq i_j \in I_j$ is the $j$-th component of $i$, then $i=\epsilon[i_j, j]$\}.
\item $O_{\cal B}$ corresponding to $I_{\cal B}$.
\item ${q_0}_{\cal B} = ({q_0}_1, \dots, {q_0}_n)$, ${o_0}_{\cal B} = \epsilon$.
\item The common acceptance condition is the AND-combination of the single conditions, relating to the product set of the single acceptance components: $Acc_{\mathcal B} = \mbox{\Large $\times$} Acc_k$.
\item $\Delta_{\cal B} = \{(p, q, i, o)|$ $p=(p_1, \dots, p_n)$ is an achievable state value and  ${\cal A}_k$ has a transition $(p_k, q_k, i_k, o_k)$ so that $q = p \left[\frac{q_k}{p_k}, k\right]$ and $i =\epsilon[i_k, k]$ and $o = \epsilon[o_k, k]$ $\}$. 
\end{itemize}
I also write ${\cal B} = \bigotimes_{i=1}^n {\cal A}_i$.
\end{definition}

This construction ensures that each output character of the product automaton is again in at most one component unequal to $\epsilon$.

Adding Shannon channels creates transition chains that are triggered either by external characters or spontaneously. The most interesting case, where the CBR-A no longer processes a character ''from the outside'' at all, leads us to the definition of a protocol in the sense of a closed interaction:  

\begin{definition}
A CBR-A in which all input and output states are connected via Shannon channels is called ''{\it closed}'', otherwise it is called ''{\it open}''. A closed CBR-A is also called a ''{\it protocol}''. 
\end{definition}

Thus a protocol is the desired description of the nondeterministic, asynchronous and stateful interaction of different interactive systems, which are only visible through their roles. 

Due to the closed nature of the protocol, each output character is also an input character. As long as the transition relation of the receiving roles provides an output character to an input character, the calculation of a run automatically proceeds to its next step. I call these sections of a protocol run, which are started by spontaneous transitions and are ended by transitions without output characters, a ''{\it chain of interaction}''.

It is not clear from the beginning that a calculation of a protocol run produces something useful. In fact, terrible things can happen: characters are sent that cannot be received (ill-formed). Unwanted states can occur such as when everyone is waiting for someone to do something (deadlock). Or unwanted infinitely long inner chains or endless loops (livelock) occur. I therefore define: 

\begin{definition}
A protocol is called ''{\it well-formed}'' if for each transition with a sent character $o$, which is unequal to $\epsilon$ in at least one component, a corresponding receiving transition exists. Otherwise it is called ''{\it ill-formed}''.
\end{definition}
        
This is a typical safety property in the sense that something undesirable will never happen. 

\begin{definition}
A well-formed protocol is called ''{\it interruptible}'' when every chain of interaction is finite. 
\end{definition}

\begin{definition}
A well-formed, interruptible protocol is called ''{\it consistent}'', if for each attainable state value the acceptance condition is either fulfilled or at least a continuation exists, so that the acceptance condition can be fulfilled.
\end{definition}

This is a typical liveness property in the sense that something desired will eventually happen.

According to the definition, a consistent protocol neither has deadlocks, i.e. non-final states without continuation, nor livelocks in the sense of periodically reached states without continuation according to the acceptance condition.  

Finally, I would like to point out that protocols and games are very closely related to each other by the already introduced concept of decision. The protocol roles can be complemented with decisions. And a decision automaton corresponding to a consistent protocol can be regarded as a ''game in interaction form'' in the sense of the game theory \cite{Reich2020_A_Theory_of_Interaction_Semantics}.

\subsubsection{Example: The single-track railway bridge}
Now I want to illustrate the protocol concept with the simple example of the single-track railway bridge drawn from \cite{Alur2015_Principles}, a simple version of the general problem of a shared single resource \cite{Holzmann1991}. As is shown in Fig. \ref{fig_trains}, two trains, ${\mathcal Z}_1$ and ${\mathcal Z}_2$, must share the common resource of a single-track railway bridge. For this purpose, both trains interact with a common controller ${\cal C}$, which must ensure that there is no more than one train on the bridge at any one time.

\begin{figure}[htbp]
\centering
\includegraphics[width=8cm]{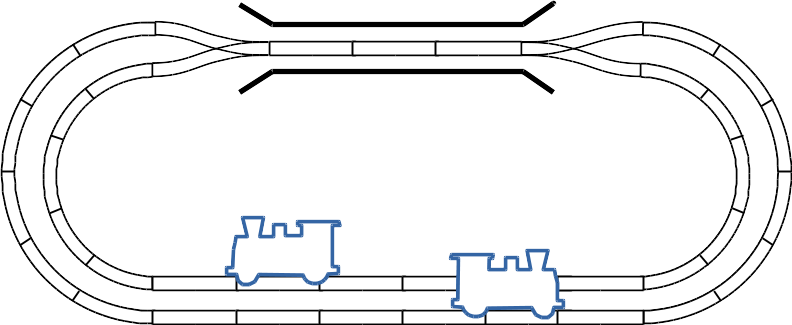} 
\caption{A single-track railway bridge crossed by two trains. To avoid a collision on the bridge, both trains interact with a central controller.}
\label{fig_trains}
\end{figure}

The interaction between each train and the controller is described by a protocol. For this we need to describe both the train and the controller in terms of the role they play in the interaction. For both trains and the controller I choose a model of 3 state values, which I call $Q_{Z_{1,2}/C}=\{away, wait, bridge\}$. The input alphabet of the trains $I_{{\cal Z}_{1,2}} = \{go\}$ is the output alphabet of the controller $O_{\cal C}$ and the output alphabet of the trains $O_{{\cal Z}_{1,2}} = \{arrived, left\}$ is the input alphabet of the controller $I_{\cal C}$. 

\begin{figure}[h!]
\centering
\includegraphics[width=8cm]{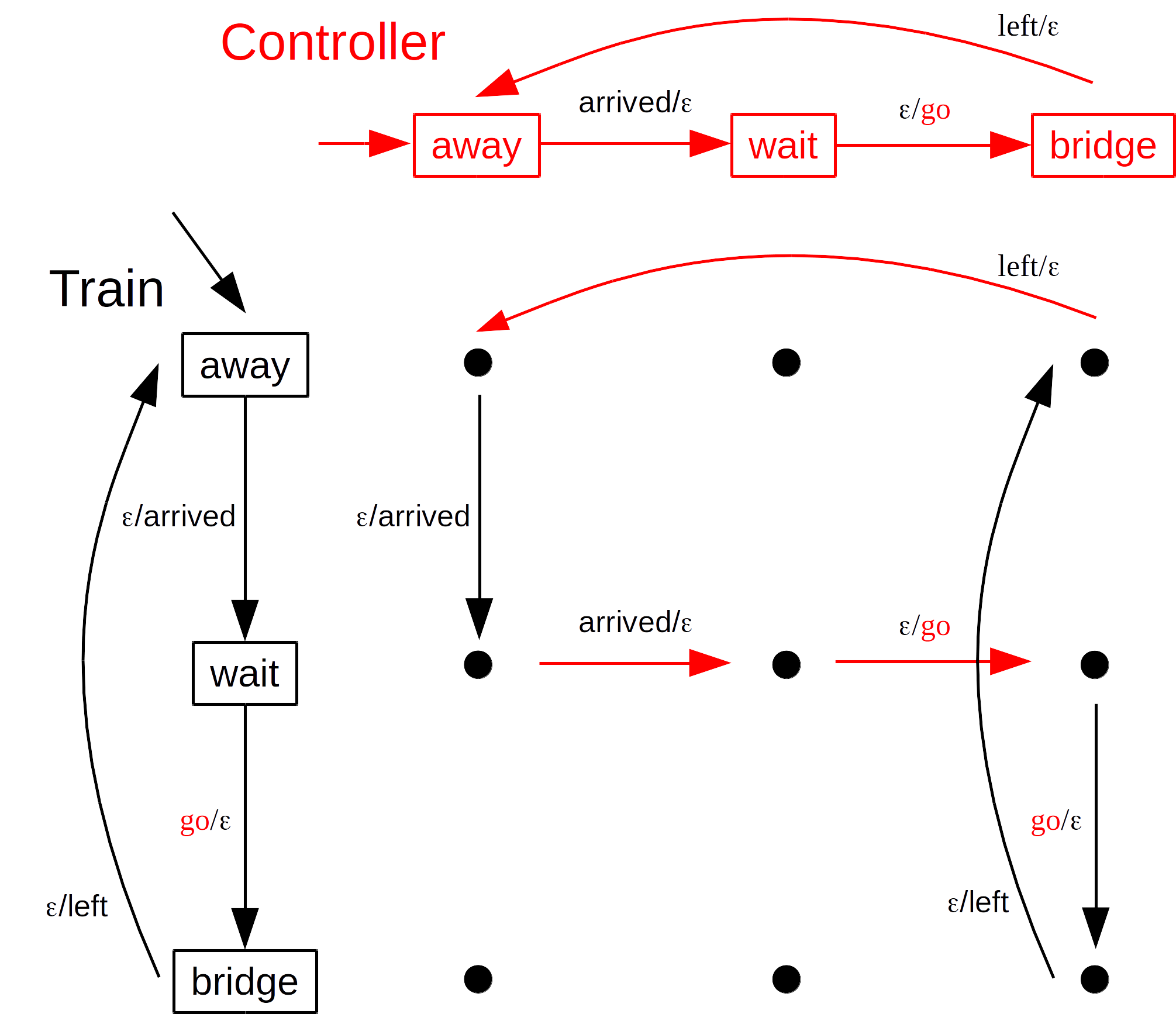}
\caption{Presentation of the protocol between train and controller for the problem of the single-track railway bridge.  Initially, both controller and train are in the $away$ state. When a train arrives, it signals $arrived$ to the controller. As a CBR automaton this character must now be processed by the controller, the controller in turn changes to its $wait$ state. The controller releases the track with $go$ and the train signals the controller with $away$ that it has left the bridge again.  }
\label{fig_trains_protocol}
\end{figure}

In Fig. \ref{fig_trains_protocol} the protocol is shown as a CBR-A of the two roles of train and controller, each in one dimension. Initially, both controller and the train are in the state $away$. When a train arrives, it signals $arrived$ to the controller. The controller releases the track with $go$ and the train signals the controller with $away$ that it has left the bridge again. The interaction is successful when both the train and the controller go through their three states infinitely often. 

The protocol between train and controller has all the characteristics we want. It is complete as no further external characters occur. It is well formed as for each sent character there is a processing transition at the right time. And finally, it is consistent as it has only finite interaction chains and it fulfils its acceptance condition. 

The correctness, we could also say the truth, of the representation of the state of the train in the controller depends on the correctness of the protocol. And, this train-controller protocol does not guarantee our original goal that at most one train uses the bridge at a time. This requires correct coordination of the controller.

\subsection{Internal composition or ''coordination''} 
Now I want to explore the second, the ''inner'' coupling of different roles of one system into that system, which I call ''{\it coordination}'' and which I already illustrated schematically in Fig. \ref{fig_system_inner_coupling}. This will also result in a method for synthesising interactive systems from their different roles.
And it will also pave the way for our considerations on the compositional behaviour of consistent protocols, or, in other words, how interactions can be composed into new interactions.

\subsubsection{Inner coupling: coordination as transition elimination} \label{ss_composition_of_interactions}
With the concept of interactive systems we already explored a coordination concept by partitioning a given MIS into its roles, namely the interaction partition together with the coordination set of type 1. This coordination set of type 1 contained the information about the cross-role transitions of the original system.

The concept of role coupling I will present now works differently and starts from the combination of all roles of one system as their weakly synchronised product automaton, still having all possible transition of the unrestricted product.
Then we impose our coordination by dropping all the transitions which do not contribute or even jeopardise the coordination goal, but leave the success criteria of the interactions invariant. I also call the involved roles the ''coordinated roles'' and the remaining transitions of the product automaton the ''coordination set of type 2'' and the omitted transitions the complement.

Transitions from a reachable state with a given input character can only be eliminated if other transitions to with this input character exist from that state. Otherwise, we would risk to ill-form the protocol. The same holds for spontaneous transitions without input character. Thus, as also the example in the next section will show, the nondeterminism of the interaction and especially the spontaneous transitions are of special importance for the creation of the inner coordination from the transition relations of the individual roles.

However, it is essential for the creation of coordination through transition elimination that even under the restricted transition relation of the product automaton the acceptance conditions of the individual roles, which continue to result as projections of the product automaton, can be fulfilled. In order to formalise these concepts I first introduce the projection of an I/O-A.

\begin{definition} \label{def_projected_automaton}
Be ${\mathcal B}$ an I/O-A and $\pi = (\pi_Q, \pi_I, \pi_O): Q_{\mathcal B}^\epsilon \times I_{\mathcal B}^\epsilon\times O_{\mathcal B}^\epsilon \rightarrow Q_{\mathcal B}^\epsilon \times I_{\mathcal B}^\epsilon \times O_{\mathcal B}^\epsilon$ a projection function.
Then the projected I/O-A ${\mathcal A} = \pi({\mathcal B})$ is determined by $Q_{\mathcal A} = \pi_Q(Q_{\mathcal B})$, $I_{\mathcal A} = \pi_I(I_{\mathcal B})$, $O_{\mathcal A} = \pi_O(O_{\mathcal B})$, ${q_0}_{\mathcal A} = \pi_Q({q_0}_{\mathcal B})$, $Acc_{\mathcal A} = \pi_Q(Acc_{\mathcal B})$, $\Delta_{\mathcal A} = \{(p',q',i',o')| (p, q, i, o) \in \Delta_{\mathcal B} \enspace\mbox{and}\enspace p'=\pi_Q(p),\enspace q'=\pi_Q(q),\enspace i'=\pi_I(i),\enspace o'=\pi_O(o)\}$.
\end{definition}

The projection of the condition of acceptance is to be understood such that the set on which the condition of acceptance is based is projected accordingly. Thus I introduce a ''coordinated automaton (CA)'', which differs from the original weakly synchronised product automaton only with respect to its transition relation, which is a true subset of the original one, but still fulfils ${\mathcal A}$'s acceptance condition. Please note that we have to include not just the coordinated roles, but the complete protocols into this definition, as we needed to evaluate the success criterion for the interactions for the correctness of our construction. Formally:  

\begin{definition} \label{def_coordinating_automaton}
Let $P=\{{\mathcal P}_1, \dots, {\mathcal P}_n\}$ be a set of $n$ consistent protocols, where the protocol ${\mathcal P}_i$ contains the stateful role ${\mathcal A}_i$.  
Let ${\mathcal A}= \bigotimes_{i=1}^n {\mathcal A}_i$ be a weakly synchronised I/O-A with the $n$ roles $\{{\mathcal A}_1,\dots, {\mathcal A}_n\}$.
$Z\subset \Delta_{\mathcal A}$ is the non-empty set of excluded transitions with at least one transition originated from each role transition relation $\Delta_i$.
And $\{\pi_1, \dots \pi_n\}$ are $n$ projection functions that map $\mathcal A$ to the respective roles ${\mathcal A}_i$.
Then ${\mathcal C}$ is a ''{\it coordinating automaton (CA)}'' to ${\mathcal A}$ with respect to the transition relation $\Delta_{\mathcal C} = \Delta_{\mathcal A}\backslash Z$ with nonempty $Z$ if otherwise $Q_{\mathcal C} = Q_{\mathcal A}$, $I_{\mathcal C} = I_{\mathcal A}$, $O_{\mathcal C} = O_{\mathcal A}$, $Acc_{\mathcal C} = Acc_{\mathcal A}$ and all roles ${\mathcal A}_i$ are preserved as a projection of ${\mathcal C}$, i.e. $\pi_i({\mathcal C}) = {\mathcal A}_i$, and the collective acceptance condition is also fulfilled by ${\mathcal C}$.
\end{definition}

The coordinating automaton (CA) used for the internal coupling has inputs and outputs that do not refer to each other, but serve the other roles of the protocols under consideration - in contrast to the CBR automaton that represented the external coupling.

Since the CA by definition leaves the properties of the subprotocols invariant, the overall interaction it controls is again a consistent protocol.

\begin{proposition} \label{prop_protocol_composition}
Let $P=\{{\mathcal P}_1, \dots, {\mathcal P}_n\}$ be a set of $n$ consistent protocols, where the protocol ${\mathcal P}_i$ contains the stateful role ${\mathcal A}_i$ and let ${\mathcal C}$ be a coordinating automaton (CA) to ${\mathcal A}= \bigotimes_{i=1}^n {\mathcal A}_i$ such that $\Delta_{\mathcal C} \subset \Delta_{\mathcal A}$. Then the CBR-A ${\mathcal P}$, consisting of the complex role ${\mathcal C}$ together with all remaining roles of the protocols ${\mathcal P}_i$ and channels is again a consistent protocol.  
\end{proposition}

\begin{proof}
Being a protocol, each of the ${\mathcal P}_i$ is a product IO-A together with a set of Shannon channels. All these protocols are independent and in particular they do not share any Shannon channels, that is their Shannon channel sets are pairwise disjoint.

Taking one role out of each protocol and creating a product automaton where we restrict the transition relation such that the CA ${\mathcal C}$ results, means that we then have again an I/O-A with the roles ${\mathcal A}_i$ which still fulfils all the acceptance conditions of its roles ${\mathcal A}_i$. Hence, together with the set of Shannon channels of all protocols ${\mathcal P}_i$, this is again a CBR-A and it is closed, which makes it a protocol, and it is also consistent, as the acceptance conditions of all roles are still fulfilled.  
\end{proof}

In fact, the procedure described for coordinating different interactions provides guidance on how to synthesize systems from their interaction roles (to a certain extent). This becomes clear in the next example of the controller for the operation of the single-track railway.

\subsubsection{The relation between coordination sets of type 1 and 2}

Now we can  have a closer look at the relationship between the coordination sets of type 1 and 2. The coordination set of type 1 contained the information of all transitions where the input and output of the transition of the original MIS were attributed to different roles. As its starting point was a full system, namely an MIS, it could not contain any nondeterminism. 

In contrast, the coordination sets of type 2, introduced in this section, still leaves room for nondeterminism. Thus we can express all coordination sets of type 1 as type 2 but not vice versa. The definition of a CA (\ref{def_coordinating_automaton}) therefore can be viewed as an extension of the definition of an interactive system (\ref{def_interaktives_verhalten}). All roles in a CA are stateful per requirement and must be nondeterministic, as otherwise no transition can be eliminated that originates from this role, which is required by the coordination set of type 2. 

To get a fully qualified system with a general CA, we must complement its remaining nondeterminism by decisions. Thus I state the following proposition:

\begin{proposition}
A system whose behaviour can be described by a decision automaton of a CA is interactive. 
\end{proposition}

\begin{proof}
To prove this proposition, we have to show that for such a system we can find an interaction partition with at least two nondeterministic and stateful roles together with a non-empty coordination set of type 1. The interaction partition is given by the roles of the CA. It remains to be shown that we can construct a coordination set of type 1 from the given coordination set of type 2 together with the decisions of the decision automaton. I leave this as an exercise. 
\end{proof}

\subsubsection{Example: Controller for the operation of the single-track railway bridge}
The controller, which coordinates the movements of the trains across the single-track bridge, is to be constructed from its roles of the respective interactions with the individual trains. This is shown in Fig. \ref{fig_trains_state_chart_controller}. 

\begin{figure}[htbp]
\centering
\includegraphics[width=8cm]{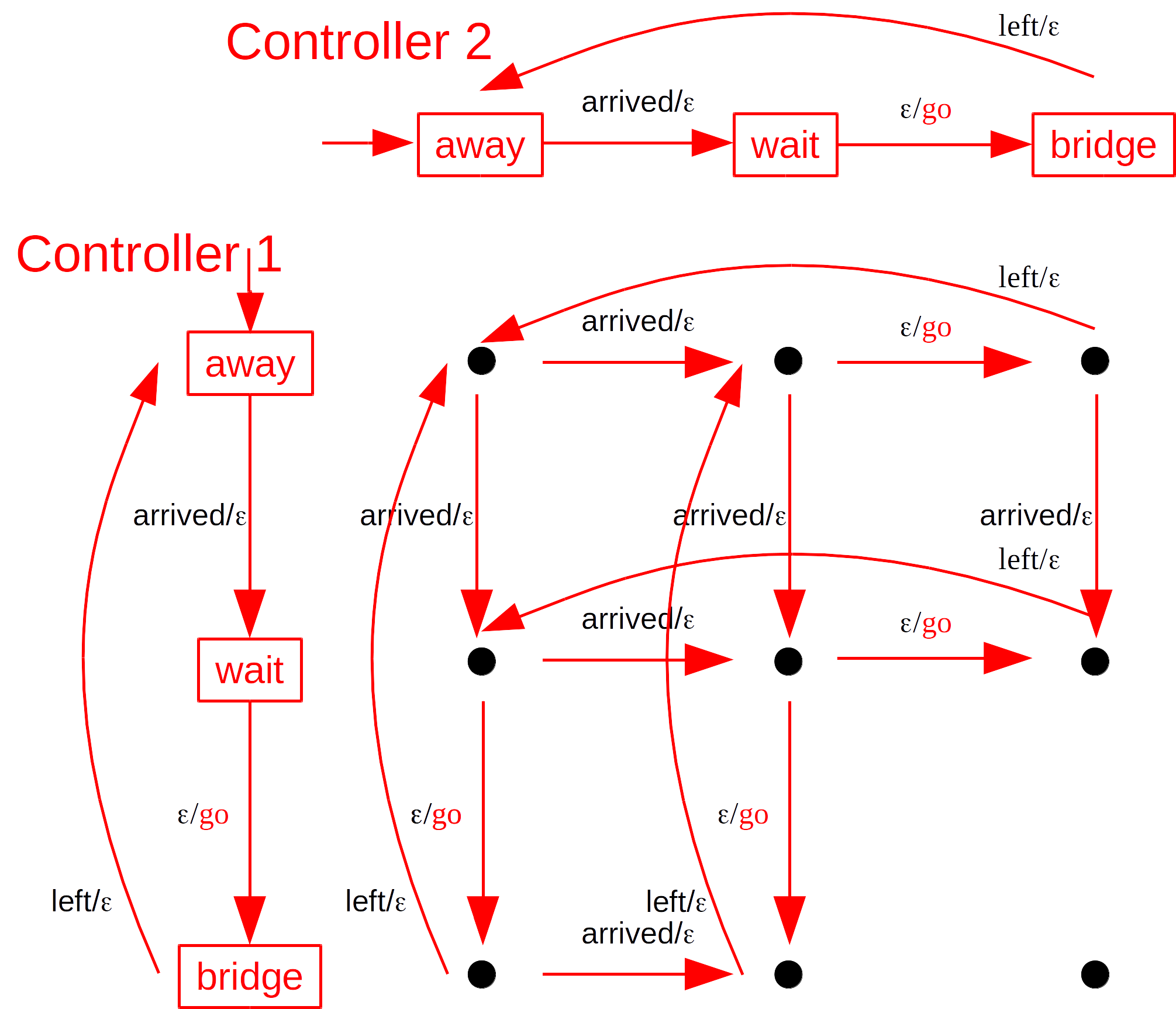} 
\caption{State diagram of the central controller that's supposed to coordinate two trains. It shows the inner coupling between two roles of the controller by elimination of the transitions towards the state $(bridge, bridge)$. Please compare it to Fig. \ref{fig_trains_protocol}, which shows in contrast the protocol between train and controller.}
\label{fig_trains_state_chart_controller}
\end{figure}

The coordination rule is, that both trains are prohibited to be on the bridge at the same time. This means that the internal state of the controller must not take the value $(bridge, bridge)$. Obviously, the state transitions that contain the $(bridge, bridge)$ state as the target state must be eliminated. This is possible without ill-forming the protocol, since these are only spontaneous transitions. 

Please note that the internal representation of the controller $(bridge, bridge)$ represents the fact that both trains are on the bridge only if the interaction between controller and bridge actually ensures that this is the case. 

In Fig. \ref{fig_trains_state_chart_controller} the controller in the state $(wait, wait)$ has to make a selection decision as to which of the two trains it gives permission to continue. Thus, the controller is not ''fully coordinating''. Removing either $(wait, wait) \rightarrow (wait, bridge)$ or $(wait, wait) \rightarrow (bridge, wait)$ breaks a symmetry and creates an automatic preference of one train over the other in case both trains are waiting to cross the bridge. This could lead to one of the trains never being let over the bridge - a starvation/fairness problem. 

The controller must make a decision. The question is how to calculate it. One possibility would be to introduce an additional third state indicating which train is allowed to enter the bridge. Its state value could be alternating to guarantee fairness or it could be set to a new random value, each time it is queried.

This example illustrates very well that nondeterminism in an interaction is important in order to coordinate it with other interactions. Additional interactions could restrict existing nondeterminism by putting on additional coordination constraints - but it could also increase the opportunities to take decisions. In the end, the remaining nondeterminism within a coordinated interaction has to be determined by decisions. Symmetry in an interaction can be broken by rules in the sense of coordination, but this can lead to the loss of important properties of the overall interaction, such as non-starvation.

\subsection{Snowflake topology of the protocol composition}

\begin{figure}[htbp]
  \begin{center}
    \includegraphics[width=8cm]{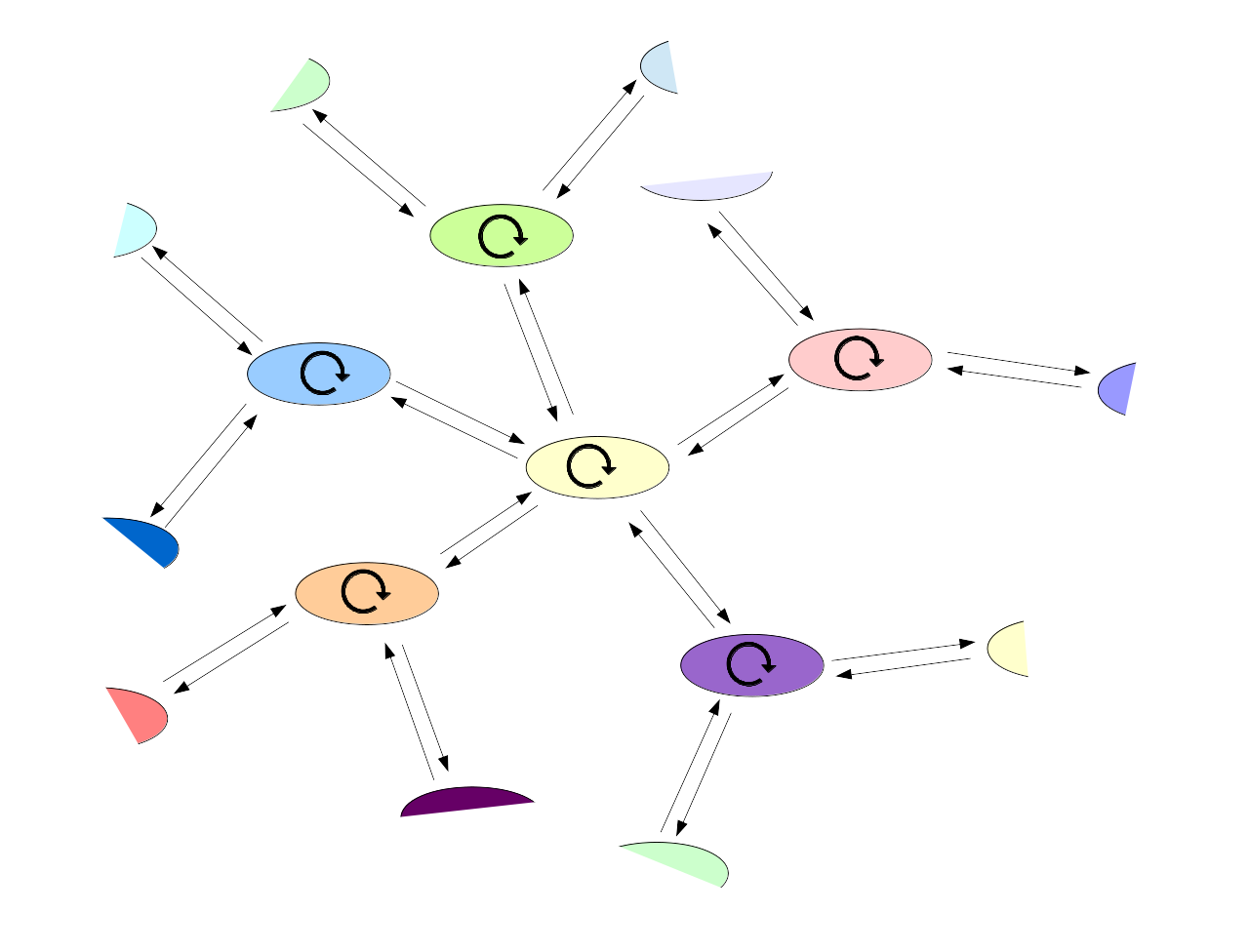}\\
 \end{center}
\caption[]{Snowflake topology of the compositional relationship for consistent protocols. The coordinating roles are drawn as complete ellipses with a circular arrow and represent interactive systems, while the marginal roles are represented as ellipse sections to emphasise the projection relationship of the roles to the systems they represent.} \label{fig_star_like_topology}
\end{figure}
 
In section \ref{ss_composition_of_interactions} I have stated with proposition \ref{prop_protocol_composition} a simple composition rule for protocols: If we select two roles from two different consistent protocols and coordinate these selected roles with each other in such a way that the respective acceptance conditions remain fulfilled, then this composite structure again results in a consistent protocol.

This rule implies a snowflake topology without any cycles, illustrated in Fig. \ref{fig_star_like_topology}

What happens if we abandon the snowflake topology and also allow other roles to coordinate with each other? In fact, the nondeterminism of participants can be lost as can be demonstrated with our example of the railway controller. As long as the trains act completely independently of each other, it can happen that both trains stand in front of the bridge at the same time. The controller then has to decide which of the two to let pass --- which is the justification of its existence in the first place. If, however, the trains coordinate themselves beforehand in the sense that they already ensure that only one train arrives at a time to cross the bridge, they actually eliminate $(wait, wait)$ as an attainable state of the controller and thereby effectively eliminate its decision.   

As you can see, this inner coupling of both trains also ''solves'' the actual problem that both trains do not come together on the bridge --- only now not loosely coupled via a well-defined protocol, but via an inner coupling mechanism, which, if one wanted to realise it, would again have to be implemented via a train-train interaction. 

Thus, a snowflake topology is a sufficient but not a necessary condition for retaining the consistency of protocols under invariant acceptance condition composition. In reality, we usually do not have this guarantee. But it is interesting to see the importance of the assumption of the interaction partners to act independently.

\begin{figure}[h!]
  \begin{center}
    \includegraphics[width=6cm]{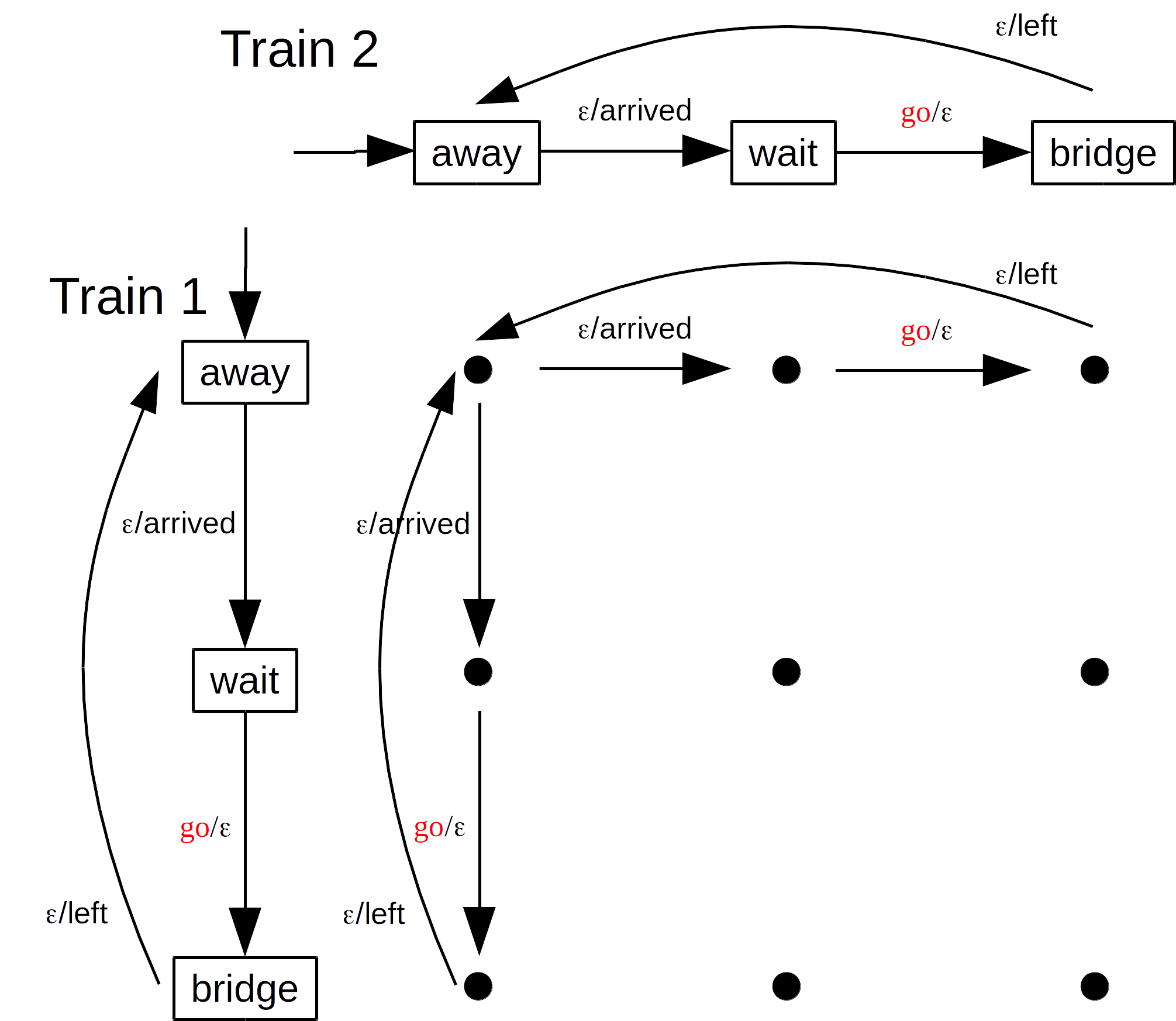}\\
 \end{center}
\caption[]{Coordination of two trains to avoid simultaneous arrival at the bridge. To this end, the transitions leading to the common state $(wait, wait)$ must be avoided. Thus, both trains have to coordinate their arrival at the bridge.} \label{fig_trains_coordination}
\end{figure}

\subsection{Interaction-oriented architecture of interactive systems}
When building applications, software engineering is exposed to the tension of describing systems via functions on the one hand, but on the other hand only being able to integrate these systems into the desired interaction networks via non-deterministic, stateful interactions. If one ignores this tension and actually describes interactive systems with an explicitly formulated system function, then one unfortunately loses any guarantee that small changes in individual interactions will also result in only small changes in the structure of the application.

This leads to the obvious requirement to find a balance by an ''{\it interaction-oriented architecture}'' here. I think that the  partitioned structure of an interactive system with its three parts, one that describes its interactions, a second part that describes its coordination and eventually a third part that describes its decisions does provide such a balance.

Given such a structure, the effort required to change the system description in case of changes in the interactions should be kept to a minimum by not endangering the consistency guarantees for the other roles of an interactive system when a role is modified.   

Since the freedom in the roles, i.e. the possibility of spontaneous and selection decisions, plays an essential role for their coordination possibilities, the protocol design aims at designing the protocol roles such that they can be integrated as versatile as possible. 

Please note that a deterministic transition relation is not necessary for automatic executability. It is sufficient, if no more selection decisions are to be made, i.e. there must be exactly one receiving transition from each attainable state to an input character or otherwise at most one spontaneous transition. The then still necessary spontaneous decision can be regarded as decisions ''to become active''. I therefore define:

\begin{definition}
An I/O-TS (as well as its transition relation), which has at most one transition from each attainable state for each possible input as well as to the empty character, is called executable ''{\it without alternative}''.
\end{definition}

\begin{definition}
A CA which is executable without alternative is called a ''{\it fully coordinating automaton}'' (fCA) or also ''process''.
\end{definition}

\begin{proposition}
A fCA whose input alphabet has been extended by a signal with the input alphabet $I_{tic} = \{tic\}$ that triggers all spontaneous transitions is an MIS.
\end{proposition}

\begin{proof}
To show this, we need to specify the corresponding system function of the fCA that is extended by $I_{tic}$ as described. Since the fCA has no other nondeterministic structures beyond its spontaneous transitions, this is possible 
\end{proof}

An obvious and very interesting research question, which I will not pursue further here, is,  under which conditions such an fCA exists and can be constructed effectively.

A further, in my opinion also very interesting research question refers to the obvious and also intended similarity of the introduced role concept with the role concept of colloquial language and also of sociology as for example elaborated by Erwin Goffman \cite{Goffman1959}. The question is whether this mechanisms of flexibilisation can also be found in nature. It could be imagined that nervous structures in biological organisms correspond to role representations and internal coordination mechanisms. It is obvious that humans are characterised by the ability to combine many different interactions and to adapt to considerable changes. Some of us are father, employee, son, husband, works council member, researcher, etc. at the same time. If a new task is added or one is omitted, this is usually possible with little effort, provided there are sufficient degrees of freedom.  

\subsubsection{Why explicitly specified processes can become very inflexible}
If we have created an fCA from its roles by defining the coordination rules and determining the remaining decisions, then we could try to create a traditional process, i.e. an MIS, with an explicit system operation by further summarising all still existing spontaneous transitions and eliminating intermediate states, to completely determine the system, so to speak, like this:
\[p_1 \stackrel{i/o_1}{\rightarrow} p_2 \stackrel{\epsilon/o_2}{\rightarrow} p_3 \, \Rightarrow \, p_1 \stackrel{i/(o_1, o_2)}{\rightarrow} p_3\]

It should first be noted that it is not possible to find an equivalent MIS for every fCA. In particular, the combination of spontaneous transitions with transitions with a true input characters, as shown, may lead to a transition that no longer outputs only a single character, but a sequence of characters (a word).  

But beyond that, complete determination of the transition relation has two further considerable disadvantages: 

\begin{enumerate}
\item The elimination of spontaneous transitions changes the atomicity of the time steps. In an environment that carries out several processes in parallel, this can lead to a change in the behaviour of the entire system, for example in the sense of a starvation. 
\item The projection relation between the restricted product automaton and its constituent role automata is lost. In particular, the acceptance component would have to be changed accordingly if state values were eliminated, since it refers directly to a set of state values. And therefore the invariance of the properties of the roles in their interaction gets lost.
\end{enumerate}

Why is the invariance of roles so important in software engineering? From a synthesis perspective, the problem arises of how strongly changes in interactions affect changes in interacting processes. It is desirable that minor changes in the interactions result in only minor changes in the processes. Especially since complex processes participate in many different interactions, an imbalance in the adaptation effort quickly leads to the inability of process synthesis to keep pace with the change dynamics of the interactions, even if individual interactions change only comparatively rarely and little. 

If one formulates the process functions each time explicitly in the sense of a calculable function, there is no reason to assume that minor changes of the interactions always lead to the fact that the internal structure of these functions, that is the sequence and kind of the computation steps, will also change comparatively little. On the other hand, my conjecture is that in role- or interaction-oriented process synthesis, due to the invariance of all unchanged roles, the change effort is of the same order of magnitude as the change in the roles themselves. Or it grows perhaps, at least in the predominant number of practically relevant cases, only benignly with the number of roles to be coordinated --- an interesting research question. 

The importance of this problem of the inflexibility of explicitly stated processes is also acknowledged by other authors, such as Nirmit Desai, Ashok U. Mallya, Amit K. Chopra and Munindar P. Singh \cite{DBLP:journals/tse/DesaiMCS05}. In my opinion its relevance stretches far beyond computer science. It is the deeper cause why approaches in the field of organisational sciences, such as Lean \cite{RoosWomackJones1991_Machine}, which make exactly this explicit formulation of the business processes a prerequisite for a successful management of process changes, clearly lag behind their promises. So, the suggestion for organisational science would be to think more in terms of interactions or rules then in processes. It seems that the area of compliance takes exactly this direction \cite{IDW_PS_980}.

\subsection{Composition of interactive systems}
In section \ref{ss_def_interaktive_systeme} I had said that interactive systems interact nondeterministically and therefore there is no formation of supersystems by interaction. Now we want to explore the question of how interactive systems compose --- isn't this a contradiction?

We can resolve this contradiction in two quite different ways. On the one hand, we just cannot say that the interaction of interactive systems always leads to a superordinate interactive system. This follows directly from the snowflake topology of the compositional rules for consistent protocols.  A sufficient condition is that  the topology of the network under consideration is without loop. Thus the question whether we can view two mutually interacting interactive systems as a interactive supersystem depends on the additional information on how all their other roles are linked. 

Assuming no further link between the involved systems beside the protocol under consideration the question of composition of interactive systems boils down to whether two CAs that are linked solely by a protocol can again be described as a single CA without the two roles of the protocol - an interesting open question that I pursue not further in this version of this article.

A further resolution of the apparent contradiction to investigate the composition of interactive systems lies in the changed semantics of this composition compared to simple or recursive systems. 
Interactive systems are in general not representable as MIS. In order to be able to assign a system function to interactive systems, I introduced the concept of decisions as a means to describe the interaction partners completely despite incomplete knowledge and to decouple them at the same time in the consideration from their other interactions. 

The question of the composition of interactive systems, i.e. whether the interaction of two interactive systems via a consistent protocol produces another interactive system, is therefore actually the question of the composition of the assigned decision systems and thus the question of whether the decision-making ability of these systems is retained under the composition. 

By modifying the system concept by means of the additional ''inner'' input alphabet of the decisions, the question of the composition of these systems has thus acquired a completely different meaning compared to simple or recursive systems. 

A last remark in this section: In the area of ''Business Process Management (BPM)'' people often speak of ''end-to-end'' processes and usually mean the executable functionality of an MIS in our sense (e.g. \cite{Hammer2015_BPM,Jeston2014_BPM}). If we identify the nodes of these networks with the processes, then we see that the meant ''end-to-end'' functionality does not exist in a well-defined sense, but that the ends of the processes are always the role definitions of ''the others''. Otherwise we could not clearly state the meaning of the state values of the processes. Remember the single-track railway example: the truthfulness of the controller's state depended on the consistency of the train-controller protocol. 

%
\section{Partitioning the I/O-TSs by equivalence relations} \label{s_partitioning}
%
So far we have always limited our presentation to individual characters and state transitions. This quickly becomes confusing for more complex applications. 
More complex use cases can be handled if they allow to consider classes of transitions to be somehow equivalent. Then, in mathematical terms, we can divide the previously considered transition relation $\Delta$ into equivalence classes $\{\Delta_l\}_{l\in L \subseteq \mathbb{N}}$ with the help of the respective equivalence relation $\sim$ while preserving its essential properties.

In general, an equivalence relation $\sim$ on a set $R$ is defined by the three properties for three elements $a, b, c \in R$:
\begin{enumerate}
\item {\it Reflexivity}: Each element is equivalent to itself: $a \sim a$. 
\item {\it Symmetry}: Equivalence is always valid in both directions: $a\sim b\ \Rightarrow \ b\sim a$.
\item {\it Transitivity}: If $a$ is equivalent to $b$ and $b$ is equivalent to $c$, $a$ is equivalent to $c$ as well: $a\sim b$ and $b\sim c$ $\Rightarrow$ $a\sim c$.  
\end{enumerate} 

Such a partition implies a function $part:\Delta \rightarrow L$ assigning the index of the equivalence class to each element of the transition relation. 

In fact, I present two different equivalence class constructs, one for the deterministic case and one for the nondeterministic case. 

\subsection{Objects in the sense of object orientation} \label{ss_objects}
In the case of a deterministic I/O-TS, we can break down the transition relation $\Delta$ into subtransition relations $\Delta_l$, which in turn are deterministic. Thus each of them defines a function $\delta_l: I\times Q \rightarrow O\times Q$ with $(o,q) = \delta_l(i,p)$ for $(p,q,i,o) \in \delta_l$.

The following applies: $(o, q) = \delta_l(i,p) = \delta_l|_{p}(i)$. In fact, these are two equations:

\begin{eqnarray}
 o & = & \delta_l^{(o)}|_{p}(i) \label{eq_object_I}\\\ 
 q & = & \delta_l^{(q)}|_{p}(i) \label{eq_object_II}
\end{eqnarray}
   
The first equation stands for the object-oriented way \cite{DahlNygard1966,Jacobson1992} of representing a change of state of a so-called ''object'' by representing the signal --or variable, as the signal is often called -- that represents the state with a name, the ''object name''. This signal is then assigned the operation, such as: outputParameters = objectName.operation(inputParameters). The second equation is only implicit in the world of object orientation.

We see that ''objects'' in the sense of object orientation are actually systems in our sense. In addition to the system character, objects also represent a certain compositional behaviour. Namely to feed their output back to the system where they received their input from. Thus, a composed object always represents the composed super system.     

In this context it is interesting to note a certain ambivalence in the development of object orientation between nondeterministic interactions and compositional functionality. On the one hand the very influential programming language Simula was actually designed by Ole-Johan Dahl and Kristen Nygaard for the purpose of simulating physical systems \cite{DahlNygard1966}. On the other hand, the programming paradigm of Smalltalk \cite{Kay1996}, another very influential programming language in the beginnings of object orientation, conceptually consisted more of message processing in the sense that a Smalltalk object could hold a state, receive messages, and, when processing a message, send messages to itself or other objects. 
But without a clear concept of nondeterministic interaction, the relationship to the processing functionality was ultimately stronger.

If we additionally divide the inner state into a main component, which I call ''mode'', and the remainder $Q = Q_{mode} \times Q_{rest}$, we get three equations:

\begin{eqnarray}
 o & = & \delta_l^{(o)}|_{p_{mode}}(i, p_{rest}) \\ 
 q_{mode} & = & \delta_l^{(mode)}|_{p_{mode}}(i, p_{rest}) \\ 
 q_{rest} & = & \delta_l^{(rest)}|_{p_{mode}}(i, p_{rest})
\end{eqnarray}

The determination of a mode state leads to the so-called ''state pattern'' of the object-oriented world \cite{Gamma1995}. With its help you can define generic events $(p_{mode}, q_{mode})$. 

\subsection{Documents and main states} \label{ss_documents_and_main_states}
In the nondeterministic case, the equivalence class construction consists of dividing the character sets into ''document classes'' and the state value set into more or less ''relevant'' values. 

The key idea of this equivalent class construction is that the partition function depends not directly on the single characters $i$ and $o$ and the state values $p$, and $q$ of the transitions, but it depends on a combination of coarser criteria given by document classes, the most relevant state value components as well as some additional conditions. 

First, I introduce a function $parse: I\cup O \rightarrow DocCls \times Param$ with $(docCls, param) = parse(a)$, which assigns each character $a\in I\cup O$ a document class $docCls(a)$ and a value of a parameter set $param(a)$. Both, the set of document classes and the (possibly multidimensional) values of the parameter set are alphabets and can therefore be processed by operations. We can thereby regard each character as an instance of a document of a given document class that represents certain parameter values. 

I also split the state of the I/O-TS into a mode component and the rest: $Q = Q_{mode} \times Q_{rest}$, where $mode(q)$ denotes the $mode$-part and $rest(q)$ denotes the $rest$-part of $q$.

And I introduce a set of conditions $cond$ as functions $cond_{mode(p), docCls(i)}: Q_{rest(p)} \times param(i) \rightarrow \{true, false\}$, which test whether values of the remaining state and the parameters of the input documents evaluate a condition as true or false. 

With these tools I now define the desired equivalence relation: 

\begin{definition}
Two transitions $t,t'\in\Delta$ are equivalent, symbolically $t \sim_{ND} t'$, if the values of the partitioning function $part: \Delta\rightarrow \mathbb{N}$ with 
\begin{eqnarray*}
  part(i, o, p, q) & = part(& docCls(i), docCls(o), mode(p), mode(q),\\
                   &        & cond_{mode(p), docCls(i)}(rest(p), param(i)))
\end{eqnarray*}
for both transitions are identical, i.e. $part(t) = part(t')$.
\end{definition}

Please note that $rest(q)$ and $param(o)$ are ignored for the construction. 
Instead of $\Delta_l$ we also write:
\begin{equation} \label{eq_equiv_class_transition}
 mode(p)\stackrel{docCls(i), cond_{mode(p), docCls(i)}(rest(p), param(i))/docCls(o)}{\xrightarrow{\hspace*{5cm}}} mode(q)
\end{equation}

An example is a seller who receives a purchase order $Order$ from a customer and passes accordingly from mode state value $listening$ to $ordered$, provided that the customer is assessed as $trustworthy$. Accordingly, she returns a confirmation as $Confirmation$:  
\[ listening\stackrel{Order, isTrustworthy(Customer)/Confirmation}{\xrightarrow{\hspace*{5cm}}} ordered\]
The equivalence relation abstracts away all the nitty gritty details of which item was ordered, how many items were ordered, the identity and address data of the customer, etc. but views all such transitions as equivalent with respect to this representation. 

Thus, it is important to keep in mind, that unlike the $p \stackrel{i/o}{\rightarrow} q$ notation, which denotes a single transition, the term in Eq. (\ref{eq_equiv_class_transition}) denotes a whole class of equivalent transitions. 

As a consequence, any protocol role can be specified as a table of 5 columns denoting the items of 
Relation (\ref{eq_equiv_class_transition}), where each row contains the information of one transition equivalence class.

%
\section{Other system models in the literature \label{s_related_work}} 
%
There is a vast amount of existing literature on formally worked out system models. Any extensive review would be an ambitious endeavour on its own. Its quite interesting to see, how differently this subject has been treated. I restrict myself to a couple of other system models that have gained some prominence in my view. I mainly look at how they represent the system function, whether they reproduce the compositionality of the system function in the sense of computability, and how balanced their approach is with respect to the two concepts of interaction and coordination. 

This investigation suggests that these criteria do provide valid assessment criteria. To some extent, the expressiveness of the investigated system models does not suffice to represent the differences between parallel versus sequential composition or even recursion in a simple enough way or do not make a clear enough distinction between the  deterministic versus nondeterministic interaction paradigms. Some lack the compositionality of their system function. And some other do not provide the formal means to distinguish between external and internal coupling.

\subsection{Reactive systems of David Harel and Amir Pnueli}
In \cite{HarelPnueli1985_Reactive} David Harel and Amir Pnueli distinguish between  ''{\it transformational}'' and ''{\it reactive}'' systems. They define: ''{\it A transformational system accepts input, performs transformation on them and produces outputs}'' - which is consistent with our definition \ref{def_system_einfach} of a simple system. They further define ''{\it Reactive systems ... are repeatedly prompted by the outside world and their role is to continuously respond to external inputs ... A reactive system, in general, does not compute or perform a function, but is supposed to maintain a certain ongoing relationship, so to speak, with its environment.}''

The reason, they put so much weight on this distinction is their opinion, that the usual way to design software systems from high level descriptions to more concrete structures is essentially transformational in nature, fitting very well to transformational systems. In a reactive system, in contrast, ''it is not clear if or how complex behaviour can at all be decomposed beneficially into chunks''. They propose the state chart method for specifying the behaviour of complex reactive systems as a combination of Moore and Mealy automata. It consists of describing the system's behaviour in terms of states, events, activities and conditions where the combination of events and conditions trigger transitions between the states and activities. The system's inputs are the (external) events and its outputs are the (external) activities.   

In my opinion David Harel and Amir Pnueli made a very good point in realising that the composition behaviour of systems which represent functions in their interface differs fundamentally from the composition behaviour of what they call reactive systems, and I prefer to call interactive systems. But as I tried to show, it is false to conclude from the non-functional relation of David Harel's and Amir Pnueli's reactive systems to their interaction partners to their somehow non-functional functioning - or ''mode of operation''. Even reactive systems work in a step wise mode and at least for technical systems the simple fact holds that for being contractible, there has to be a function to be implemented - the system function. 

The main difference between ''transformational'' and ''reactive'' systems in the sense of David Harel and Amir Pnueli is not their functioning but - as they correctly point out - their relation to their environment and their compositional behaviour. For ''transformational'' systems their interfaces towards the interaction provides access to the full system function and thereby provides deterministic interactions together with a hierarchical composition according to the definition of sub-/super-systems in section \ref{s_simple_systems}. By contrast, as I proposed already in \cite{Reich2012_PRI}, the interfaces of ''reactive'' systems provide access to at least two different projections of the systems, which I call ''roles'' and thereby only provide nondeterministic interactions towards the environment.

\subsection{The system model of Manfred Broy}
Manfred Broy describes his component model in several articles (e.g. \cite{DBLP:books/sp/cstoday95/Broy95,Broy2009_Relating_Time,Broy2010_Logical}). It is an example of a model of interactive systems that is based on input and output histories alone and does without the explicit description of an internal state. 

A major problem with this approach is that the input and output histories of composite components do not result exclusively from the input and output histories of the subcomponents \cite{BrockAckerman1981_Scenarios}. Instead, further assumptions about causality in the sense of a partial ordering of the events of the histories among themselves are necessary, which Manfred Broy also has to introduce accordingly. 

He defines the behaviour of a system by a set of ''input'' and ''output'' processes, whereby a process in his sense is a finite or infinite set of discrete events. The event concept is not entirely clear to me, as on the one hand, an event may represent (among others) a state change or some action, but on the other hand he states that an event represents a point in time and thus has no time duration. Also, it remains unclear, whether event sequences like $a,\epsilon, b$ and $a, b$ are semantically equivalent or not. 

A key concept in his theory is what he calls a {\it timed stream} of messages $m\in M$ which is an infinite sequence of finite sequences of messages. Each finite sequence $s\in M^*$ can be attributed to a time interval $t\in \mathbb{N}$ as $s(t)$. The set of timed streams is denoted as $(M^*)^\infty$. If we assume $M=\{0,1,\dots,9\}$, then $M^*$ becomes the set of all natural numbers $\mathbb{N}$ and $(M^*)^\infty$ is the set of all infinite sequences of arbitrary natural numbers. As can be seen from the example, the amount of information to be processes in one time interval is finite, but can be arbitrary large.

The messages are transported by typed channels $c\in C$. A {\it channel valuation} is a mapping which assigns a typed timed stream to a channel. The set of all valuations to a given set of channels $C$ is then defined as $\vec{C} = \{x:C \rightarrow (M^*)^\infty:\forall c\in C: x(c)\in (type(c)^*)^\infty\}$. Each channel valuation defines a communication history $x:C \rightarrow (\mathbb{N} \rightarrow M^*)$ for the channels in $C$.

He calls the set of (typed) input and output channels a ''syntactic interface'' and the relation between the input and output histories a ''semantic interface'' of a component. He describes the latter by a function $f$ mapping the set of the histories of the input channels $\vec{I}$ onto the power set of histories of the output channels $\wp(\vec{O})$ of the component: $f:\vec{I} \rightarrow \wp(\vec{O})$. 

Manfed Broy then explicitly introduces the properties of causality as well as realisability in the sense that all output histories can indeed be computed.

Composition is defined by a single composition operator with feedback. 
Given two interface behaviours with disjoint set of output channels $O_1 \cap O_2 = \emptyset$: $F_1:\vec{I}_1 \rightarrow \wp(\vec{O}_1)$, $F_2:\vec{I}_2 \rightarrow \wp(\vec{O}_2)$.
Then with $I= (I_1 \cup I_2)\setminus (O_1 \cup O_2)$, $O=(O_1 \cup O_2) \setminus (I_1 \cup I_2)$, and $C=I_1 \cup I_2 \cup O_1 \cup O_2$ Manfred Broy defines the interface behaviour of the composed interface $F = F_1\otimes F_2: \vec{I} \rightarrow \wp(\vec{O})$ by $F(x) = \{(y\in \vec{C})|O: y|I = x|I \wedge y|O_1 \in F_1(y|I_1) \wedge y|O_2 \in F_2(y|I_2)\}$ where $x|Y$ means the restriction of the valuation $x$ to the channels in $Y$.

According to Manfred Broy, the composition operator $\otimes$ preserves realisability.

Due to the behavioural approach which focuses on observable behaviour, the definition of black box behaviour eschew internal states - but not for composed systems. The definition of a composed system allows for hidden channels which become, according to my understanding, nothing else than internal states in the sense of time dependent attributes of the composed system. Lets look at two simple systems $i=1,2$ each with two input channels $I_i = \{I_{i,1}, I_{i,2}\}$, two output channels $O_i = \{O_{i,1}, O_{i,2}\}$ and a deterministic system function $f_i = (f_{i,1}, f_{i,2}) = (I_{i,1} \wedge I_{i,2}, \neg(I_{i,1} \wedge I_{i,2}))$. The system composition is recursive with $O_{1,2} = I_{2,1}$ and $O_{2,1} = I_{1,2}$. By this equalities, we see that the channels do not just represent the channel names as Manfred Broy states, but the channels themselves. Otherwise the equality of two channel variables relating to two channels with two different names would not make much sense.

Following Manfred Broy's composition rule we then have $I= \{I_{1,1}, I_{2,2}\}$, $O=\{O_{1,1}, O_{2,2}\}$, $C=\{I_{1,1}, I_{1,2}, I_{2,1}, I_{2,2}, O_{1,1}, O_{1,2}, O_{2,1}, O_{2,2}\}$ and $F(x) = \{(y\in \vec{C})|O: y|I = x|I \wedge y|O_1 \in F_1(y|I_1) \wedge y|O_2 \in F_2(y|I_2)\}$.

This seems to be pretty simple and straight forward. So one is tempted to ask where the recursion has been hidden. The complexity is hidden in the fact, that without coupling, the streams $I_{1,2}$ and $I_{2,1}$ can take arbitrary values at each point in time, while in the coupled case, only their first value can be arbitrarily chosen and all other values depend in a complex way on these two initial values together with the complete history of input values of the remaining input streams  $I_{1,1}$ and $I_{2,2}$. 

So, the complexity of recursive system relations cannot be avoided. But recurring to the complex structure of timed streams, my impression is that Manfred Broy adds a substantial amount of unnecessary additional complexity whereas it is difficult for me to find the additional benefit. 

\subsection{The BIP component framework of Joseph Sifakis et al.}
The group around Joseph Sifakis has developed the component model ''Behaviour, Interaction, and Priority (BIP)'' \cite{BazuBozgaSifakis2006,BasuEtAl2011_Rigorous,BasuEtAl2011_BIP,BensalemBozgaQuilbeufSifakis2012_Knowledge} to describe interactive systems. This model, as I understand it, is quite similar to the model for interactive systems presented in this work with its partitioning of state and its rule orientation. But it does not distinguish sufficiently clearly between information transport and processing, because computations are performed in the components as well as in the connectors.    

BIP-components can be atomic or composed. An {\it atomic component} is a labelled finite transition system characterised by the tuple $(State, state_0, P, G, F, Trans)$. 
$State = S \times V$ gives the possible state values. $S$ is called control state, $V = V_1 \times \dots \times V_n$ is called data state\footnote{the authors use $V$ to denote the variables which represent the data states syntactically. Here the index $i$ relates to the $i$-th data state [variable].}. $state_0$ is the initial state.
$P$ is the set of ports, which are transition labels [or action names] used for synchronous state transitions of multiple components. 
$G$ is a set of ''guard'' conditions operating on $V$, that is each $g\in G$ is a function $g:V\rightarrow \{true, false\}$.
$F$ is a set of functions operating on V, that is for each $f\in F$, $f:V\rightarrow V$. 
$Trans$ is the set of transitions with $trans = State \times P \times G \times F \times State$. A transition $t = (state, p, g_p, f_p, state')$ represents a step from state value $(s, v)$ to $(s', v')$ with $v' = f_p(v)$. It can be executed if some interaction including port $p$ is offered by the execution environment and the guard $g_p(v) = true$. 

Components are glued together by connectors defining interactions and by priorities. For simplicity we assume components with disjoint sets of names of ports, state values and transitions.

A {\it connector} specifies interactions between its components $\{C_1, \dots, C_n\}$. It is characterised by a tuple $(V, P, I, G, F, Trans)$.
$V = V_1 \times \dots \times V_n$ is called data state\footnote{Now the index $i$ relates to the $i$-th component.}.
$P$ is the set of ports the connector relates to.
$I\subseteq \mathcal{P}(P)$ is the set of feasible interactions, where each $i\in I$, $i \subseteq P$. 
$G$ is a set of ''guard'' conditions operating on $V$, that is each $g\in G$ is a function $g:V\rightarrow \{true, false\}$ having access to the complete data state of all components.
$F$ is a set of functions operating on V, that is for each $f\in F$, $f:V\rightarrow V$, also having access to the complete data state of all components and providing ''data exchange'' between the components. 
$Trans$ is the set of transitions with $trans = V \times I \times G \times F \times V$. A transition $t = (v, i, g_i, f_i, v')$ represents a step from state value $v$ to $v' = f_i(v)$. It can be executed if its interaction becomes feasible (that is, the respective components have reached a state such that each provides transitions labelled with the respective port) and the guard $g_p(v) = true$.

A {\it priority} selects among possible interactions. It is characterised by a tuple $(c, G, I, <)$.
$c$ is the component the priority belongs to, which composes $\{C_1, \dots, C_n\}$ with data state $V = V_1 \times \dots \times V_n$.
$G$ is a set of ''guard'' conditions operating on $V$.
$I$ is a set of interactions.
$< \subseteq I \times I$ is a strict partial order relation providing the priority.
When the condition holds and both interactions are enabled, only the higher one is possible. 

A {\it compound component} is then given by its components, its connectors and its priorities. Neglecting priorities, it is finally characterised by a tuple $(State, P, I, G, F, Trans)$. 
$State = S\times V = S_1 \times \dots \times S_n \times V_1 \times \dots \times V_n$.
$P = \bigcup P_i$, $I = \bigcup I_i$, $G = \bigcup G_i$, and $F = \bigcup F_i$ is the set of functions, operating on $V$.
$Trans \subseteq State \times I \times G \times F \times State$ is the set of transitions where a transition is given by $t=(state, \alpha, g, f, state')$.
$\alpha$ is a feasible interaction associated with a guard $g_\alpha \in G$ and a function $f_\alpha \in F$ such that there 
exists a subset $J\subseteq\{1\dots n\}$ of atomic components with transitions $\{(state_j, p_j, g_j, f_j, state_j')\}_{j\in J}$ and $\alpha = \{p_j\}_{j\in J}$.
$g = \left(\bigwedge_{j\in J} g_j\right) \wedge g_\alpha$.
$f = (f_1, \dots, f_n) \circ f_\alpha$.
$v' = f(v)$.
$s'(j) = s_j'$ if $j\in J$; otherwise $s'(j) = s_j$. That is, the states from which there are no transitions labelled with ports in $\alpha$, remain unchanged.

Without priorities, a move $(s, v) \stackrel{\alpha}{\rightarrow} (s', v')$ is possible if there exists a transition $(state, \alpha, g, f, state')$, such that $g(v) = true$.

Most importantly, the functionality of BIP components is not compositional in the sense of definition \ref{def_compositionality}. There is always the global function operating on all states of a composed component at once first. And only afterwards, the components' genuine functions are applied to the individual states, violating the encapsulation and thereby making this model of reactive systems complex. The purpose of this connector-function is probably to represent data exchange --- but obviously it is not limited to reproduce states, but it is allowed to make arbitrary manipulations. So in my opinion, this model does not distinguish clearly between transport and processing of information.  

\subsection{The system models of Rajeev Alur}
In his excellent book ''Principles of Cyber-Physical Systems'' \cite{Alur2015_Principles} Rajeev Alur also distinguishes between functional and reactive components in the tradition of David Harel and Amir Pnueli.  

Focusing on the timing behaviour, he develops several models of computation for reactive components. In the synchronous model all components execute in lock-step, similar to what I would call ''clocked'' systems. In the asynchronous model all processes execute at independent speeds, and there is an unspecified delay between the reception of inputs and the production of outputs by a process. In the timed model processes rely on a global physical time to achieve a loose form of synchronisation.   

\subsubsection{Synchronous reactive components (SRC)} \label{s_synchronous_reactive_systems}
In the interconnection of digital systems, such as logic gates or flip-flops, input and output may refer to ''the same'' point in time, and only the update of any internal state refers to the next point in time. 
\begin{eqnarray*}
   q(t') & = & f^{int}(in(t), q(t))\\
   out(t) & = & f^{ext}(in(t), q(t))
\end{eqnarray*}
This system model is the basis of synchronously reactive systems  \cite{BenvenisteCaspiEdwardsHalbwachsLeGuernicDeSimone2003} and has found its expression in quite a number of hardware description languages, such as Lustre, Esterl, Signal or VHDL/Verilog.

Rajev Alur models a {\it synchronous reactive component (SRC)} $C$ by a tuple $C = (I, O, S, Init, React)$, mixing semantic elements like sets, states, and values with syntactic elements like variables and descriptions. $I, O, S$ are sets of typed input, output and state variables, each variable taking values only from the sets $Q_I, Q_O, Q_S$. $Init$ is the description of the initialisation defining the set $[\![Init]\!] \subseteq Q_S$ of initial states, and {\it React} is a description of the reaction defining the transition relation $[\![React]\!] \subseteq Q_S \times Q_I \times Q_O \times Q_S$.

An SRC $C$ is said to be {\it event triggered}, if a subset $J \subseteq I$ exist, where all the variables in $J$ also can take the value $\bot$ (meaning ''absent''); every output variable is either latched (the value of the output variable is the updated value of a state variable) or can also be absent; and for a reaction with absent input, the non-latched output remains also absent. 

The reaction $React = (L, {\cal A}, \succ)$ is given by a set $L$ of task local variables and a set ${\cal A}$ of tasks and a binary precedence relation $\succ \subseteq {\cal A} \times {\cal A}$ (The latter two represent a {\it task-graph}). Each task $A$ has a read-set $R \subseteq I \cup S \cup O \cup L$, a write-set $W \subseteq O \cup S \cup L$, and an update description $Update$ with $[\![Update]\!] \subseteq Q_R \times Q_W$ such that (1) $\succ$ is acyclic; (2) each output variable belongs to the write-set of exactly one task; (3) if an output or a local variable $y$ belongs to the read-set of a task $A$, then there exists a task $A'$ such that $y$ is in the write-set of $A'$ and $A' \succ^+ A$; and (4) if a state or a local variable $x$ belongs to the write-set of a task $A$ and also to either the read-set or write-set of a different task $A'$, then either $A \succ^+ A'$ or $A' \succ^+ A$.

As he says, his approach to ensuring well-behaved composition relies on the syntactic decomposition of the reaction description into tasks given by the designer. He introduce the notion of an interface for $I$, $O$, $\succ \subseteq I \times O$ of a component to ensure consistent composition.
Two components $C_1$ and $C_2$ with $I_i$, $O_i$, and $\succ_i$ $(i=1,2)$ are said to be {\it compatible} if (1) $O_1$ and $O_2$ are disjoint and (2) the relation $\succ = \succ_1 \cup \succ_2$ is acyclic. Actually, the task graph over the set of tasks of $C_1$ and $C_2$ obtained by  retaining the precedence edges in the individual components and adding cross-component edges from a task $A_1$ of one component to a task $A_2$ of another component whenever $A_1$ writes a variable read by $A_2$, is acyclic.

For composition, we assume that the input/output variables are named in a way to establishes the intended communication pattern and the internal variables are named to avoid naming conflicts. Let $C_i = (I_i, O_i, S_i, Init_1, React_i)$, $i=1,2$ be two compatible SCR. Suppose that $React_i = (L_i, {\cal A}_i, \succ_i)$. Then the {\it parallel composition} $C = C_1|| C_2$ is again an SCR such that $S = S_1 \cup S_2$, $O = O_1\cup O_2$, $I = I_1 \cup I_2 \setminus O$, the initialisation for a state variable $x_i$ is given by $Init_i$, $(i=1,2)$; the reaction description of $C$ uses the local variables $L=L_1\cup L_2$ and is given by the task graph such that (1) ${\cal A} = {\cal A}_1 \cup {\cal A}_2$ and (2) $\succ = \succ_1 \cup \succ_2$ together with the task pairs $(A_1, A_2)$, such that $A_1$ and $A_2$ are tasks of different components with some variable occurring in both, the write-set of $A_1$ and the read-set of $A_2$.

The commutativity of parallel composition of SCRs indicates that the formalism does not express the different composition schemata described in section \ref{s_simple_systems}. With the compatibility condition recursion is excluded.

\subsubsection{Asynchronous processes} In the asynchronous case, Rajev Alur talks about processes and channels instead of components and input/output variables. 

An {\it asynchronous process (AP)} $P$ is defined again by the tuple $P = (I, O, S, $ $Init, React)$. However, the reaction is defined differently with input task $A_x$ for every input channel $x\in I$, an output tasks $A_y$ for every output channel $y$ and a set of internal tasks ${\cal A}$ as $React = (\{{\cal A}_x|x\in I\}, \{{\cal A}_y|y\in O\}, {\cal A_i})$. Each task consist of a guard condition $Guard$ over $S$ and an update rule $Update$. The update rule of each $A_x$ has a read-set $S \cup \{x\}$ and a write-set $S$, each $A_y$ has a read-set $S$ and a write-set $S \cup \{y\}$, and each $A$'s read- and write-set is $S$. 
  
The {\it parallel composition} $P_1|P_2$ of two AP $P_i = (I_i, O_i, S_i, Init_i, React_i$, $(i=1,2)$ is again an AP if $O_1$ and $O_2$ are disjoint and $S = S_1 \cup S_2$, $O = O_1 \cup O_2$, $I = I_1 \cup I_2 \setminus O$, and $Init = Init_1; Init_2$. 
For each input channel $x\in I$, (1) if $x\notin I_2$, then the set of input tasks ${\cal A}_x$ is ${\cal A}_x^1$; (2) if $x\notin I_1$, then the set of input tasks ${\cal A}_x$ is ${\cal A}_x^2$; and (3) if $x\in I_1 \cap I_2$, then for each task $A_1 \in {\cal A}_x^1$ and $A_2 \in {\cal A}_x^2$, the set of input tasks ${\cal A}_x$ contains the task described by $Guard_1 \wedge Guard_2 \rightarrow Update_1;Update_2$. 
For each output channel $y\in O$, (1) if $y\in O_1\setminus I_2$, then the set of output tasks ${\cal A}_y$ is ${\cal A}_y^1$; (2) if $y\in O_2 \setminus I_1$, then the set of output tasks ${\cal A}_y$ is ${\cal A}_y^2$; (3) if $y\in O_1 \cap I_2$, then for each task $A_i\in {\cal A}_y^i$ (i=1,2) the set of output tasks $A_y$ contains the task described by $Guard_1 \wedge Guard_2 \rightarrow Update_1;Update_2$; and (4)  if $y\in O_2 \cap I_1$, then for each task $A_i\in A_y^i$ (i=1,2), the set of output tasks ${\cal A}_y$ contains the task described by $Guard_2 \wedge Guard_1 \rightarrow Update_2;Update_1$;
The set of internal tasks is ${\cal A} = {\cal A}_1 \cup {\cal A}_2$.

\subsubsection{State machines}
However, to introduce more advanced matter like safety or liveness requirements, Rajeev Alur turns to state machines, very similar to the I/O-automata in this article. 

Both, SCRs (p.66) and APs (p.141) have naturally associated transitions systems by declaring the input and output variables to be local variables. With  a pair of states $(s, t)$ he identifies $s \stackrel{i/o}{\rightarrow} t$ as the reaction for some input $i$ and some output $o$ for SCRs - which is exactly the notation I used. For APs he adds the undefined value $\perp$ to the possible values of input and output variables in case they are not involved in an action - which is one of the essential ideas distinguishing nondeterministic from deterministic I/O transition systems.

So, from an automata perspective, the difference between SRCs and APs is mapped onto the deterministic versus nondeterministic issue - very similar to the approach of this article. However, what is lacking is recursion in the deterministic case and most importantly the role concept and therewith the distinction between the inner and outer coupling, the essence of the process model of this article.
 
\subsection{Approaches based on named actions}

There are many approaches to describe processes which are based on named actions. Among others, these are Robin Milner's calculus of communicating systems (CCS) \cite{Milner1989, Milner1992},  Charles A. R. Hoare's calculus of communicating sequential processes (CSP) \cite{Hoare1985} and Jan A. Bergstra's and Jan W. Klop's algebra of communicating processes (ACP) \cite{BergstraKlopp1987_Universal}. Also, Richard Mayr's \cite{Mayr1999} process rewrite systems can be subsumed her. For a recent overview, see the book of Jos C. M. Baeten, Twan Basten, and Michel Reniers \cite{BaetenBastenReniers2010_ProcessAlgebra}.

They all have in common the view of a process as a structure producing events by actions and - this is the main point - are based on providing names for these events as their 'alphabet' they operate on. So, the semantics of these calculi is provided by transition systems where each transition has a label, the name of the action causing the state change. An interaction becomes the simultaneous execution of actions.

Although, it is generally possible to find a mapping from the i/o-pairs of the IO-automata used in this article to some set of unique transition labels, the coupling mechanism thereby gets lost. As described in section \ref{ss_outer_coupling} the outer coupling was based on identical names of output and input characters: that the output of one transition is the input of another transition (Shannon channel). As described in (section \ref{ss_composition_of_interactions}), the basis for the inner coupling are the $\epsilon$ transitions, as these can be eliminated to represent the desired causal relation between the role-transitions.

As a result, an important difference between any approach with named actions and the presented approach with anonymous actions but named I/O-characters is their different support for describing composition behaviour. As the names of the actions are arbitrary, they do not help to express the fact that some transitions may refer to the same action, but only from a different perspective/projection.

\subsubsection{Christel Baier's and Joost-Pieter Katoen's system model}
A typical representative of the named action approach can be found in the very good textbook of Christel Baier and Joost-Pieter Katoen \cite{BaierKatoen2008} on model checking. 
To describe systems, they use the program graph concept whose semantics is defined by a labelled transition system. 

They define a program graph (PG) over a set of typed variables as a digraph whose edges are labelled with conditions of these variables and actions (pp. 32). Thus, a PG is not restricted to a prescription on how to calculate a computable function but it can also represent more generally a process in my terminology.

\begin{definition}
A program graph (PG) over a set $Var$ of typed variables is a tuple $(Loc, Act, Cond, \text{\it Effect}, \hookrightarrow, Loc_0, g_0)$ where
\begin{itemize}
\item $Loc$ is a finite set of locations, specifying which of the conditional transitions is possible.
\item $Cond$ is a set of conditions, mapping the evaluation of the variables to the set of Boolean values $\{true, false\}$.
\item $Act$ is a set of actions whose effect is formalised by the effect function.
\item $\text{\it Effect}: Act\times Eval(Var) \rightarrow Eval(Var)$ indicates how the evaluation $\eta$ of variables is changed by performing an action.
\item $\hookrightarrow \subseteq Loc\times Cond(Var) \times Act \times Loc$ is the conditional transition relation.
\item $Loc_0 \subseteq Loc$ is a set of initial locations.
\item $g_0 \in Cond(Var)$ is the initial condition. 
\end{itemize}
\end{definition}

The notation $l \stackrel{g:\alpha}{\hookrightarrow} l'$ is used as shorthand for $(l, g, a, l') \in \hookrightarrow$. The condition $g$ is also called thee ''guard'' of the conditional transition. If $g$ is tautological true, then the condition may be dropped.

Executing a program graph, we nondeterministically chose between all transitions $l \stackrel{g:\alpha}{\hookrightarrow} l'$ which satisfy $g$ in the current evaluation $\eta$ (i.e., $\eta \models g$). The execution of action $\alpha$ changes the evaluation of the variables $\eta$ according to $\text{\it Effect}(\alpha, .)$. Subsequently, the systems changes into location $l'$. If no such transition exists, the system stops.

They define the semantics of a program graph by means of a labelled transition system such that whenever $l \stackrel{g:\alpha}{\hookrightarrow} l'$ is a conditional transition in the PG, and the guard $g$ holds in the current evaluation $\eta$, then there is a transition from state $l, eta$ to state $(l', \text{\it Effect}(\alpha, \eta))$:

\begin{definition}
The transition system $TS(PG)$ of a program graph $PG = (Loc, Act, Cond, \text{\it Effect}, \hookrightarrow, Loc_0, g_0)$ over a set $Var$ of variables is the tuple $(S, Act, \rightarrow, I, AP, L)$ where
\begin{itemize}
\item $S = Loc \times Eval(Var)$ is the set of state values.
\item $\rightarrow \subseteq S \times Act \times S$ is defined by the following rule, formulated in structural operational semantics (SOS)-notation 

\[\frac{l \stackrel{g:\alpha}{\hookrightarrow} l' \hspace{0.2cm} \wedge \hspace{0.2cm} \eta \models g}{(l,\eta) \stackrel{\alpha}{\rightarrow} (l', \text{\it Effect}(\alpha, \eta))}\].
\item $I = \{(l, \eta) | l \in Loc_0, \eta \models g_0 \}$ is the set of possible initial state values.
\item $AP = Loc \cup Cond(Var)$  is the set of atomic propositions used later on to define system properties.
\item $L((l, \eta)) = \{l\} \cup \{g\in Cond(Var) | \eta \models g\}$.
\end{itemize}
 
\end{definition}

To describe the interactions between PGs, they first define their nondeterministic interleaving. 

\begin{definition} 
Let $PG_i = (Loc_i, Act_i, Cond_i, \text{\it Effect}_i, \hookrightarrow_i, Loc_{0,i}, g_{0,i})$, for $i=1,2$ be two program graphs over the variables $Var_i$. Program graph $P = PG_1|||PG_2$ over $Var_1 \cup Var_2$ is defined by
$P = (Loc_1 \times Loc_2, Act_1 \sqcup Act_2, Cond_1 \sqcup Cond_2, \text{\it Effect}, \hookrightarrow, Loc_{0,1} \times Loc_{0,2}, g_{0,1} \wedge g_{0,2})$ 

\noindent
where $\hookrightarrow$ is defined by the rules   
\[ 
\frac{l_1 \stackrel{g:\alpha}{\hookrightarrow_1} l_1'}{(l_1, l_2) \stackrel{g:(\alpha, 1)}{\hookrightarrow} (l_1', l_2)}
\hspace{0.2cm}\text{and}\hspace{0.2cm}
\frac{l_2 \stackrel{g:\alpha}{\hookrightarrow_2} l_2'}{(l_1, l_2) \stackrel{g:(\alpha, 2)}{\hookrightarrow} (l_1, l_2')}
\]
and $\text{\it Effect}$ is defined by:
\[ 
\text{\it Effect}(\alpha, \eta)(v) = \begin{cases}
  \text{\it Effect}_i\left((\alpha, i), \eta|_{Var_i}\right)(v) & \text{if}\, v\in Var_i\\
  \eta(v) & \text{otherwise}
\end{cases}
\]
where $\eta|_{Var_i}$ denotes the function $\eta$ restricted to $Var_i$.
\end{definition}
  
For sets $X$, $Y$, $X\sqcup Y$ is defined by $\{(x,1)|x\in X\} \cup \{((y,2))|y\in Y\}$. Variables can be classified as ''global'' or ''shared'' and ''local''. Both PGs share the variables $Var_1 \cap Var_2$, the other variables are local. Actions of the composed $PG$ accessing shared variables are denoted as ''critical'' as the resolution of a contention between these actions is an additional source of nondeterminism.

Then PGs are extended by communication actions $c!v$ and $c?x$ operating on a channel $c$ as FIFO-buffers with a given capacity $cap(c)\in \mathbb{N}\cup\{\inf\}$ where 

\begin{tabular}{ll}
$c!v$ & transmits the value $v$ along the channel $c$, and\\
$c?x$ & receives a message via channel $c$ and assigns it to variable $x$.
\end{tabular}

Is is assumed that the type of the variables are compatible with the values transported in the channel $dom(c)$. With $\text{\it Chan}$ as the set of channels and $\text{\it Var}$ as the set of variables, the set of communication actions is then given by

\[Comm = \left\{c!v, c?x \,|\, c\in \text{\it Chan}, v\in dom(c), x\in Var \,\text{with}\, dom(c) \subseteq dom(c) \right\}\]

The evaluation of the channels is provided by the function $\xi$. The PG's definition is extended by extending the action alphabet by the communication actions. 
The transition system semantics of a channel system is defined as follows:

\begin{definition}
Let $CS = [PG_1|\dots|PG_n]$ be a channel system  over $(\text{\it Chan}, Var)$ with $PG_i = (Loc_i, Act_i, \text{\it Effect}_i, \hookrightarrow_i, Loc_{0,i}, g_{0,i})$, for $0 < i\leq n.$
The transition system of $CS$, denoted as $TS(CS)$, is the tuple $(S, Act, \rightarrow, I, AP, L)$ where 
\begin{itemize}
\item $S = (Loc_1\times\dots\times Loc_n) \times Eval(\text{\it Var}) \times Eval(\text{\it Chan})$,
\item $Act = \bigcup_{0<i\leq n} Act_i \cup \{\tau\}$,
\item $\rightarrow$ is defined by the following rules
  \begin{enumerate}
  \item Interleaving for $\alpha \in Act_i$:

\[\frac{l_i \stackrel{g:\alpha}{\hookrightarrow} l_i' \hspace{0.2cm} \wedge \hspace{0.2cm} \eta \models g}{(l_1,\dots,l_i,\dots, l_n,\eta,\xi) \stackrel{\alpha}{\rightarrow} (l_1,\dots,l_i', \dots, l_n, \eta', \xi)}\]

 where $\eta' = \text{\it Effect}(\alpha, \eta)$. 
  \item Asynchronous message passing for $c\in \text{\it Chan}$, $cap(c) > 0$:

\noindent
$\bullet$ Receive a value along channel $c$ and assign it to variable $x$:

\[\frac{l_i \stackrel{g:c?x}{\hookrightarrow} l_i' \hspace{0.2cm} \wedge \hspace{0.2cm} \eta \models g \hspace{0.2cm} \wedge \hspace{0.2cm} len(\xi(c)) = k>0 \hspace{0.2cm} \wedge \hspace{0.2cm} \xi(c) = v_1\dots v_k}{(l_1,\dots,l_i,\dots, l_n,\eta,\xi) \stackrel{\tau}{\rightarrow} (l_1,\dots,l_i', \dots, l_n, \eta', \xi')}\]

 where $\eta' = \eta[x:=v_1]$ and $\xi' = \xi[c:=v_2\dots v_k]$. 

\noindent
$\bullet$ Transmit value $v\in dom(c)$ over channel $c$:

\[\frac{l_i \stackrel{g:c!v}{\hookrightarrow} l_i' \hspace{0.2cm} \wedge \hspace{0.2cm} \eta \models g \hspace{0.2cm} \wedge \hspace{0.2cm} len(\xi(c)) = k<cap(c) \hspace{0.2cm} \wedge \hspace{0.2cm} \xi(c) = v_1\dots v_k}{(l_1,\dots,l_i,\dots, l_n,\eta,\xi) \stackrel{\tau}{\rightarrow} (l_1,\dots,l_i', \dots, l_n, \eta, \xi')}\]

 where $\xi' = \xi[c:=v_1v_2\dots v_kv]$. 

  \item Synchronous message passing over $c\in \text{\it Chan}$, $cap(c) = 0$:

\[
\frac{l_i \stackrel{g_1:c?x}{\hookrightarrow} l_i' \hspace{0.2cm} \wedge \hspace{0.2cm} \eta \models g_1 \hspace{0.2cm} \wedge \hspace{0.2cm} \eta \models g_2 \hspace{0.2cm} \wedge \hspace{0.2cm} l_j \stackrel{g_2:c!v}{\hookrightarrow} l_j' \hspace{0.2cm} \wedge \hspace{0.2cm} i\neq j}
{(l_1,\dots,l_i,\dots, l_j, \dots, l_n,\eta,\xi) \stackrel{\tau}{\rightarrow} (l_1, \dots, l_i', \dots, l_j', \dots, l_n, \eta', \xi)}
\]

 where $\eta' = \eta[x:=v]$. 
  \end{enumerate}
\item $I=\left\{(l_1,\dots,l_n, \eta, \xi_0) \,|\, \forall 0<i\leq n. \,(l_i\in Loc_{0,i} \wedge \eta \models g_{0,i})\right\}$,
\item $AP = \bigcup_{0<i\leq n} Loc_i \cup Cond(\text{\it Var})$,
\item $L((l_1,\dots,l_n, \eta, \xi_0)) = \{l_1, \dots, l_n\} \cup \{g\in Cond(\text{\it Var}) \,|\, \eta \models g\}$.
\end{itemize}

\end{definition}
\subsubsection{Comparison to I/O-TSs with the alternating bit protocol (ABP) example}
To compare the approach of Christel Baier and Joost-Pieter Katoen with the my approach of I/O-TSs, I use their example 2.31 (pp. 54) of the alternating bit protocol (ABP) 

\begin{figure}[ht]
\begin{center}
\includegraphics[width=7cm]{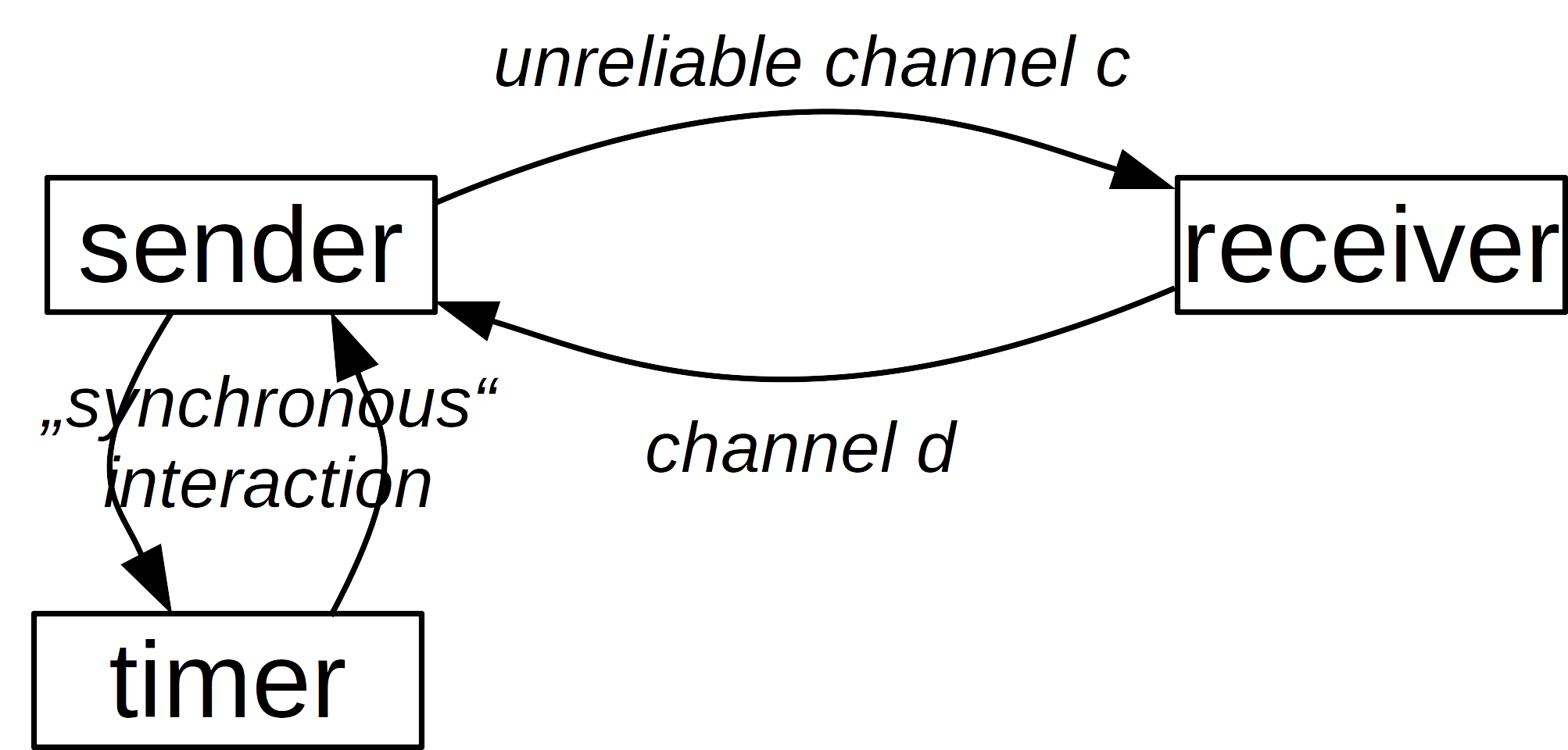}
\end{center}
\caption[]{\label{fig_alternating_bit_protocol_systems} The alternating bit protocol (ABP) specified by three program graphs (PGs) for sender, receiver and timer}
\end{figure}

The goal of the ABP is to transmit messages $m_0, m_1, \dots$ from a sender system ${\mathcal S}$ to a receiver system ${\mathcal R}$ over an unreliable channel $c$ and to assure that no message gets lost (see Fig. \ref{fig_alternating_bit_protocol_systems}). To make the example simpler, we assume that the back channel $d$ is reliable. To achieve this goal, sender ${\mathcal S}$ transmits messages one by one together with an alternating control bit and retransmits the message if the receiver ${\mathcal R}$ returns the wrong bit. If ${\mathcal S}$ has to wait too long for the acknowledgement, it timeouts and retransmits its pair of message and control bit $(m_i, b_i)$. Thus, the alternating control bit $b_i$ is used to distinguish retransmissions of $m_i$ from transmissions of subsequent and previous messages.

\begin{figure}[ht]
\begin{center}
\includegraphics[width=13cm]{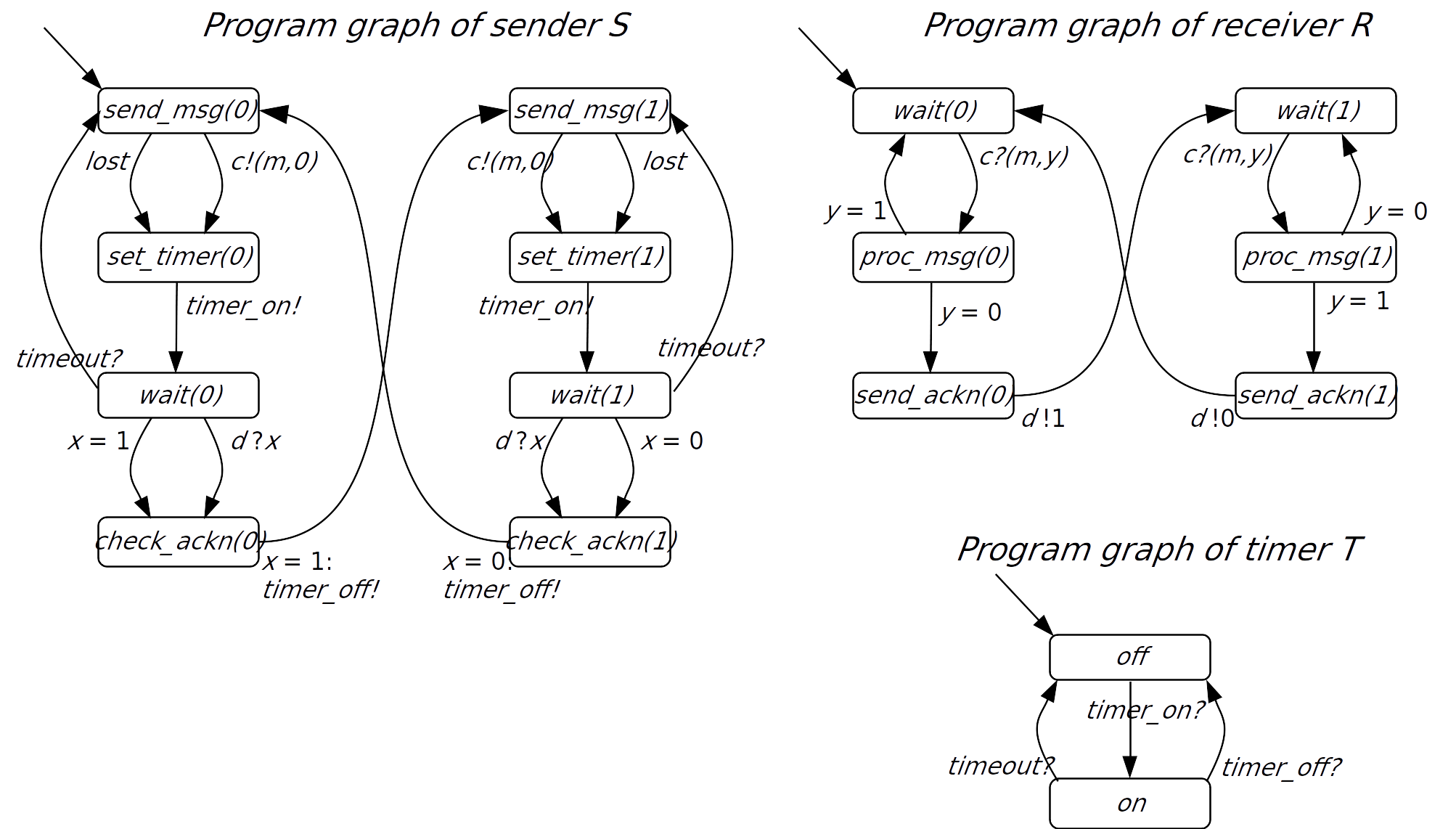}
\end{center}
\caption[]{\label{fig_alternating_bit_protocol_PGs} The alternating bit protocol (ABP) specified by three program graphs (PGs) for sender, receiver and timer}
\end{figure}

The timeout mechanism of ${\mathcal S}$ is modelled by a timer ${\mathcal T}$. ${\mathcal S}$ activates ${\mathcal T}$ with the $timer\_on$-message and stops it on receipt of an acknowledgement. When raising a timeout,  ${\mathcal T}$ signals to ${\mathcal S}$ that a retransmission should be initiated.

I show the program graphs of the protocol in Fig. \ref{fig_alternating_bit_protocol_PGs}. The number of states of the underlying transition system TS(ABP) depends on the capacity of channel $c$.

In Fig. \ref{fig_alternating_bit_protocol_IOTs} I model the ABP with its sender, receiver and timer with I/O-TSs. Now, the transitions are described anonymously with input and output characters. 

Obviously, both approaches can model the same classes of interactions. However, in my opinion, the main advantage of the IOTs is that a transition naturally represents the action of the system and the state values represent the system ''in between''. With PGs or the derived transitions systems, reception and sending of messages become extra ''actions'' and necessitates the introduction of extra variables. First, this leads to unnecessary many state values. And secondly and even more importantly, it blurs the interface of the systems with respect to their transformational I/O-behaviour.

\begin{figure}[ht]
\begin{center}
\includegraphics[width=13cm]{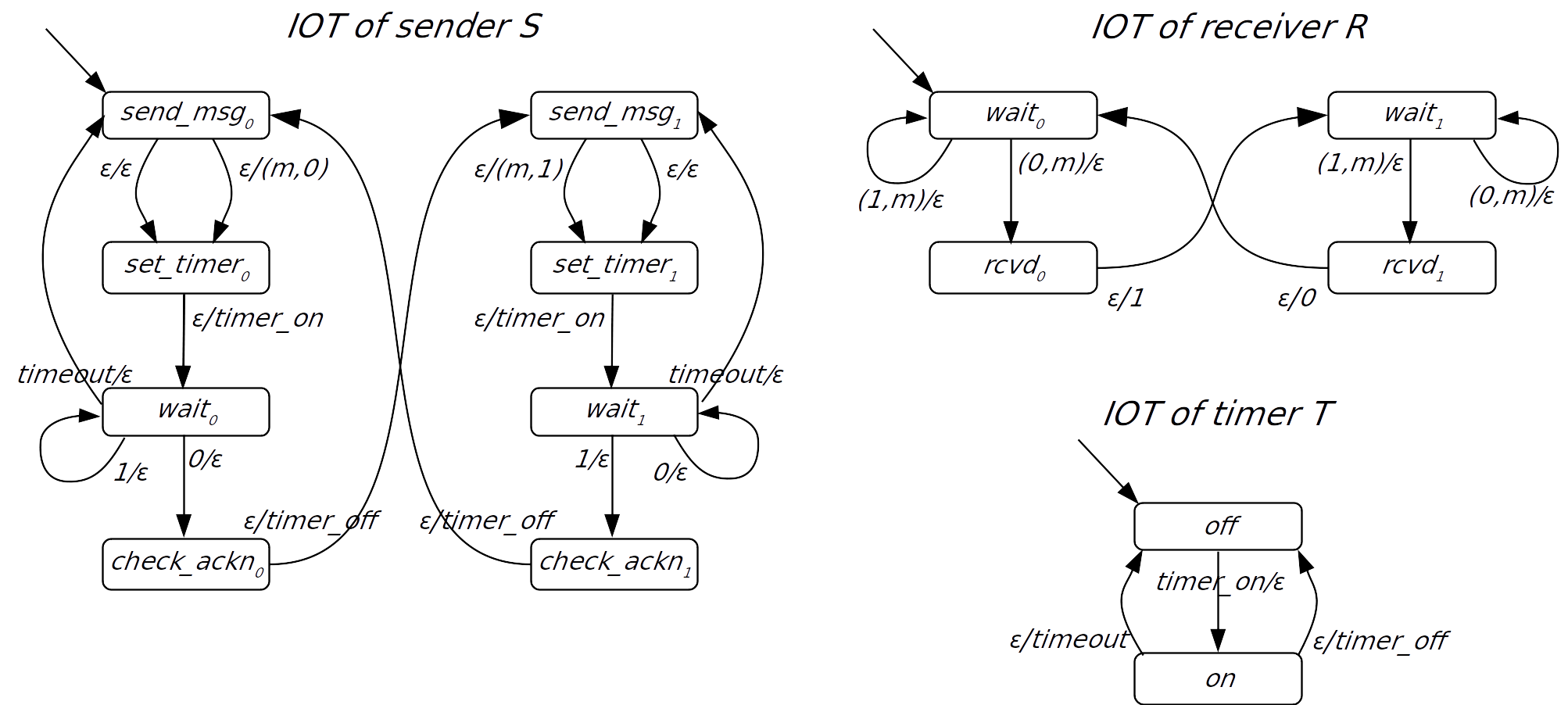}
\end{center}
\caption[]{\label{fig_alternating_bit_protocol_IOTs} The alternating bit protocol (ABP) specified by three I/O-transition system (IOTs) for sender, receiver and timer}
\end{figure}

With the I/O-TSs it is easy to see that they can be simplified by consolidating consecutive transitions without output with transition without input into one transition. This is shown in Fig. \ref{fig_alternating_bit_protocol_IOTs_simplified}

\begin{figure}[ht]
\begin{center}
\includegraphics[width=13cm]{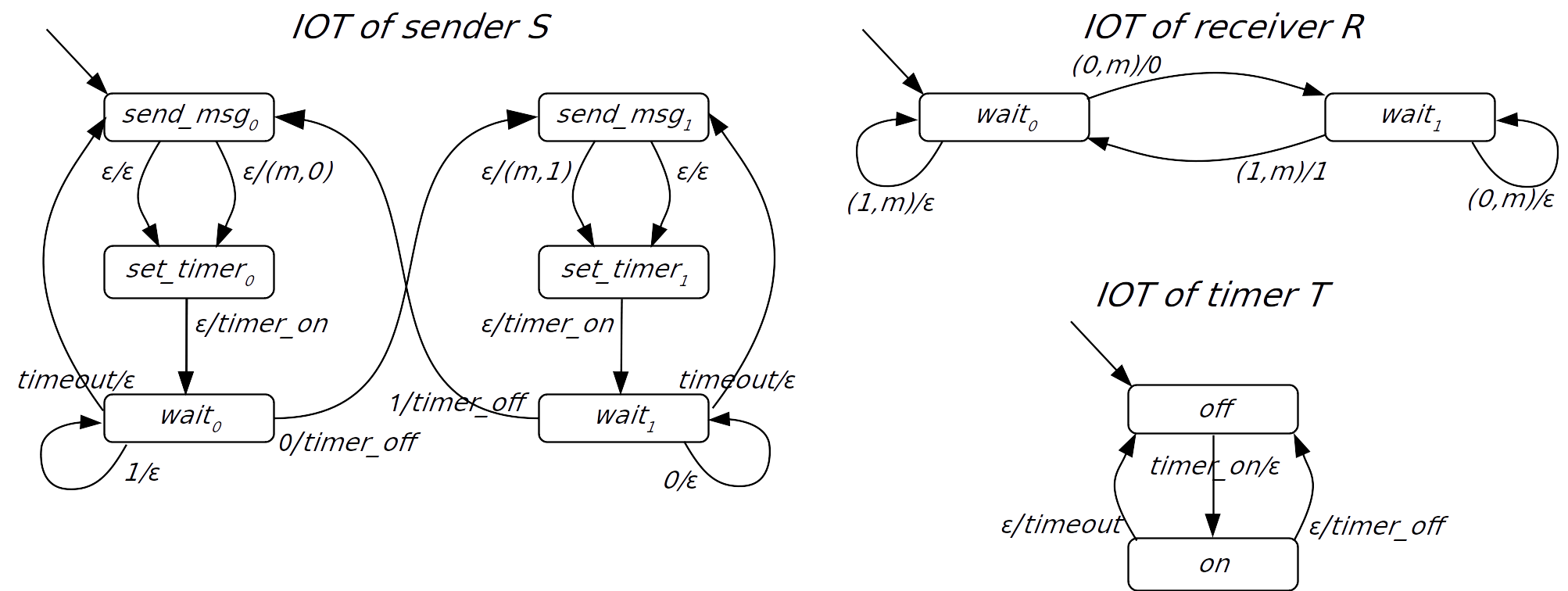}
\end{center}
\caption[]{\label{fig_alternating_bit_protocol_IOTs_simplified} The alternating bit protocol (ABP) specified by three simplified I/O-transition system (IOTs) for sender, receiver and timer}
\end{figure}

%
\section{IT system architecture} \label{s_it_system_architecture}
%
It is consensus in computer science to understand as the core of the architecture concept of an IT system its structure in terms of the compositional relationships of its interacting components. In this sense, structure is not left to the arbitrary consideration of humans, but is inherent in the system as a whole. In my understanding structure is actually the concept with which we can give ''wholeness'' an intelligible meaning \cite{Reich2001}.  
 \cite{Fowler2002_ApplArch,RechtinMaier2009,ISO42010-2011,Sommerville2009_SE}. 
Thus, the concept of composition of IT systems not only determines a meaningful framework for the discourse of their interoperability, but, by definition, it also determines the discourse about the architecture of IT systems.

\subsection{Components and Interfaces}
To quote Clemens Szyperski \cite{Szyperski2002} (p.36): ''Components are for composition''. Thus, components are intended as building blocks that easily fit together \cite{LauWang2007_Software}. But as already David Harel and Amir Pnueli \cite{HarelPnueli1985_Reactive} observed, the compositional behaviour of systems differs fundamentally, depending on whether they represent functions in their interaction or not.

Thus, it's the composition concept which necessitates the distinction between systems in general and components in particular and suggest a clear concept of an interface. How do we name the structure that comprises everything that we have to know from a system or its relevant parts to apply a composition operator? \cite{Tripakis2016_Compositionality,AlfaroHenzinger2001_InterfaceTheories} propose to use the term ''{\it interface}'' for this, a suggestion I happily endorse. A ''{\it component}'' can then be understood as a system that is intended for a particular composition and therefore has correspondingly well-defined interfaces that, by definition, express the intended composition.

The composition approach to components and interfaces is also necessary to fully understand an even simpler, but more phenomenological approach to classify component interactions correctly, as it was proposed by Tizian Schröder and me \cite{Reich2015_kvsi,ReichSchroeder2019_RefMod_IASem}. This approach is based on a classification of information transport between and information processing within interacting systems with respect to dimensions that are known to influence the gestalt of interfaces: information transport can occur uni- or bidirectional and information processing can be stateful/stateless, synchronous/asynchronous and deterministic/nondeterministic. 

I advocate to classify interfaces into unilateral and multilateral. Unilaterality denotes the property of an interface to be composable with arbitrary other such interfaces within an appropriate hierarchical composition context. This is typical for operations as we can use them in almost arbitrary superordinated contexts. 

In contrast, multilateral interfaces can compose only with their complementary partner interfaces. This is typical for a set of roles, belonging together and composing to a certain protocol, being the base for horizontal component relations. 

\subsubsection{Unilateral interfaces}
For simple systems the interface depends on whether we compose them as pipes or implicitly recursive. However, in both cases we use the transformational behaviour of the systems to create a superordinated super system. 

From a purely phenomenological point of view, we could be tempted to say that within a pipe, because of the unidirectional information flow, a simple system in a pipe has two interfaces, one upstream and one downstream. All what is known at both system borders is that the sender system's output alphabet must be a subset of the receiver system's input alphabet and that the receiver system is providing an operation on this value set. Then, the interface would be just the data type\footnote{A data type in my sense is an alphabet and a known set of elementary operations on this set \cite{Reich2018_data}. A good example of such a data type is the IEEE float datatype, which specifies, among other things, how 8 bytes must be processed using the elementary operations +, -, *, / so that we can regard them as floating point numbers.} of sender and receiver system and the system function does not appear explicitly. 

But to determine the pipe as a supersystem, that results form this form of composition, we obviously need to know the system functions of the pipe-subsystems. Thus the imagination of a pipe as a ''data flow'' is only one aspect of the pipe composition, neglecting the transformational aspects. This is the reason why, in the phenomenological model, we have to view the processing of each of the pipe's subcomponents as being deterministic, indicating that we have to know all their subsystems' functions to compose it.

Another case is implicitly recursive system composition as presented in section \ref{ss_implizit_rekursive_systeme} where the output of a simple system is feed back into the system which provided the input beforehand. I would like to illustrate what happens here with a simple example, shown in Fig. \ref{fig_system_vs_object_II}, of an MIS ${\mathcal S}_1$ and the two simple systems ${\mathcal S}_2$ and ${\mathcal S}_3$, which together form the (super)system ${\mathcal S}$ with the supersystem function $f_{\mathcal S}(x) = 2x + 5$.

\begin{figure}[ht]
\begin{center}
\includegraphics[width=10cm]{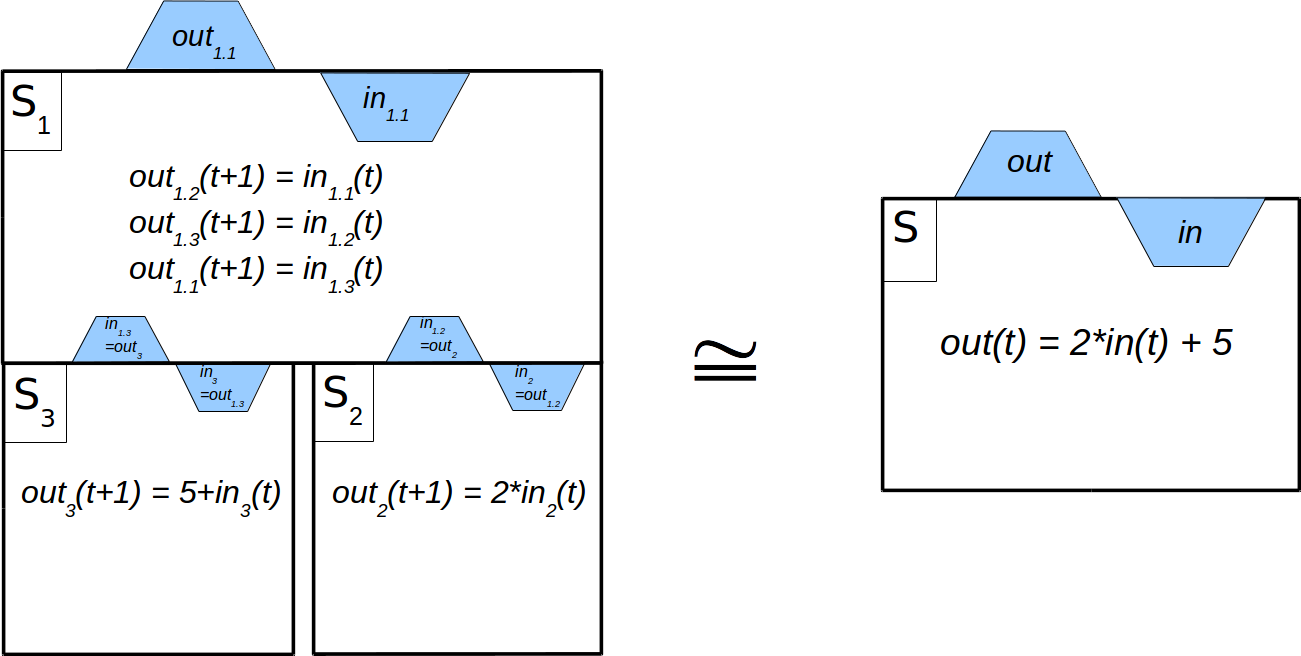}
\end{center}
\caption[]{\label{fig_system_vs_object_II} The MIS ${\mathcal S}_1$ and the two simple systems ${\mathcal S}_2$ and ${\mathcal S}_3$ compose to a supersystem ${\mathcal S}$ with $f_{\mathcal S}(x) = 2x + 5$. The input states of the supersystem at time t are mapped by the $f_{\mathcal S}$ to the output state at time $t_{\mathcal S}'=t_{\mathcal S}+1$.}
\end{figure}

Interestingly, two different sorts of hierarchy are created by this interaction. First, the deterministic interaction between the subsystems creates a hierarchy between ${\mathcal S}_1$ and the two ''dependent'' systems ${\mathcal S}_2$ and ${\mathcal S}_3$. The system ${\mathcal S}_1$ provides the input to systems ${\mathcal S}_2$ and ${\mathcal S}_3$ and thus determines their transitions entirely but not vice versa. 

But the interaction also creates the supersystem ${\mathcal S}$ with the system function $f_{\mathcal S}(x) = 2x + 5$ by composition, which makes ${\mathcal S}_1$, ${\mathcal S}_2$, and ${\mathcal S}_3$ to its subsystems. Please note that there is no interaction between the supersystem and its subsystem whatsoever, instead their relation is a ''is-part-of''-relation. And we attribute the operation with the mapping of the function $f_{\mathcal S}(x) = 2x + 5$ to the supersystem ${\mathcal S}$, and not to system  ${\mathcal S}_1$. These two different hierarchies between the same systems are shown in Fig. \ref{fig_system_vs_object_II_hierarchy}

\begin{figure}[ht]
\begin{center}
\includegraphics[width=10cm]{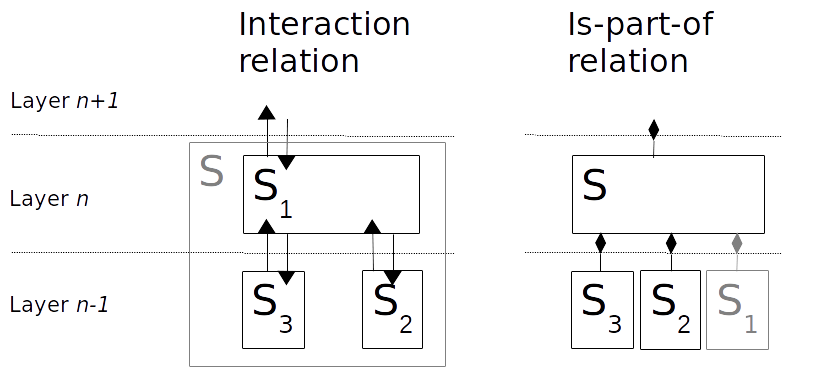}
\end{center}
\caption[]{\label{fig_system_vs_object_II_hierarchy} Due to their interaction, the systems of Fig. \ref{fig_system_vs_object_II} can be ordered in two different ways. On the left side they are ordered according to their asymmetric interaction relation. The arrows represent the information flow. The supersystem ${\mathcal S}$ is only shown in grey to show that it actually comprises all layers of its subsystems in this diagram. On the right side they are arranged according to their ''is-part of'' relation. Now it is the supersystem ${\mathcal S}$ that explicitly appears in a superordinate layer. With the ''is-part-of''-relation, there is no flow of information between the layers.}
\end{figure}

In the programming language C, this composition would look like:

\begin{verbatim}
int s2(int n) {return(2*n);}
int s3(int n) {return(n+5);}
int s(int n) {return(s3(s2(n)));}
\end{verbatim}

In this description, the system ${\mathcal S}_1$ is not mentioned explicitly which is why I have greyed it out in the right part of Fig. \ref{fig_system_vs_object_II_hierarchy}.

Objects in the object oriented sense are obviously systems intended for a hierarchical compositional behaviour, such that they always present the supersystem operation as their interface. The power of this concept is reflected in its dissemination as hierarchically composing component models are now ubiquitous. Their expressiveness lies in the fact that the interfaces of the component to be addressed always represent ''the whole system''. For many programming languages there are now package managers, each of which manages more than 100,000 packages, including Maven Central (Java), NuGet(.NET), Packagist (PHP), PyPI(Python), RubyGems(Ruby), etc.\cite{Cox2019}. 

\paragraph{Exceptions}
But then, what are exceptions, as we can declare them for example in the interface of a JAVA operation? Exceptions indicate in fact an undesired nondeterministic behaviour. For example, our operation to write something to our disk might work well, but only until the disk is full. Then this operation changes its behaviour, which means that its I/O-mapping is not uniquely determined by its input, but - undesirably - also by its internal state. 

We can decompose any nondeterministic transition relation into a deterministic part and a remainder. The deterministic part can then represent the desired function and the nondeterministic part represents the exceptions. With the try-catch-formalism of modern programming languages, exceptions obviously change the ''control flow'' of the operation, that is, its structure. The exception-mechanism allows us to foresee this situation and to define a substitute-function for the exceptional case.

\paragraph{Events}
Unilateral interfaces can be supplemented by events in the sense that changes of actually internal state values become externally observable. This makes sense if objects are used by several users and those that are not currently in charge of control want to be informed about important state changes, especially in connection with the state pattern (see section \ref{ss_objects}. For example it is a good idea to let my wife know if the car we use together is still roadworthy after I have driven it.

\subsubsection{Multilateral interfaces}
I called the interaction of interactive systems ''cooperation'' and their part which appears in the protocol composition a ''role''. Thus, ''roles'' are the interfaces for cooperative interactions. In general each role of a protocol is stateful, asynchronous and nondeterministic. As a protocol consists of all its roles, all roles are in a certain sense complementary. I therefore propose to speak of a multilateral interface in contrast to the unilateral interfaces of the hierarchical case, where the interface is agnostic against eventual other interfaces in the composition it is involved.

Taking into account the equivalence class construction for protocol roles of section \ref{ss_documents_and_main_states} it should be possible to describe all real world protocols in a standardised 5-tuple form.

\subsubsection{Components and recursion}
As we have seen, there are compositions that result in recursion. This raises the question, how does the interface of a system intended for recursion may look like. 

The operation in general represents an interface for implicit one step recursion. We have also seen that an operation with the special signature to represent a simple system with three appropriate input states and one output state, as illustrated in Fig. \ref{fig_system_composition_recursive}, can be composed together with a special bookkeeping system to create a recursive supersystem.

So, recursion does not extend our family of interface classes. In other words, we cannot indicate by the interface that a system is intended specially for a general recursion composition. As every programmer knows, we can easily create recursive program structures by mutual operation calls, for example like the following two faculty operations:

\begin{verbatim}
int faculty1(int n) {
  if (n == 1) return 1;
  else return n*faculty2(n-1);
}

int faculty2(int n) {
  if (n == 1) return 1;
  else return n*faculty1(n-1);
}
\end{verbatim}

If we take our credo seriously that components are intended as building blocks that {\it easily} fit together, we must avoid recursion on a component level. There is no special interface class to express a recursive composition and recursion makes computation complex. Thus, in fact, components should mark a computational complexity border, where recursion happens inside and the relation towards the outside is established by role-interfaces. 

\subsubsection{Summary}

\begin{table}[ht]
\begin{tabular}{p{1.3cm}|p{1cm}|p{1cm}|p{0.7cm}|p{3.5cm}|p{1.8cm}}
System class & Infor\-mation flow & Deter\-minism & State\-ful & Compositional behaviour & Interface class\\
\hline \hline
Simple systems & $\rightarrow$ & yes & no & Supersystem composition to pipes, possibly parallel & In- and out\-put as data types.\\
\hline
Simple systems & $\leftrightarrow$ & yes & no & Supersystem composition with implicit recursion & Operation represents system\\
\hline
Simple systems & $\leftrightarrow$ & yes & yes & Supersystem composition with implicit recursion & Object represents system\\
\hline
Recursive systems & * & * & * & Show same composition behaviour as simple systems if recursion calculation is viewed as internal. & *\\
\hline
MIS & $\leftrightarrow$ & yes & no & Supersystem composition with implicit recursion & Operation represents supersystem\\ 
\hline
MIS & $\leftrightarrow$ & yes & yes & Supersystem composition with implicit recursion & Object represents supersystem\\ 
\hline
MIS/IS & $\leftrightarrow$ & no & yes & No supersystem composition & In- and output as protocols
\end{tabular}
\caption[]{\label{tab_interface_classes}In this table, the different interface classes are summarised, based on the information flow to and from the systems and their transformational behaviour being deterministic vs. nondeterministic and stateful vs. stateless.}
\end{table}

In Tab. \ref{tab_interface_classes}, I summarise the given interface classification. It would be interesting to examine component models whether they adhere to the given interface classification and if lack of adherence leads to more complex component models. 

In their overview of component models, Ivica Crnkovic et al. \cite{CrnkovicSentilles2011_Classification} distinguish between ''operation-based'' and ''port-based'' interface support, demonstrating that many currently important component models actually do not support protocol declarations. 

\subsection{Component models in the literature}
According to Gerard J. Holzmann \cite{Holzmann2018_Software_Components}, the term software component was coined at the 1968 NATO Conference on Software Engineering in Garmisch by Doug McIlroy \cite{McIlroy1968_Software_Components}.  Tassio Vale et al. gives an overview of current trends in component-based software engineering \cite{ValeEtAl2016_28years_CBSE}.

The tight relation between the component and the interface concept makes it well suited for an evaluation of component models. Hierarchically composing component models are meanwhile ubiquitous. Package managers now exist for many programming languages, each managing more than 100,000 packages, such as Maven Central (Java), NuGet(.NET), Packagist (PHP), PyPI(Python), RubyGems(Ruby), etc. \cite{Cox2019}. 

But what about support for horizontal interactions? In their overview of component models, Ivica Crnkovic et al. distinguish. \cite{CrnkovicSentilles2011_Classification} distinguish between ''operation-based'' and ''port-based'' interface support, showing that many currently important component models do not in fact support protocol declarations.
 
\subsubsection{Components as distributed objects}
Distributed object models were developed under the idea that the encapsulation of the internal state by a beforehand defined set of operations in the sense of an abstract data type '' hides'' this state against the external world, providing some ''autonomy'' and thereby achieving a '' loose'' coupling between IT systems \cite{ChinChanson1991}. 

Well-known component models, which pursue this approach, are the Common Object Request Broker Architecture (CORBA), the Distributed Component Object Model (DCOM) or also the Open Platform Communications Unified Architecture (OPC-UA).

In fact, however, according to our definition only the compositional structure can be hidden behind an object-oriented interface, but not the logic of the respective mapping. Thus, from a logical perspective, remote objects become just a part of the ''one IT-system'' as local objects do and of ''loose coupling'' cannot be spoken. We simply cannot reach out of a system by a call of an operation.

In addition, the problem now arises that communication to remote objects is orders of magnitude more unreliable than communication to local objects, so that in particular interrupted state-changing operations too often lead to inconsistencies. State changes via operation semantics over unreliable communication is, as the problem of the Byzantine generals \cite{LamportShostakPease1982} vividly illustrates, generally not a good idea in more complex information processing contexts. 

Also, there is the problem of the social distance of the teams developing the various, possibly complex subcomponents, which can be well illustrated by the example of exception handling. The engineer of the superordinated component must worry about all exceptions of a subordinated component. That means that the engineer must know the regular operating condition of the subordinated component in detail for the design of its own component in anticipation, since she must align the composition thereupon. This is often a challenge, especially for strongly state-dependent remote components.

While for intentionally hierarchical relations between components these restrictions may still make sense in a compromising way due to other considerations, in the case of distributed objects the engineer faces an unsolvable dilemma in her effort to express horizontal relations: If she takes the object-oriented paradigm seriously, she must let objects call each other reciprocally --- which inevitably results in very complex, because recursive relations. Alternatively the engineers could agree on artificial, common objects, which would only have a de facto transport semantics, or, directly use the offered technology only for the definition of this transport semantics and do thus as far as possible without the support of the component model.   

\subsubsection{Service oriented architecture (SOA)}
The idea of service oriented architecture (SOA) goes back to R.W. Schulte and Y. V. Natis of the Gartner Group. \cite{SchulteNatis1996}. OASIS defines an SOA quite unspecifically as a ''paradigm for organising and utilising distributed capabilities that may be under the control of different ownership domains.'' \cite{OASIS2006_SOA_RM_1_DE}. A ''service'' is defined as ''The performance of work (a function) by one for another.'' and as a ''mechanism by which needs and capabilities are brought together''. A SOA is currently being propagated, for example, for Industry 4.0 \cite{DIN_SPEC_91345:2016-04} or in NATO \cite{NATO_ADatP-34_2020} (Vol. 2).

In fact, none of the service definitions address the transformation behaviour of a ''service'', such as whether it has to represent a function, i.e., should exhibit deterministic behaviour, or not. The WSDL 1.1 specification \cite{WSDL11_DE} defined four ''transmission primitives'' called ''operations''. WSDL 2.0 \cite{WSDL20_DE} in section 2.2.1 speaks of an ''interface component'' as a ''sequences of messages that a service sends and/or receives'' and of an ''operation'' as an ''interaction with the service consisting of a set of (ordinary and fault) messages exchanged between the service and the other parties involved in the interaction''. The BPMN version 2.0.2 \cite{BPMN202_DE}, which formally refers to the WSDL 2.0 specification, also states in section 8.5.3: ''An operation defines messages that are consumed and, optionally, produced when the operation is called.''

Hence, according to our composition concept, SOA interfaces are not well-defined. In my opinion, this semantic vagueness is the main reason why the concepts in the SOA environment have become quite complex overall, have been subject to frequent changes, and have not brought the hoped-for breakthrough in communication between enterprise applications despite the market power of the companies involved (see also already \cite{Reich2015_kvsi}) . 

The naming as ''service'' is in fact problematic because in the field of economics a ''service'' does not represent itself as a simple function, as the SOA with its WSDL interface syntax suggests. To pick up the example from \cite{Reich2017_InteractionSemantics}: To paint a wall in a newly built house, one has to solicit bids, accept a bid, arrange and, if necessary, rearrange appointments, check the result and, if accepted, pay the bill, and finally keep the documents for the tax return --- a relationship with the craftsmen at eye level, full of state, asynchrony, and nondeterminism. 

It is precisely for this reason that economics describes these interactions as games. In economics today, no one would be taken seriously who would come up with the idea of claiming that there are only two-person games, namely those consisting of service provider and service consumer, and that the service provider would, on top of that, always behave deterministically in the sense of a callable operation --- and all that because only the service consumer can pick up the phone. If one considers the economic concept of a game to be a valid model of economic interaction, then the informatic interface structures of these interactions must be very similar to these games, indeed they would have to be derivable from them --- which they are with the protocols 

\subsubsection{Representational State Transfer (REST)}
Representational State Transfer (REST) \cite{Fielding2000} can be seen as an attempt to apply the principles of stateless communication along with semantic agnosticism - both principles of the highly successful Hypertext Transfer Protocol (HTTP) - to network application interactions. Currently, it is often positioned as a simpler variant of SOA.

A REST call is said to satisfy the principles of addressability, that each resource must have a unique URI, and statelessness, that each REST message should contain all the information necessary for the processing it initiates.
Sometimes idempotence (e.g., \cite{Pautasso2014}) is also mentioned, that a REST call should always have the same effect regardless of its timing.

In some ways, I think REST relies on a misunderstanding of the role of state in distributed applications. For example, if I wanted to give a bank all the information it needs for further processing when I make a transfer, it wouldn't know my account balance. An interesting thought. Strictly speaking, I as a customer would have to quickly explain to them their entire business and their interaction with the other banks, etc., that is, the entire banking business --- nobody really wants that. If you transfer the ''principles'', which are very successful in information transfer, to information processing, you get information transfer.

With this understanding, REST is a methodological chimera. On the one hand, REST merely specifies a generic mailbox semantics that entrusts to the sender some of the functionality of the life cycle of the information sent as typed data. In other words, one can view all REST calls to a web address as a transport operation parameterised by the resource URI and data type. On the other hand, the principle of addressability requires that all resources be published, thus offering a lot of private information publicly.   

The actual transformation behaviour is explicitly not part of the semantics of a REST call, nor is any relationship between different REST calls. Accordingly, REST calls do not represent interfaces in the sense of all the information needed to compose the components with respect to their information processing.

Since the semantics of a REST call does not refer to the transformation behaviour of the receiving systems at all, REST --- like any transport function --- can be used very flexibly, which explains part of its success in the wake of Web services in the sense of SOA. 

However, in my view, this imprecision quickly becomes the downfall of engineers of IT systems when they try to build them out of comparatively self-contained, parallel, stateful so-called ''microservices'' \cite{LewisFowler2014_Microservices} without clearly describing their interfaces. Because then, without being able to know the exact protocols, the many possible, hard-to-find errors of concurrent information processing of semi-autonomous processes will inevitably set in --- or they actually develop (remote) objects with a functional interface, which they then have to orchestrate centrally, but which would then not be ''microservices'' in the sense described --- or one engineer does this and the other does that.

\subsubsection{Client server model}
The client server model is usually not understood as a component model. However, it is an interaction model and as such client and server act as two components. Thus the question arises, considering its practical relevance, what their relationship constitutes in the sense of a component model.

At the beginning of the 1990s the client server model was understood as request/reply scheme \cite{Andrews1991_Paradigms,Tanenbaum1991_MOS_1stEd}: The client sends the request and the server sends the response. With the emergence of SOA, it has been more strongly conceived as the division of a system function into services that are provided by different servers and can be used by a client \cite{Sommerville2009_SE,AlonsoKunoMachiraju2004}. 

The client-server model is also relevant in the context of database-based applications. There, engineers often talk of a 2-tier, 3-tier, or multi-tier architecture in terms of layering, where the database forms tier 1 and user interaction is located either in the second tier or, in conjunction with an intermediate application server layer, in the third tier.

In terms of our model of systems and their interactions, client and server first represent systems whose interaction can fall into any of the different interaction classes. The only significant difference between client and server is that between caller and called party. This also corresponds to the specification of the TCP/IP protocol (RFC 793, 7323), where the server waits at a TCP/IP port for calls from a client.

This criterion can indeed be used to declare an order. The only question is how meaningful it is. True, the division into callers vs. called does fit the semantic direction of a remote radio call. But it would be a complete misunderstanding, in my view, if we were to attribute the success and prevalence of the client-server model to this fit. 

Let's take a look at the so-called 3-tier architecture of modern enterprise applications, consisting of database, application server and user interface (UI) component. Its  scalability has been one of the technological reasons for SAP's business success: after the introduction of the R/3 system, which featured such an architecture, SAP's per capita revenue rose from about 150,000€ in 1992 to 250,000€ in 2000, while the number of employees increased from about 3,000 to 20,000 in the same period. 

This development can only be understood, in my view, by recognising that with a 3-tier architecture, there are actually 3 comparatively independent, stateful applications, each representing its domain of expertise, coordinating multiple non-deterministic, horizontal interactions. The database coordinates physics with the application server's requirements; the application server coordinates interactions to the database and to the UI component by means of business process logic; and the latter coordinates interactions to the application server's business logic and to the unpredictable user. That the interaction between the application server and the UI is not a semantic one-way street is experienced by every engineer who is faced with the problem of displaying relevant changes in the database, which originate from other users, on the current UI of a third party.

Properly understood, then, a 3-tier architecture is not a layered structure in our sense, but it is similar to a division of labour comparable to a medical team of a radiologist, an internist, and a surgeon. The latter has evolved  spontaneously over time to better manage the complexity of the respective domains of expertise. Therefore, a supposed ''simplification'' in the sense of a reduction of a so-called ''3-tier'' to a ''2-tier'' or even ''1-tier'' architecture will result in a complexification --- no one would even think of returning to the state of the fusion of internist and surgeon or radiologist in the sense of a ''simplification''. And just as doctors can cooperate better with each other if they better understand the subfield of their colleagues, the three components of database, application server and UI can also cooperate better, for example in the sense of more efficiently, if they replicate certain partial functionalities of the other components, for example by caching. I.e., loose coupling via protocols allows --- as in real life --- a much more efficient completion of the respective own core task.

\subsection{Interaction oriented architecture}
According to \cite{OASIS2006_SOA_RM_1}, a reference model is 

\begin{quote} an abstract framework for understanding significant relationships among the entities of some environment. It enables the development of specific reference or concrete architectures using consistent standards or specifications supporting that environment. A reference model consists of a minimal set of unifying concepts, axioms and relationships within a particular problem domain, and is independent of specific standards, technologies, implementations, or other concrete details.
\end{quote}

With the functional and the protocol composition, two different compositions exist, which we can express two-dimensionally in a graphical application model, as Fig. \ref{fig_layered_architecture} shows. The layers in the vertical direction represent the order by the ''is-part-of'' relation as created by the functional composition. The horizontal direction relates components of the same layer through protocols. Importantly, in this representation, information is {\it not} exchanged in the vertical direction. The system boundaries are determined by the interactions.

\begin{figure}[htbp] 
\begin{center}
  \includegraphics[width=12cm]{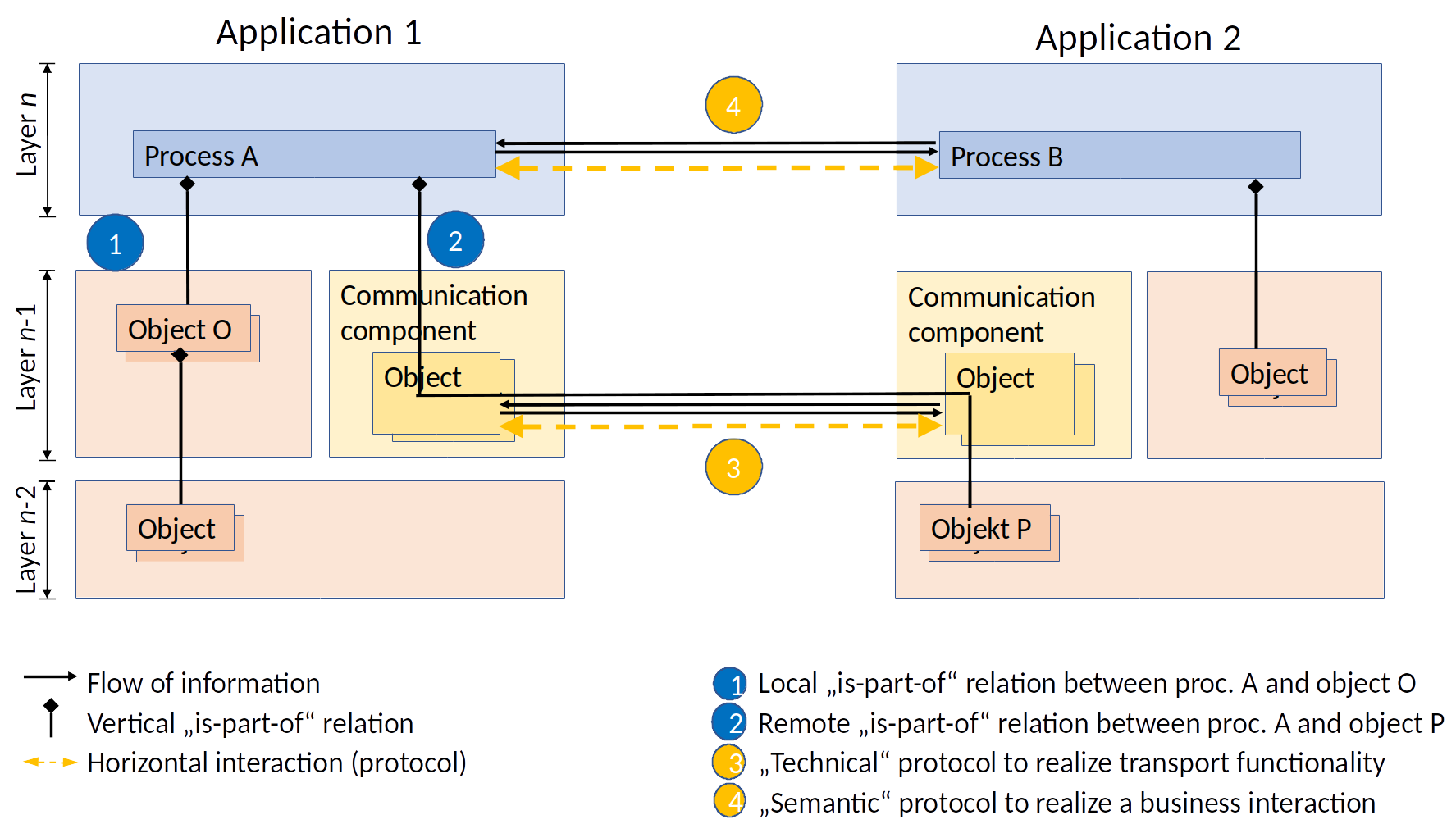} 
\end{center}
\caption[]{\label{fig_layered_architecture} A layered IT system architecture. The layers in the vertical direction are created through the ''is-part-of'' relation by functional composition, and the horizontal direction relates components interacting through protocols.}
\end{figure}

However, as our investigations has revealed, when building interactive IT systems, we are exposed to the tension of describing systems via functions on the one hand, but on the other hand only being able to integrate these systems into the desired interaction networks via non-deterministic, stateful interactions. If one ignores this tension and actually describes interactive systems with an explicitly formulated system function, then one unfortunately loses any guarantee that small changes in individual interactions will also result in only small changes in the structure of the application.

This leads to the obvious requirement to find a balance by an ''{\it interaction-oriented IT system architecture}'', leveraging the distinction between the inner and outer coupling of roles. We thereby reach a reference architecture of interactive systems consisting of the three elements of 
\begin{enumerate}
\item roles, 
\item coordination rules, and 
\item decisions. 
\end{enumerate}

One important aspect of this reference model is that it allows for an adequate description of human-machine interaction, as we can describe humans in exactly this way without risking to ignore their dignity.

\subsubsection{The semantic top layer in an interaction oriented IT system architecture}
First, it is clear that in a finite-size IT system with layered structure there must be a top layer, which I call ''{\it semantic top layer}''. In order not to be part of a corresponding supersystem, the system it contains must be an interactive system coordinating multiple nondeterministic, stateful, asynchronous interactions.

The top layer processes can delegate all their reusable functionality to dependent objects. This applies in particular to generally recursive functions. But please not, that there is no ''interaction'' between a top layer process and its dependent objects, but these objects are essential parts of these processes.

As consequence, the processes of the top layer are the place where the software engineer has to accumulate the non-reusable aspects of the system logic. Conversely, an unclear layer structure can significantly impair reuse, since then, software engineers can only poorly separate the reusable from the non-reusable. 

From our considerations, it immediately follows that protocols are an essential tool of ''programming in the large'' \cite{SinghChopraDesaiMallya2004_protocols}, since their roles demarcate complex, stateful IT systems from one another. 

As roles take part in two different compositions, their design goals do conflict. For their external composition into protocols, the main thing is that all participants agree on a common understanding. It is important to realise that the things about which the IT systems talk in their protocols often do not have the same meaning for all participants, but may have a complementary meaning. For example, the price of a traded good is for the buyer the amount of money he has to pay and for the seller the amount of money she receives. 

For their internal composition, it is important that the role provides all the information at the right time such that the desired coordination with all the other roles of the system can take place. As we have seen, it's the nondeterminism of the roles which plays a major role for enabling a flexible inner coordination.

\subsection{Reference architectures in the literature}
There exist many so called ''reference architectures'', most of them provide some model of layering. It is quite astonishing that, at least based on the ideas presented in this work, many of them actually lack a consistent criterion for their claimed layers. As a result, they do not serve their purpose properly because of the lengthy discussions it may take to work out their fallacies.

\subsubsection{The Open Systems Interconnection (OSI) model}
One of the most influential reference models is certainly the Open Systems Interconnection (OSI) model \cite{ISO_OSI_1994} which every informatics student learns in her first semester. It established the idea of a multi-layer software architecture. However, the assumption of the OSI model, 
\begin{quote}''OSI is concerned with the exchange of information between open systems (and not the internal functioning of each individual real open system).'' 
\end{quote}
is inconsistent. One cannot refrain from saying something about the structure of information processing while making assertions about its internal structure, such as layering. Also, the OSI model was not precise enough about the nature of the hierarchy. For example, the OSI assumption that information processing between components by means of protocols always takes place in the same layer proved to be false in the case of remote function calls. 

Thus, the system and interaction model of this article provides a formal justification for the intuition of the OSI model to consider software applications as layered. However, it also explains at the same time why the OSI model has found its way into reality only up to its 4th layer, since the management of a ''session'' state cannot be assigned to a dedicated layer in the general case.  Only in the case of vertical relation can the interaction-related state be hidden in a state of an intermediate ''session'' layer. In the case of horizontal interaction, the interaction-related state actually belongs to the components of the same semantic layer that interact with each other. 

Also, the kind of relation between the OSI-layers are not that of an interaction, but that of an ''is-part-of''-relation. 

The OSI model was certainly very influential, but nevertheless the alignment with the reality of the Internet protocol stack, with which it could match just up to the 4th layer, took a lot of time and its layers 5-7 could never establish themselves in their generality as intended as well. Now we can say why.

\subsubsection{The Level of Conceptual Interoperability Model (LCIM)}
Another example of a more frequently referenced reference model (e.g. lately in the IIC Connectivity Framework \cite{IIC2018_Connectivity_Framwork}) is the ''{\bf Level of Conceptual Interoperability Model (LCIM)}'' \cite{TolkTurnitsaDialloWinters2006}, which consists of the 7 alleged layers: no [interoperability], technical, syntactic, semantic [not defined in the present sense], pragmatic, dynamic, and conceptual interoperability. 

Obviously, it is not interaction that constitutes this hierarchy, but what else? Even for the technical transport of information, e.g. by the Internet protocol, a certain structure (=syntax) of the transported information is necessary. It is unclear how to separate semantic from pragmatic aspects. For example, how can the meaning of a bank transfer be described without referring to an action that a bank should perform? In my view, ''syntactic interoperability'' is most likely to mean mutual understanding at the level of data types, that is, that an incoming document can be mapped to internally used typed data structures. In my understanding, this is already ''semantic'' and comes into play in every protocol-based, horizontal interaction and is accordingly not limited to one level of interaction.   

\subsubsection{The Reference Architecture Model Industry 4.0 (RAMI4.0)}
A third example is the Reference Architecture Model Industry 4.0 (RAMI4.0) \cite{DIN_SPEC_91345:2016-04}. At the centre of this model is the ''asset'' as an entity that is used in the context of industrial production and has some value to an organisation. RAMI4.0 aims to structure the standardisation efforts of Plattform Industrie 4.0, the network led by the German government to further develop and implement its high-tech strategy in the field of industrial manufacturing, along the following three ''axes'':
\begin{enumerate}
\item Architecture: representation of the information relevant to the role of the asset in terms of the (ascending) ''layers'' Asset, Integration, Communication, Information, Functional and Business.
\item Life cycle: Representation of the life cycle of an asset separated by development of the ''type'' and production of the ''instances'' following IEC 62890.
\item Hierarchy: Assignment of functional models of product, field device, control unit, station, technical plant, company, networked world etc. following the standards DIN EN 62264-1 and DIN EN 61512-1.
\end{enumerate}

As noted in \cite{ReichZentarraLanger2020}, on the one hand, no clear ordering criterion can be identified for the architecture axis and, on the other hand, the hierarchy axis is based on an ''is-part-of'' relationship that is not conceptually independent of the architecture axis. Also, the categories of the hierarchy axis are not mutually exclusive --- a company could also be a product, for example. Finally, the different compositional mechanisms for interacting systems are not considered: A technical system that is part of a company is so in a different way than, say, a taillight is part of a car.

%
\subsection{The relevance of clear system boundaries for organising organisations} \label{s_relevance_of_clear_system_borders}
%

Well-defined boundaries of the systems being constructed are directly relevant to organise IT systems development. Melvin E. Conway \cite{Conway1968_Committees} had stated that the interaction structures of systems are co-determined by the interaction structures of the organisations constructing them, often referred to as ''Conway's Law''. There is indeed empirical evidence for this \cite{MacCormackBaldwinRusnak2012}. However, the model presented suggests the normativity of the inverse of this relationship, i.e., that the interaction structures of the systems being constructed should shape the interaction structures of the organisations constructing them \cite{HerbslebGrinter1999}. 

Thus, in an organisation that develops its IT-systems without well-defined layering, and thus without a semantic top layer, subdepartments tasked with developing reuse-enabled library functions will have unnecessary difficulties achieving their goals. Likewise companies, which rely on frameworks for the integration of complex components, not providing means for explicit protocol definitions, will falsely over- or underestimate possibly the integration expenditures for their components strongly by assigning the integration efforts arbitrarily -- possibly guided by internal power considerations -- instead of systematically to the respective components. Usually, this will also increase the additional communication effort between the responsible departments accordingly. 

In this respect, the definition of vertical interfaces should be a means of entrusting departments with differently abstract topics, and the definition of protocol-based interfaces should be a good instrument for structuring larger software-developing organisations, with which the explicit delimitation of expert domains can be achieved. A simple criterion here is the question of exceptions: do they exist and who takes care of them?


\bibliographystyle{alpha}
\bibliography{soziologie,philosophy,informatics}

\end{document}